\newtheorem{lemma}{Lemma}
\newtheorem{definition}{Definition}
\newtheorem{theorem}{Theorem}
\newtheorem{assumption}{Assumption}
\newtheorem{remark}{Remark}
\newtheorem{corollary}{Corollary}
\newcommand{\rvec}[1]{\mathbbm{#1}}  	% random vectors 
\newcommand{\E}{\mathsf{E}}				% expectation
\newcommand{\V}{\mathsf{Var}}			% variance
\newcommand{\stdset}[1]{\mathbbmss{#1}}	% standard sets
\newcommand{\set}[1]{\mathcal{#1}}		% sets
\renewcommand{\vec}[1]{\mathbf{#1}}		% deterministic vector or matrix
\newcommand{\CN}{\mathcal{CN}}			% complex Gaussian
\newcommand{\herm}{\mathsf{H}}			% Hermitian transpose
\newcommand{\T}{\mathsf{T}}				% transpose
\author{Lorenzo Miretti, Emil Bj\"ornson, and David Gesbert\thanks{A preliminary version of this work has been presented at IEEE SPAWC 2021  \cite{miretti2021precoding}.}\thanks{This work received partial support from the Huawei funded Chair on Future Wireless Networks at EURECOM.}\thanks{E.~Bj\"ornson was supported by the Grant 2019-05068 from the Swedish Research Council.}\thanks{This work has been submitted to the IEEE for possible publication.  Copyright may be transferred without notice, after which this version may no longer be accessible.}}
\title{Team MMSE Precoding with Applications to Cell-free Massive MIMO}
\begin{document}
\maketitle
\begin{abstract}
This article studies a novel distributed precoding design, coined \textit{team minimum mean-square error} (TMMSE) precoding, which rigorously generalizes classical centralized MMSE precoding to distributed operations based on transmitter-specific channel state information (CSIT). Building on the so-called \textit{theory of teams}, we derive a set of necessary and sufficient conditions for optimal TMMSE precoding, in the form of an infinite dimensional linear system of equations. These optimality conditions are further specialized to cell-free massive MIMO networks, and explicitly solved for two important examples, i.e., the classical case of local CSIT and the case of  unidirectional CSIT sharing along a serial fronthaul. The latter case is relevant, e.g., for the recently proposed \textit{radio stripe} concept and the related advances on sequential processing exploiting serial connections. In both cases, our optimal design outperforms the heuristic methods that are known from the previous literature. Duality arguments and numerical simulations validate the effectiveness of the proposed team theoretical approach in terms of ergodic achievable rates under a sum-power constraint. 
\end{abstract}
 
\section{Introduction}
Inter-cell interference is a major limiting factor of wireless communication systems capitalizing on aggressive spectrum reuse and network densification to increase capacity. To mitigate this effect, future generation systems are expected to implement advanced cooperative communication techniques, in particular by letting geographically distributed base stations jointly serve their users. However, the practical deployment of cooperative wireless networks is currently prevented by the severe scalability issue arising from network-wide processing~\cite{gesbert2010multicell}. Specifically, the excessive amount of data and channel state information (CSI) to be timely shared for implementing fully cooperative regimes such as in the original \textit{network MIMO} or \textit{cloud radio access network} (C-RAN) concepts \cite{shamai2001enhancing,venkatesan2007network,checko2015cran} often becomes the main bottleneck when practical fronthaul capacity constraints are introduced. Studying more realistic cooperation regimes entailing limited data and CSI sharing is hence of fundamental importance for making network cooperation an attractive technology for next generation systems \cite{gesbert2010multicell,zhang2020prospective}. 

\subsection{Cooperative transmission with distributed CSIT}
In this work we explore a downlink (DL) cooperation regime with full data sharing and general \textit{distributed} CSI at the TXs (CSIT) \cite{dekerret2012degrees,gesbert2018team,bazco2020degrees}, that is, we let each TX operate on the basis of possibly different estimates of the \textit{global} channel state obtained through some arbitrary CSIT acquisition and sharing mechanism. This assumption is relevant, e.g.,  for all service situations where, compared to data sharing, CSIT sharing needs to be performed within much tighter time constraints, and hence may dominate the fronthaul overhead. For instance, it is suitable in case of rapidly varying channels due to user mobility, where full CSIT sharing may result in outdated information or occupy an excessive portion of the coherence time, or when delay-tolerant data is proactively made available at the TXs using caching techniques (see \cite{bazco2020degrees} and reference therein for a detailed discussion). As an extreme example, a cooperation regime with full data sharing and %only local channel measurements (here referred to as \textit{local CSIT}) obtained by leveraging channel reciprocity of time division duplexing (TDD) systems
no CSIT sharing (a configuration here referred to as \textit{local} CSIT) is perhaps the leading motivation behind the early development of the now popular \textit{cell-free massive MIMO} paradigm \cite{ngo2017cellfree}. This paradigm combines the benefits of ultra-dense networks with simple yet effective TX cooperation schemes, and emerged as a promising evolution of the network MIMO and C-RAN concepts. The distributed CSIT assumption also covers extensions of \cite{ngo2017cellfree} to more complex setups ranging from partial to full CSIT sharing (see, e.g., \cite{nayebi2017precoding} \cite{du2021cellfree}). Clearly, these cooperation regimes are still far from being scalable, since they all assume network-wide data sharing. Splitting the network into clusters of cooperating TXs \cite{huang2009increasing,zhang2009networked,du2021cellfree}, possibly dynamically and with a user-centric approach \cite{emil2013optimal,buzzi2020usercentric,interdonato2019ubiquitous,emil2020scalable}, and applying similar transmission techniques assuming full data sharing  within each cluster, emerged as a viable solution for implementing scalable cooperation regimes in practical systems. However, due to space limitations and to better focus on limited CSIT sharing, in this work, we do not consider network clusterization. In particular, we do not cover complementary service situations where CSIT can be more easily shared than data, hence entering the realm of interference \textit{coordination} or \textit{alignment} \cite{gesbert2010multicell,cadambe2008interference}.
Nevertheless, if combined with network clustering techniques, the results presented in this study can be seen as a first step towards a more general theory jointly covering limited data and CSIT sharing.

%On one extreme, the breakthroughs on the MIMO interference channel building on interference alignment techniques drove the research on cooperation regimes assuming no data sharing and full CSIT sharing. On the other extreme, a cooperation regime with full data sharing and no CSIT sharing %simple antenna head operations based on local CSIT only
%is perhaps the leading motivation behind the early development of the now popular cell-free massive MIMO paradigm \cite{ngo2017cellfree}. Choosing the most pertinent among these regimes relates to identifying the dominating component in the data and CSIT sharing burden; for instance, the second regime may be more suitable in case of rapidly varying channels due to mobility, or when data is proactively made available at the TXs using caching techniques. Clearly, since they both assume some form of network-wide information sharing, these regimes are still far from being scalable and, together with the fully cooperative case, should be rather interpreted as theoretical building blocks. Splitting the network into clusters of cooperating TXs \cite{huang2009increasing,zhang2009networked,du2021cellfree}, possibly dynamically and with a user-centric approach \cite{emil2013optimal,buzzi2017cellfree,interdonato2019ubiquitous,emil2020scalable}, emerged as a viable solution for trading off between various cooperation regimes in practical systems. 

\subsection{Summary of contributions and related works}
Although the importance of cooperative transmission schemes based on limited CSIT sharing has been acknowledged in the literature, a satisfactory understanding of systems with distributed CSIT is still missing. Most of the available information theoretical results rely on asymptotic signal-to-noise ratio (SNR) tools\cite{dekerret2012degrees,bazco2020degrees}, or focus on simple settings with a single receiver (RX) \cite{miretti2019cooperative}. However, \cite{dekerret2012degrees,bazco2020degrees,miretti2019cooperative}  do not lead to practical schemes for complex settings such as cell-free massive MIMO networks. On the other hand, the available practical schemes are essentially based on heuristic adaptations
of known centralized precoding designs such as maximum-ratio
transmission (MRT), zero-forcing (ZF), or minimum mean-square
error (MMSE) precoding \cite{ngo2017cellfree,nayebi2017precoding,emil2020scalable}.
Hence, there is a need to develop a mathematical framework that allows the sound derivation of schemes that cater for distributed CSIT setups.

This work provides considerable progress in this direction. By using simplified yet practical point-to-point information theoretical tools, namely by using standard linearly precoded Gaussian codes, treating interference as noise, and the non-coherent ergodic rate bounds popularized by the massive MIMO literature \cite{marzetta2016fundamentals,massivemimobook}, we propose a novel distributed precoding design, coined \textit{team} MMSE (TMMSE) precoding, generalizing classical centralized MMSE precoding \cite{massivemimobook} to systems with distributed CSIT. Its optimality in terms of achievable ergodic rates under a sum-power constraint is formally established by revisiting the uplink-downlink (UL-DL) duality principle \cite{massivemimobook} in light of the distributed CSIT assumption. Our first main result\footnote{The preliminary version of this work \cite{miretti2021precoding} focuses on a simplified cell-free massive MIMO setup. This work extends \cite{miretti2021precoding} to more general networks, Gaussian fading, and channel estimation errors; provides complete theoretical derivations; improves the comparison with the previous literature.} is showing that the problem of optimal TMMSE precoding design can be solved by means of a useful set of optimality conditions in the form of an infinite dimensional linear system of equations, for which many standard solution tools exist. The key novelty lies in the introduction of previously unexplored elements from the \textit{theory of teams}, a mathematical framework for multi-agent coordinated decision making in presence of asymmetry of information. This framework was pioneered in theoretical economics by Marschak and Radner \cite{marschak1972economic,radner1962team}, and then further developed in the control theoretical literature (see the excellent survey in \cite{yukselbook}). Early applications of team theory to wireless communication, including the problem of distributed precoding design, are reported in \cite{gesbert2010multicell,gesbert2018team}. However, compared to previous attempts for distributed precoding design, this work is the first exploiting (and partially extending) known results for the class of \textit{quadratic teams} \cite{radner1962team,yukselbook}, which is one of the few cases where solid globally optimal solution approaches are available. 

In the second part of this work, the aforementioned optimality conditions are specialized to cell-free massive MIMO networks. To the best of the authors' knowledge, this is the first work connecting cell-free massive MIMO to the theory of teams. The first non-trivial application is the derivation of the optimal TMMSE precoders based on local CSIT only, improving upon previous local precoding strategies studied, e.g., in \cite{ngo2017cellfree,neumann2018bilinear,emil2020scalable}. We then consider a cell-free massive MIMO network with serial fronthaul, an efficient architecture also known as a \textit{radio stripe} \cite{interdonato2019ubiquitous,shaik2020mmse}. We derive optimal TMMSE precoders by assuming that CSIT is shared unidirectionally along the stripe. The proposed scheme can be efficiently implemented in a sequential fashion, an idea that has been explored already in \cite{interdonato2019ubiquitous,emil2020cellfree,shaik2020mmse} for UL processing, and in \cite{rodriguez2020decentralized} under a different cellular context. As a byproduct, we also obtain a novel distributed implementation of classical centralized MMSE precoding tailored to radio stripes. Finally, we present extensive numerical results comparing the effects of different CSIT sharing patterns in a radio stripe system and evaluating the suboptimality of the competing schemes. Interestingly, our numerical results suggest that unidirectional information sharing is a promising candidate for enlarging the domain of applications of radio stripes beyond the regimes supported by centralized or local precoding; for instance, it may allow effective interference management for a wider range of mobility patterns. Moreover, we show that the known local MMSE precoding scheme studied, e.g., in \cite{emil2020scalable,emil2020cellfree}, is optimal in a non line-of-sight (NLoS) scenario, while it may be significantly outperformed by the TMMSE solution for local CSIT in the presence of line-of-sight (LoS) components. 

\subsection{Outline and notation}
The present study is structured as follows: Section~\ref{sec:model} presents the system model and other necessary preliminaries. The main results on team MMSE precoding and their specialization to cell-free massive MIMO networks are given respectively in Section~\ref{sec:TMMSE} and Section~\ref{sec:applications}. The numerical results are given in Section~\ref{sec:simulations}. 
 
Hereafter, we use the following notation. We reserve italic letters (e.g., $a$) for scalars and functions, boldface letters (e.g., $\vec{a}$, $\vec{A}$) for vectors and matrices, and calligraphic letters (e.g., $\mathcal{A}$) for sets. Random quantities are distinguished from their realizations as follows: $\rvec{a}$, $\rvec{A}$ denote random vectors and matrices; $A$ denotes a random scalar, or a generic random variable taking values in some unspecified set $\set{A}$. We use $:=$ for definitions, and $\preceq$, $\succeq$ ($\prec$, $\succ$) for (strict) generalized inequalities w.r.t. the cone of nonnegative Hermitian matrices. The operators $(\cdot)^\T$, $(\cdot)^\herm$ denote respectively the transpose and Hermitian transpose of matrices and vectors, and $\Re(\cdot)$ is the real part. We denote the Euclidean norm by $\|\cdot\|$, the Frobenius norm by $\|\cdot\|_\mathrm{F}$, and the trace operator by $\mathrm{tr}(\cdot)$. The (conditional) expectation of $\rvec{A}$ (given $\rvec{B}$) is denoted by $\E[\rvec{A}]$ ($\E[\rvec{A}|\rvec{B}]$), and $\V[A]$ is the variance of~$A$. Given $n>2$ random matrices $\rvec{A}_1,\ldots,\rvec{A}_n$ with joint distribution $p(\vec{A}_1,\ldots,\vec{A}_n)$, we say that $\rvec{A}_1\to \rvec{A}_2 \to \ldots \to \rvec{A}_n$ forms a Markov chain if $p(\vec{A}_i|\vec{A}_{i-1},\ldots,\vec{A}_1) = p(\vec{A}_i|\vec{A}_{i-1})$ $\forall i \geq 2$. We use $\mathrm{diag}(\vec{A}_1,\ldots,\vec{A}_n)$ to denote a block-diagonal matrix with the matrices $\vec{A}_1,\ldots,\vec{A}_n$ on its diagonal, and $\mathrm{vec}(\vec{A})$ to denote a vector obtained by stacking column-wise the elements of $\vec{A}$. We denote by $\vec{e}_n$ the $n$-th column of the identity matrix $\vec{I}$, with dimension extrapolated from the context. We use $\prod_{i=l'}^{l}\vec{A}_i := \vec{A}_{l}\vec{A}_{l-1}\ldots\vec{A}_{l'}$ for integers $l \geq l' \geq 1$ to denote the \textit{left} product chain of $l-l'+1$ ordered matrices of compatible dimension, and we adopt the convention $\prod_{i=l'}^{l} \vec{A}_i = \vec{I}$ for $l<l'$. Finally, $h(A)$ ($h(A|B)$) denotes the (conditional) entropy, $I(A;B)$ is the mutual information, and all logarithms are expressed in base $2$ unless differently specified.

\section{System model and preliminaries}
\label{sec:model}
\subsection{Channel model}
Consider a network of $L$ TXs indexed by $\set{L}:=\{1,\ldots,L\}$, each of them equipped with $N$ antennas, and $K$ single-antenna RXs indexed by $\set{K}:=\{1,\ldots,K\}$. Let an arbitrary channel use be governed by the MIMO channel law
\begin{equation}
\rvec{y} = \sum_{l=1}^L \rvec{H}_l\rvec{x}_l + \rvec{n} 
\end{equation}
where the $k$-th element $Y_k$ of $\rvec{y} \in \stdset{C}^{K}$ is the received signal at RX $k$, $\rvec{H}_l \in \stdset{C}^{K\times N}$ is a sample of a stationary ergodic random process modelling the fading between TX $l$ and all RXs, $\rvec{x}_l \in \stdset{C}^N$ is the transmitted signal at TX~$l$, and $\rvec{n}\sim \CN(\vec{0},\vec{I})$ is a sample of a white noise process. This channel model is relevant, e.g, for narrowband or wideband OFDM systems \cite{tse2005fundamentals} where transmission spans several realizations of the fading process. For most parts of this work, we do not specify the distribution of $\rvec{H}:=\begin{bmatrix}
\rvec{H}_1,\ldots,\rvec{H}_L
\end{bmatrix}$. However, we reasonably assume the channel submatrices corresponding to different TX-RX pairs to be mutually independent, and finite fading power $\E[\|\rvec{H}\|_{\mathrm{F}}^2]<\infty$. Furthermore, we focus on $N<K$, that is, on the regime where cooperation is crucial for interference management \cite{gesbert2010multicell}.

\subsection{Distributed linear precoding}
Consider a \textit{distributed} CSIT configuration \cite{gesbert2018team}, i.e., where each TX has some potentially different side information $S_l$ about the \textit{global} channel matrix $\rvec{H}$. For instance, this could model frequency division duplex (FDD) systems where each $S_l$ is composed by different feedback signals from the RXs, or time-division duplex (TDD) systems where over-the-uplink \textit{local} estimates $\hat{\rvec{H}}_l$ of the \textit{local} channel $\rvec{H}_l$ are not perfectly shared across the network. Importantly, this is more general than the typical assumption in the cell-free massive MIMO literature covering distributed operations, which limits $S_l$ to $\hat{\rvec{H}}_l$ only \cite{ngo2017cellfree,emil2020scalable}. We assume $(\rvec{H},S_1,\ldots,S_L)$ to be a sample of an ergodic stationary process with first order joint distribution fixed by nature/design, and known by all TXs. 

We then let each TX $l$ form its transmit signal according to the following distributed linear precoding scheme:
\begin{equation}\label{eq:distributed_precoding}
\rvec{x}_l = \sum_{k=1}^K\rvec{t}_{l,k}U_k,\quad \rvec{t}_{l,k} = \rvec{t}_{l,k}(S_l),
\end{equation}
where $U_k \sim \CN(0,p_k)$ is the independently encoded message for RX $k$, shared by all TXs, and where $\rvec{t}_{l,k}\in \stdset{C}^N$ is a linear precoder applied at TX $l$ to message $U_k$ based only on the side information $S_l$. More formally, by letting $(\Omega,\Sigma,\mathsf{P})$ be the underlying probability space over which all random quantities are defined, we constrain $\rvec{t}_{l,k}$ within the vector space $\set{T}_l$ of square-integrable $\Sigma_l$-measurable functions $\Omega \to \stdset{C}^N$, where $\Sigma_l\subseteq \Sigma$ denotes the sub-$\sigma$-algebra generated by $S_l$ on $\Omega$, called the \textit{information subfield} of TX $l$ \cite{radner1962team,yukselbook}. This assumption\footnote{The measure theoretical formulation presented above is necessary for establishing Theorem \ref{th:quadratic_teams_new}. However, the rest of this study does not require any particular measure theoretical background.} rigorously describes the functional dependency of $\rvec{t}_{l,k}$ on the portion $S_l$ of the overall system randomness, and includes a reasonable finiteness constraint $\E[\|\rvec{t}_{l,k}\|^2]<\infty$ on precoders power. We finally denote the full precoding vector for message $U_k$ by $\rvec{t}_k^\T := \begin{bmatrix}
\rvec{t}_{1,k}^\T & \ldots & \rvec{t}_{L,k}^\T
\end{bmatrix}^\T$, and let $\rvec{t}_k \in \set{T}:=\prod_{l=1}^L\set{T}_l$.

\subsection{Performance metric}
We measure the network performance under the specified transmission scheme by using Shannon (ergodic) rates $R_k^{\mathrm{DL}} := I(U_k;Y_k)$, which are achievable without channel state information at the RX (CSIR) by treating interference as noise (TIN) and by neglecting  any memory across the realizations $(\rvec{H},S_1,\ldots,S_L)$ of the state and CSIT process \cite{biglieri1998fading,caire2018ergodic}. Because of the difficulties in evaluating the mutual information, we consider the following lower bound, known as the \textit{hardening} bound \cite{marzetta2016fundamentals,massivemimobook},
\begin{equation}\label{eq:hardening_bound}
I(U_k;Y_k) \geq  \log\left(1+ \dfrac{p_k|\E[\rvec{g}_k^\herm\rvec{t}_k]|^2}{p_k\V[\rvec{g}_k^\herm\rvec{t}_k] + \sum_{j\neq k}p_j\E[|\rvec{g}_k^\herm\rvec{t}_j|^2]+1}\right)=: R_k^{\mathrm{hard}},
\end{equation}
where $\begin{bmatrix}
\rvec{g}_1 & \ldots & \rvec{g}_K \end{bmatrix} :=\rvec{H}^\herm
$. An alternative classical performance metric would be given by the following upper bound
\begin{equation}\label{eq:R_up}
I(U_k;Y_k) \leq  \E\left[\log\left( 1+\dfrac{p_k|\rvec{g}_k^\herm\rvec{t}_k|^2}{\sum_{j\neq k}p_j|\rvec{g}_k^\herm\rvec{t}_j|^2+1}\right)\right]=:R_k^{\mathrm{ub}},
\end{equation}
which is in fact achievable with perfect local CSIR $\{\rvec{g}_k^\herm\rvec{t}_j\}_{j=1}^K$ by TIN and by taking into account channel memory \cite{biglieri1998fading,caire2018ergodic}. We consider $R_k^{\mathrm{hard}}$ instead of $R_k^{\mathrm{ub}}$ because of the less stringent CSIR requirements and, perhaps most importantly, for treatability reasons. Due to its name and historical use, it is sometimes believed that the ability of $R_k^{\mathrm{hard}}$ to produce good approximations of $R_k^{\mathrm{DL}}$ relies on the \textit{channel hardening} effect arising in massive MIMO systems~\cite{massivemimobook}. Although this is correct for some precoding design such as MRT \cite{caire2018ergodic}, we remark that \eqref{eq:hardening_bound} may perform well also in absence of channel hardening. For instance, $R_k^{\mathrm{hard}}$ and $R_k^{\mathrm{up}}$ coincide under a ZF scheme with perfect CSIT putting $\rvec{g}^\herm_k\rvec{t}_j = 0$ for $j\neq k$ and $\rvec{g}_k^\herm\rvec{t}_k = 1$, for any long-term power allocation policy $\{p_k\}_{k=1}^K$ and feasible antenna regime. 

We then let $\set{R}^{\mathrm{hard}}$ be the union of all rate tuples $(R_1,\ldots,R_K)\in \stdset{R}_+^K$ such that $R_k \leq R_k^{\mathrm{hard}}$ $\forall k \in \set{K}$ for some set of distributed precoders $\{\rvec{t}_k\}_{k=1}^K$ and power allocation policy $\{p_k\}_{k=1}^K$ satisfying $\sum_{l=1}^L\E[\|\rvec{x}_l\|^2]\leq P_{\mathrm{sum}}<\infty$. The set $\set{R}^{\mathrm{hard}}$ is an inner bound for the capacity region of the considered network with distributed CSIT and subject to a long-term sum power constraint $P_{\mathrm{sum}}$. Due to its importance in system design and resource allocation, we consider the notion of (weak) Pareto optimality on $\set{R}^{\mathrm{hard}}$ and we mostly focus on the (weak) Pareto boundary of $\set{R}^{\mathrm{hard}}$, denoted  by $\partial \set{R}^{\mathrm{hard}}$ \cite{emil2013optimal}. Note that, by the Cauchy–Schwarz inequality and the mild assumptions given in the previous sections, we have $|\mathbb{E}[\rvec{g}_k^\herm\rvec{t}_k]|^2< \infty$ $\forall \rvec{t}_k \in \set{T}$, $\forall k \in \set{K}$, hence $\partial \set{R}^{\mathrm{hard}}$ is finite.

The long-term sum power constraint is chosen because it allows for strong analytical results and simplifies system design. This constraint may be directly relevant for systems such as the radio stripes, treated in Section \ref{ssec:unidirectional}, where all the TXs share the same power supply \cite{interdonato2019ubiquitous}. However, note that many simple heuristic methods (such as power scaling factors) can be applied to adapt systems designed under a long-term sum power constraint to more restrictive cases such as per-TX power constraints. Further analyses on different power constraints are left for future work.

\section{Team MMSE precoding}\label{sec:TMMSE}
In this work, we study the following novel \textit{team} MMSE precoding design criterion: given a vector of nonnegative weights $\vec{w}:=[w_1,\ldots,w_K]^\T$ belonging to the simplex $ \set{W}:=\{\vec{w} \in \stdset{R}_+^K \; | \; \sum_{k=1}^Kw_k = K\}$,  we consider the functional optimization problem
\begin{equation}\label{eq:team_prob}
\underset{\rvec{t}_k \in \set{T}}{\text{minimize}} \; \mathrm{MSE}_k(\rvec{t}_k) := \E\left[ \left\|\vec{W}^{\frac{1}{2}}\rvec{H}\rvec{t}_k - \vec{e}_k\right\|^2 +\dfrac{\|\rvec{t}_k\|^2}{P} \right],
\end{equation}
where $\vec{W}:=\mathrm{diag}(w_1,\ldots,w_K)$,  $\vec{e}_k$ is the $k$-th column of $\vec{I}_K$, and $P:=P_{\mathrm{sum}}/K$. A solution to the above problem can be recognized as a distributed version of the classical centralized MMSE precoding design \cite{massivemimobook}. For $P\to \infty$, it can be interpreted as the `closest' distributed approximation of the ZF solution. By means of team theoretical arguments \cite{radner1962team,yukselbook}, this section provides rigorous yet practical guidelines for optimally solving Problem~\eqref{eq:team_prob}. Before providing the main results of this section, we also revisit the effectiveness of the MSE criterion in terms of network performance, which is well-known for centralized precoding.

\begin{remark}
Hereafter, with the exception of Section \ref{ssec:rates}, we consider w.l.o.g. $\vec{W}=\vec{I}$. The general case will readily follow by replacing $\rvec{H}_l$ with $\vec{W}^{\frac{1}{2}}\rvec{H}_l$ everywhere. 
\end{remark}

\subsection{Achievable rates via uplink-downlink duality}	
\label{ssec:rates}
This section discusses the formal connection between the objective of Problem~\eqref{eq:team_prob} and $\set{R}^{\mathrm{hard}}$ by revisiting UL-DL duality under a general distributed CSIT assumption. 
\begin{theorem}\label{th:duality_WMSE} Consider an arbitrary set of distributed precoders $\{\rvec{t}_k\}_{k=1}^K$ and weights $\vec{w}\in \set{W}$. Then, any rate tuple $(R_1,\ldots,R_K) \in \stdset{R}^K$ such that
\begin{equation}\label{eq:WMSE_lower_bound}
R_k \leq \log(\mathrm{MSE}_k(\rvec{t}_k))^{-1}  
\end{equation}
belongs to $\set{R}^{\mathrm{hard}}$. The power allocation policy $\{p_k\}_{k=1}^K$ achieving the above inner bound is given in Appendix~ \ref{proof:duality}, and satisfies $\sum_{l}\E[\|\rvec{x}_l\|^2]= P_{\mathrm{sum}}$. Furthermore, if $\rvec{t}_k$ solves Problem~\eqref{eq:team_prob} $\forall k\in\set{K}$, then $(R_1,\ldots,R_K)$ with $R_k =\log(\mathrm{MSE}_k(\rvec{t}_k))^{-1}$ is Pareto optimal, and every rate tuple in $\partial \set{R}^{\mathrm{hard}}$ is obtained for some $\vec{w}\in \set{W}$.
\end{theorem}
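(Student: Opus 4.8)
The plan is to recast the downlink setup as a \emph{virtual uplink} in which the objective of \eqref{eq:team_prob} is a genuine mean-square error, and then to invoke uplink--downlink duality for the hardening bound \cite{massivemimobook}. Introduce the virtual uplink $\rvec{r} = \sum_{j=1}^{K}\sqrt{w_j}\,\rvec{g}_j V_j + \rvec{z}$, with i.i.d.\ $V_j\sim\CN(0,1)$ independent of $\rvec{H}$ and $\rvec{z}\sim\CN(\vec{0},\vec{I}/P)$. Expanding the square in \eqref{eq:team_prob} and using the zero-mean independent structure of $V_j,\rvec{z}$ yields the identity
\begin{equation}\label{eq:msed}
\mathrm{MSE}_k(\rvec{t}_k) \;=\; \E\!\left[\,\bigl|\rvec{t}_k^\herm\rvec{r}-V_k\bigr|^{2}\,\right],
\end{equation}
so that $\mathrm{MSE}_k(\rvec{t}_k)$ is exactly the MSE of estimating $V_k$ from $\rvec{r}$ with the distributed combiner $\rvec{t}_k\in\set{T}$. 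The per-user virtual uplink powers are $\{w_j\}$, with $\sum_j w_j=K$, and the noise level $1/P$ makes this uplink --- after rescaling by $P$ --- have unit noise and sum-power budget $P_{\mathrm{sum}}=PK$, i.e.\ the dual of the downlink budget. Crucially, reusing $\rvec{t}_k$ as the uplink combiner preserves the distributed structure $\rvec{t}_{l,k}=\rvec{t}_{l,k}(S_l)$, so the relevant duality is the classical one restricted to the subspace $\set{T}$.

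\textbf{Inner bound.} Fix arbitrary $\rvec{t}_k\in\set{T}$ and let $R_k^{\mathrm{ul}}$ be the hardening rate of user $k$ in this virtual uplink with combiner $\rvec{t}_k$ and powers $\{w_j\}$. Writing $P_k:=\E[|\rvec{t}_k^\herm\rvec{r}|^{2}]$ and $C_k:=\E[\rvec{t}_k^\herm\rvec{r}\,V_k^{*}]$, one has $2^{R_k^{\mathrm{ul}}}=P_k/(P_k-|C_k|^{2})$, while \eqref{eq:msed} reads $\mathrm{MSE}_k(\rvec{t}_k)=P_k-2\Re(C_k)+1$; hence $2^{R_k^{\mathrm{ul}}}\geq\mathrm{MSE}_k(\rvec{t}_k)^{-1}$ is equivalent to $(P_k-\Re(C_k))^{2}+(\Im(C_k))^{2}\geq 0$, which is trivially true, so $R_k^{\mathrm{ul}}\geq\log(\mathrm{MSE}_k(\rvec{t}_k))^{-1}$. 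It remains to move $R_k^{\mathrm{ul}}$ to the downlink. By uplink--downlink duality for the hardening bound \cite{massivemimobook}, using $\{\rvec{t}_k\}$ as downlink precoders, there is a downlink power allocation $\{p_k\}$ --- the unique solution of the linear power system reported in Appendix~\ref{proof:duality} --- that reproduces the virtual uplink SINRs, hence attains $R_k^{\mathrm{hard}}=R_k^{\mathrm{ul}}\geq\log(\mathrm{MSE}_k(\rvec{t}_k))^{-1}$ for all $k$ while using the whole budget, $\sum_{l}\E[\|\rvec{x}_l\|^{2}]=P_{\mathrm{sum}}$. Every rate tuple dominated by $(\log(\mathrm{MSE}_k(\rvec{t}_k))^{-1})_{k}$ therefore belongs to $\set{R}^{\mathrm{hard}}$.

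\textbf{Optimality.} Suppose now $\rvec{t}_k$ solves \eqref{eq:team_prob} (for the given $\vec{w}$). Since $\mathrm{MSE}_k$ is a coercive (because of the $\|\rvec{t}_k\|^{2}/P$ term) convex quadratic functional on the Hilbert space $\set{T}$, its minimizer satisfies the orthogonality condition $\E[\rvec{s}^\herm\rvec{r}\,(\rvec{t}_k^\herm\rvec{r}-V_k)^{*}]=0$ for all $\rvec{s}\in\set{T}$; specializing to $\rvec{s}=\rvec{t}_k$ gives $P_k=C_k\in\stdset{R}$, which makes the inequality above tight: $R_k^{\mathrm{ul}}=\log(\mathrm{MSE}_k(\rvec{t}_k))^{-1}$. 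The same normal equations show that the minimizer of $\mathrm{MSE}_k$ is, up to an irrelevant scalar, the combiner maximizing $R_k^{\mathrm{ul}}$ (both equal $\mathcal{R}^{-1}$ applied to the $\set{T}$-projection of $\sqrt{w_k}\,\rvec{g}_k$, with $\mathcal{R}$ the covariance operator of $\rvec{r}$ compressed to $\set{T}$). Because in the virtual uplink the combiner of user $k$ affects only $R_k^{\mathrm{ul}}$, the tuple $\vec{R}^{\star}(\vec{w}):=(\log(\mathrm{MSE}_k(\rvec{t}_k))^{-1})_{k}$ is the componentwise-largest rate tuple achievable in the virtual uplink at the power profile $\vec{w}$, and by the duality above it is achievable in the downlink at full power $P_{\mathrm{sum}}$.

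\textbf{Pareto boundary, and the main obstacle.} To finish I would match $\{\vec{R}^{\star}(\vec{w}):\vec{w}\in\set{W}\}$ with $\partial\set{R}^{\mathrm{hard}}$. By the duality, $\set{R}^{\mathrm{hard}}$ equals the virtual uplink hardening-rate region under a sum-power constraint, whose boundary is attained at full power (scaling all uplink powers up raises every $R_k^{\mathrm{ul}}$). Given $\vec{R}^{\star}\in\partial\set{R}^{\mathrm{hard}}$, I pull it to the uplink, rescale the uplink powers to sum $K$ --- their profile is then some $\vec{w}\in\set{W}$ --- and replace the combiners by the team-MMSE ones for $\vec{w}$, which cannot lower any rate; Pareto optimality then forces $\vec{R}^{\star}=\vec{R}^{\star}(\vec{w})$. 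For the converse, $\vec{R}^{\star}(\vec{w})$ is Pareto optimal because the team-MMSE combiners are precisely the power-minimizing receivers for the SINR vector they realize --- the standard fixed-point characterization of optimal receivers in sum-power-constrained problems \cite{massivemimobook} --- so that SINR vector cannot be dominated within the budget $P_{\mathrm{sum}}$. I expect this matching to be the hard part: making rigorous that sweeping the \emph{MSE weights} $\vec{w}$ over $\set{W}$ traces out \emph{exactly} the Pareto boundary (and neither a strict subset nor points interior to $\set{R}^{\mathrm{hard}}$) requires carefully combining the uplink--downlink equivalence with the double role of $\vec{w}$ as virtual-uplink power profile and as MSE weights, plus monotonicity/continuity arguments ruling out that $\vec{R}^{\star}(\vec{w})$ be strictly dominated by $\vec{R}^{\star}(\vec{w}')$ for some other $\vec{w}'$; by comparison, the measure-theoretic care needed to justify the normal equations on $\set{T}$ is routine.
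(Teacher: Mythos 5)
Your construction mirrors the paper's own proof: the same dual uplink channel (up to a $\sqrt{P}$ rescaling of powers versus noise), the identity $\mathrm{MSE}_k(\rvec{t}_k)=\E[|\rvec{t}_k^\herm\rvec{r}-V_k|^2]$, the observation that the UatF rate of an arbitrary distributed combiner dominates $\log(\mathrm{MSE}_k(\rvec{t}_k))^{-1}$ with equality at the team-MMSE solution (the paper gets the same fact by leaving the scalar $\alpha$ unoptimized and noting $\alpha\rvec{t}_k\in\set{T}$, you get it from the normal equations $P_k=C_k$; both are fine), and the transfer to the downlink via the UatF--hardening duality theorem of the massive MIMO textbook, whose explicit power allocation gives $\sum_l\E[\|\rvec{x}_l\|^2]=P_{\mathrm{sum}}$. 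Your coverage argument (restrict to full dual power by monotonicity, then rescale and re-optimize the combiners) also matches the paper. All of that is sound.

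The genuine gap is the claim that for \emph{every} $\vec{w}\in\set{W}$ the optimized tuple $\bigl(\log(\mathrm{MSE}_k(\rvec{t}_k^\star))^{-1}\bigr)_{k}$ is Pareto optimal. Your justification --- that the team-MMSE combiners are ``the power-minimizing receivers for the SINR vector they realize'' by ``the standard fixed-point characterization'' --- is imported from the perfect-CSI, deterministic-channel duality literature and is not established for the UatF bound with combiners constrained to the information subspace $\set{T}$; and even granting such a power-minimality property at the fixed profile $\vec{w}$, it does not by itself exclude that some other full-power profile $\vec{w}'\in\set{W}$ strictly increases all $K$ optimized rates simultaneously, which is precisely what must be ruled out. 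You flag this yourself as the hard part and leave it open, so the second half of the theorem is unproven in your write-up. The paper closes exactly this hole with a dedicated contradiction argument: assuming $R_k^\star(\vec{w}')>R_k^\star(\vec{w})$ for all $k$, it walks from $\vec{w}$ to $\vec{w}'$ in $K-1$ coordinatewise updates that preserve $\sum_k w_k=K$ (change one weight to its target value, rescale the remaining ones), and uses that the supremum over $\rvec{t}_k\in\set{T}$ of the UatF rate is continuous, increasing in the user's own dual power and decreasing in the others', to conclude that at least one optimized rate cannot strictly increase along the path --- a contradiction. Some argument of this type, tailored to the $\set{T}$-constrained UatF region rather than borrowed from the centralized SINR-balancing theory, is what your proposal is missing.
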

\begin{proof}
The proof is based on connecting Problem \eqref{eq:team_prob} to the problem of ergodic rate maximization in a dual UL channel, where $\vec{w}$ is an UL power allocation vector, $\rvec{t}_k$ is a distributed UL combiner, and where achievable rates are measured by using the so-called \textit{use-and-then-forget} (UatF) bound \cite[Theorem~4.4]{massivemimobook}. The details are given in Appendix~\ref{proof:duality}. 
\end{proof}
Theorem \ref{th:duality_WMSE} states that the Pareto boundary of $\set{R}^{\mathrm{hard}}$ can be parametrized by $K-1$ nonnegative real parameters, i.e., by the weights $\vec{w}\in\set{W}$. A similar parametrization was already known for deterministic channels (see, e.g., \cite{emil2013optimal}), or, equivalently, for fading channels with perfect CSIT and CSIR. This work extends the aforementioned results to imperfect and possibly distributed CSIT, and no CSIR. In theory, $\vec{w}$ should be selected according to some network utility (e.g., the sum-rate or the max-min rate). In practice, $\vec{w}$ is often fixed heuristically (e.g., from the real UL powers), while the network utility is optimized a posteriori by varying the DL power allocation policy $\{p_k\}_{k=1}^K$. 

From a precoding design point of view, Theorem \ref{th:duality_WMSE} generalizes the duality-based argument behind classical MMSE precoding given by \cite{massivemimobook}. While \cite{massivemimobook} motivates the MMSE solution as the optimal combiner maximizing a dual UL ergodic rate bound based on coherent decoding, the proof of Theorem~\ref{th:duality_WMSE} directly relates the MSE criterion to the more conservative UatF bound. This last point is particularly relevant under distributed CSIT, where an optimal solution to the coherent ergodic rate maximization problem is not known in general. We recall that, in turn, \cite{massivemimobook} generalizes classical duality-based arguments for deterministic channels to fading channels. As a concluding remark, we stress that the inner bound \eqref{eq:WMSE_lower_bound} should not be confused with the well-known inner bound based on the notion of (weighted) MSE on the DL channel \cite{christensen2008weighted}, where the precoders for \textit{all} messages contribute to each rate bound.  %However, Theorem \ref{th:duality_WMSE} may be alternatively derived by relating this last notion to $R_k^{\mathrm{hard}}$ and by generalizing MSE duality results for deterministic channels \cite{shi2007duality} to fading channels similarly to the rate duality generalization given by \cite{massivemimobook}\footnote{TODO: double check this last statement.}.

\subsection{Quadratic teams for distributed precoding design}
\label{ssec:quadratic_teams} 
Problem~\eqref{eq:team_prob} belongs to the known family of \textit{team decision} problems \cite{radner1962team,yukselbook}, which are generally difficult to solve for general information constraints $\rvec{t}_k \in \set{T}$. However, by rewriting the objective as $\mathrm{MSE}_k(\rvec{t}_k)=\E[c_k(\rvec{H},\rvec{t}_{1,k},\ldots,\rvec{t}_{L,k})]$, 
\begin{equation}\label{eq:quadratic_teams}
c_k(\vec{H},\vec{t}_{1,k},\ldots,\vec{t}_{L,k}):=\vec{t}_k^\herm \vec{Q}\vec{t}_k -2\Re\left(\vec{g}_k^\herm\vec{t}_k\right) +1,
\end{equation}
where $\rvec{Q}:= \rvec{H}^\herm\rvec{H} + \frac{1}{P}\vec{I}$, $\rvec{g}_k = \rvec{H}^\herm\vec{e}_k$, and by noticing that $\rvec{Q}\succ \vec{0}$ a.s., we recognize that Problem~\eqref{eq:team_prob} belongs to the class of \textit{quadratic teams} as defined in \cite[Sect.~4]{radner1962team}. This class exhibits strong structural properties, in particular related to the following solution concept:
\begin{definition}[Stationary solution \cite{yukselbook}]
A solution $\rvec{t}_k^\star \in \set{T}$ is a stationary solution for Problem~\eqref{eq:team_prob} if $\mathrm{MSE}_k(\rvec{t}_k^\star)<\infty$ and if the following set of equalities hold 
\begin{equation}\label{eq:stationary}
\nabla_{\vec{t}_{l,k}} \E\left[c_k(\rvec{H},\rvec{t}_{-l,k}^\star,\vec{t}_{l,k})\middle| S_l \right]\bigg|_{\vec{t}_{l,k} = \rvec{t}_{l,k}^\star(S_l)} = \vec{0} \quad  \mathrm{a.s.}, \quad \forall l \in \set{L},
\end{equation}
where $(\rvec{t}_{-l,k},\vec{t}_{l,k}) := (\rvec{t}_{1,k},\ldots,\rvec{t}_{l-1,k},\vec{t}_{l,k},\rvec{t}_{l+1,k},\ldots,\rvec{t}_{L,k})$.
\end{definition}
By evaluating the stationary conditions \eqref{eq:stationary} using standard results on differentiation of real-valued quadratic forms over a complex domain, we obtain that a stationary solution may be given by any solution to the following feasibility problem:
\begin{equation}\label{eq:stationary_evaluated}
\text{find } \rvec{t}_k\in \set{T} \text{ s.t. } \E[\rvec{Q}_{l,l}|S_l]\rvec{t}^\star_{l,k}(S_l) + \sum_{j\neq l}\E[\rvec{Q}_{l,j}\rvec{t}^\star_{j,k}|S_l] - \E[\rvec{g}_{l,k}|S_l] = \vec{0} \quad \text{a.s.}, \quad \forall l \in \set{L},
\end{equation}
where $\rvec{Q}_{l,l} := \rvec{H}_l^\herm\rvec{H}_l+\frac{1}{P}\vec{I}$, $\rvec{Q}_{l,j} := \rvec{H}_l^\herm\rvec{H}_j$ for $j\neq l$, and $\rvec{g}_{l,k} := \rvec{H}_l^\herm\vec{e}_k$, provided that all expectations are finite.
Since the considered quadratic cost $c_k$ is convex and differentiable a.s. in each of the $\vec{t}_{l,k}$, under some mild technical assumptions the notion of stationarity can be interpreted as enforcing each function $\rvec{t}^\star_{l,k}(S_l)$ to be optimal while keeping the functions $\rvec{t}^\star_{j,k}(S_j)$ of all the other TXs $j \neq l$ fixed. This is reminiscent of the game theoretical notion of \textit{Nash equilibrium}, with the difference that here all the TXs share the same objective, and hence they act as a team. Similarly to Nash equilibria, stationary solutions may be in general inefficient, i.e., lead to a local optimum. However, a stronger result holds for quadratic teams:
\begin{theorem}
\label{th:quadratic_teams}
If $\rvec{Q}$ is uniformly bounded above, i.e., there exists a positive scalar $B<\infty$ such that $\rvec{Q}\prec B\vec{I}$ a.s., then Problem \eqref{eq:team_prob} admits a unique optimal solution, which is also the unique stationary solution solving Problem \eqref{eq:stationary_evaluated}.
\end{theorem}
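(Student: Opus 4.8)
The plan is to recognize Problem~\eqref{eq:team_prob} (with $\vec{W}=\vec{I}$, as in the preceding remark) as the unconstrained minimization of a bounded, coercive, strictly convex quadratic functional on a Hilbert space, to invoke the classical existence‑and‑uniqueness theory for such problems, and then to prove that the associated first‑order variational condition is \emph{equivalent} to the pointwise system~\eqref{eq:stationary_evaluated}.

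First I would fix the Hilbert‑space structure: since each $\set{T}_l$ is the closed subspace of $\Sigma_l$-measurable square‑integrable maps $\Omega\to\stdset{C}^N$, the product $\set{T}=\prod_{l=1}^L\set{T}_l$, with the inner product inherited from $L^2(\Omega;\stdset{C}^{LN})$, is a Hilbert space; write $\|\rvec{u}\|_{L^2}:=\E[\|\rvec{u}\|^2]^{1/2}$. On $\set{T}$ put the Hermitian form $a(\rvec{u},\rvec{v}):=\E[\rvec{u}^\herm\rvec{Q}\rvec{v}]$, so $\mathrm{MSE}_k(\rvec{t}_k)=a(\rvec{t}_k,\rvec{t}_k)-2\Re\,\E[\rvec{g}_k^\herm\rvec{t}_k]+1$. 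The a.s.\ sandwich $\tfrac{1}{P}\vec{I}\preceq\rvec{Q}\prec B\vec{I}$ gives simultaneously coercivity, $a(\rvec{u},\rvec{u})\geq\tfrac{1}{P}\|\rvec{u}\|_{L^2}^2$, and continuity, $|a(\rvec{u},\rvec{v})|\leq B\|\rvec{u}\|_{L^2}\|\rvec{v}\|_{L^2}$; moreover $\|\rvec{g}_k\|_{L^2}^2=\E[\|\rvec{H}^\herm\vec{e}_k\|^2]\leq\E[\|\rvec{H}\|_{\mathrm{F}}^2]<\infty$, so $\rvec{t}_k\mapsto\Re\,\E[\rvec{g}_k^\herm\rvec{t}_k]$ is a bounded real‑linear functional and $\mathrm{MSE}_k$ is a finite, continuous, strictly convex quadratic functional on all of $\set{T}$. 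By the direct method of the calculus of variations (equivalently by the Lax--Milgram theorem, using the Hermitian symmetry of $a$ so that the variational solution is the minimizer) there is a unique $\rvec{t}_k^\star\in\set{T}$ attaining $\min_{\rvec{t}_k\in\set{T}}\mathrm{MSE}_k(\rvec{t}_k)$.

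Next I would characterize $\rvec{t}_k^\star$ via its gradient. As $\mathrm{MSE}_k$ is Fr\'echet differentiable and strictly convex, a point of $\set{T}$ is the unique global minimizer iff its gradient vanishes, i.e.\ $\Re\,\E[\rvec{v}^\herm(\rvec{Q}\rvec{t}_k-\rvec{g}_k)]=0$ for all $\rvec{v}\in\set{T}$; testing additionally against $i\rvec{v}$ upgrades this to $\E[\rvec{v}^\herm(\rvec{Q}\rvec{t}_k-\rvec{g}_k)]=0$ for all $\rvec{v}\in\set{T}$. Since $\set{T}$ is a product, one may take $\rvec{v}$ supported on a single block $l$ with arbitrary $\rvec{v}_l\in\set{T}_l$; writing $(\rvec{Q}\rvec{t}_k)_l=\rvec{Q}_{l,l}\rvec{t}_{l,k}+\sum_{j\neq l}\rvec{Q}_{l,j}\rvec{t}_{j,k}$, the a.s.\ boundedness of the blocks of $\rvec{Q}$ together with $\rvec{t}_k,\rvec{g}_k\in L^2$ put $(\rvec{Q}\rvec{t}_k)_l-\rvec{g}_{l,k}$ in $L^2$, so the tower property gives $\E[\rvec{v}_l^\herm((\rvec{Q}\rvec{t}_k)_l-\rvec{g}_{l,k})]=\E[\rvec{v}_l^\herm\,\E[(\rvec{Q}\rvec{t}_k)_l-\rvec{g}_{l,k}\mid S_l]]$. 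By arbitrariness of the $\Sigma_l$-measurable $\rvec{v}_l$, this identity, for every $l$, is equivalent to $\E[(\rvec{Q}\rvec{t}_k)_l-\rvec{g}_{l,k}\mid S_l]=\vec{0}$ a.s., which — after pulling the $\Sigma_l$-measurable factor $\rvec{t}_{l,k}$ out of $\E[\,\cdot\mid S_l]$ — is exactly~\eqref{eq:stationary_evaluated}. All implications are reversible, so $\rvec{t}_k\in\set{T}$ solves~\eqref{eq:stationary_evaluated} iff $\E[\rvec{v}^\herm(\rvec{Q}\rvec{t}_k-\rvec{g}_k)]=0$ for all $\rvec{v}\in\set{T}$ iff $\rvec{t}_k=\rvec{t}_k^\star$, and $\mathrm{MSE}_k(\rvec{t}_k^\star)<\infty$ holds automatically; hence the unique optimal solution is the unique stationary solution.

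The analytic content is light; the step needing the most care is the passage between the global variational identity on the product space $\set{T}$ and the local conditional‑expectation equations~\eqref{eq:stationary_evaluated} — one must check the integrability that legitimizes the tower property and the block‑by‑block decoupling of test functions, and use that the feasible set of Problem~\eqref{eq:team_prob} is exactly the Hilbert space $\set{T}$ (no boundary), so that global minimizers and stationary points coincide. It is precisely here that $\rvec{Q}\prec B\vec{I}$ is indispensable: it makes $a$ bounded (hence $\mathrm{MSE}_k$ finite and continuous on all of $\set{T}$, making the direct‑method/Lax--Milgram step applicable) and places $\rvec{Q}\rvec{t}_k$ in $L^2$ (hence the conditional expectations in~\eqref{eq:stationary_evaluated} finite and the gradient computation valid); without this bound both arguments break down, which is why the general unbounded case has to be treated by the separate route of Theorem~\ref{th:quadratic_teams_new}.
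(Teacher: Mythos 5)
Your proof is correct, but it follows a genuinely different route from the paper. The paper does not prove Theorem~\ref{th:quadratic_teams} in-house at all: it simply invokes Theorem~2.6.6 of \cite{yukselbook}, and the self-contained argument it does give (for the weaker hypothesis $\E[\|\rvec{Q}\|_{\mathrm{F}}^2]<\infty$ in Theorem~\ref{th:quadratic_teams_new}, Appendix~\ref{proof:quadratic_teams_new}) works in the $\rvec{Q}$-weighted Hilbert space $\langle \rvec{a},\rvec{b}\rangle=\E[\rvec{b}^\herm\rvec{Q}\rvec{a}]$: there $\mathrm{MSE}_k$ becomes $\|\rvec{t}-\rvec{t}_0\|_{\set{H}}^2+\mathrm{const}$ with $\rvec{t}_0=\rvec{Q}^{-1}\rvec{g}_k$, existence/uniqueness follows from the Hilbert projection theorem, sufficiency of \eqref{eq:stationary_evaluated} from the orthogonality conditions, and necessity from a separate person-by-person-optimality argument. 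You instead stay in the unweighted $L^2$ space, exploit the two-sided bound $\tfrac{1}{P}\vec{I}\preceq\rvec{Q}\prec B\vec{I}$ to get a bounded coercive Hermitian form, apply Lax--Milgram/the direct method, and then show the vanishing-gradient identity decouples block-by-block (via the tower property and arbitrariness of $S_l$-measurable test functions) into exactly \eqref{eq:stationary_evaluated}, with both directions of the equivalence immediate because the feasible set is the whole space. Your checks of the delicate points (closedness of $\set{T}_l$, integrability of $(\rvec{Q}\rvec{t}_k)_l$, taking the known factor $\rvec{t}_{l,k}$ out of the conditional expectation, upgrading the real variational identity by testing against $i\rvec{v}$) are all sound under the stated hypothesis. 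What each approach buys: yours is more elementary and self-contained, giving necessity and sufficiency in one stroke, but it leans on $\rvec{Q}\prec B\vec{I}$ for continuity of the form; the paper's weighted-norm route is precisely the one that survives when the uniform bound is dropped, which is why Theorem~\ref{th:quadratic_teams_new} (covering Gaussian fading) is proved that way rather than by your argument --- your closing remark correctly identifies this.
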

\begin{proof}
Theorem~2.6.6 of \cite{yukselbook}. 
\end{proof}
Theorem~\ref{th:quadratic_teams} and Problem \eqref{eq:stationary_evaluated} are of fundamental theoretical importance since they concisely identify the two key ingredients for optimal distributed precoding design:
\begin{enumerate}
\item Robustness against local channel estimation errors, captured by $\E[\rvec{Q}_{l,l}|S_l]$ and $\E[\rvec{g}_{l,k}|S_l]$;
\item Robustness against the effect of the ``decisions" taken at the other TXs, captured by $\sum_{j\neq l}\E[\rvec{Q}_{l,j}\rvec{t}^\star_{j,k}|S_l]$. This is the main new difficulty which is introduced while moving from centralized to distributed precoding design.
\end{enumerate}
From a practical perspective, the above results also provide a very powerful tool to solve the difficult distributed precoding design problem. Specifically, they provide a set of \textit{optimality conditions} for Problem~\eqref{eq:team_prob} in the form of a standard infinite dimensional \textit{linear} feasibility problem, for which many approximate solution methods are available. For instance, the optimal TMMSE precoders may be approached via one of the iterative methods surveyed in \cite{yukselbook} based on interpreting the solution to  \eqref{eq:stationary_evaluated} as the unique fixed point of a linear map. Other promising methods may also include finite dimensional approximations of  \eqref{eq:stationary_evaluated} obtained, e.g., by sampling the CSI process $(\rvec{H},S_1,\ldots,S_L)$ and by interpreting the sampled version of \eqref{eq:stationary_evaluated} as a classical function interpolation problem from a finite set of linear measurements \cite{wahba1990spline}. Further discussions on approximate solution methods are left for future work, and most parts of this study will focus on cell-free massive MIMO networks and in particular on special cases where \eqref{eq:stationary_evaluated} can be solved explicitly. However, we remark that the content of this section can be readily applied to study general networks with distributed CSIT as described in Section~\ref{sec:model}, not necessarily restricted to cell-free massive MIMO networks.

Before moving to the aforementioned results, we focus on a rather technical yet important weakness of Theorem~\ref{th:quadratic_teams}. The assumption of Theorem~\ref{th:quadratic_teams} is essentially used to ensure the existence of all the expectations in the steps of the proof, and is satisfied for any fading distribution with bounded support. However, it is not satisfied for the classical Gaussian fading model. Despite being unrealistic, since physically consistent fading distributions cannot have unbounded support, Gaussian fading is a very common model in the literature due to its analytical treatability, for example in deriving simple channel estimation error models \cite{tse2005fundamentals,massivemimobook}. Furthermore, except for the tails of the distribution, it usually fits measurements well. Therefore, in the following we derive more general optimality results covering this case. 
\begin{theorem}
\label{th:quadratic_teams_new}
If $\E[\|\rvec{Q}\|^2_{\mathrm{F}}]<\infty$,
then Problem \eqref{eq:team_prob} admits a unique optimal solution, which is also the unique stationary solution solving Problem \eqref{eq:stationary_evaluated}.
\end{theorem}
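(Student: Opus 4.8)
The plan is to recognize Problem~\eqref{eq:team_prob} as the minimization of a strictly convex quadratic functional over a Hilbert space and to identify its minimizer with the unique solution of the normal equations~\eqref{eq:stationary_evaluated}. The hypothesis $\E[\|\rvec{Q}\|^2_{\mathrm F}]<\infty$ takes over the role that the uniform bound $\rvec{Q}\prec B\vec{I}$ played in Theorem~\ref{th:quadratic_teams}: it is precisely what keeps the relevant expectations finite and legitimizes the limiting arguments, but it is exploited through the Cauchy--Schwarz inequality rather than through pointwise domination.

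The first step is to set up the right space. Since $\rvec{Q}=\rvec{H}^\herm\rvec{H}+\tfrac1P\vec{I}\succeq\tfrac1P\vec{I}$ a.s., the map $\rvec{a}\mapsto\E[\rvec{a}^\herm\rvec{Q}\rvec{a}]\in[0,\infty]$ restricts, on the set $\set{D}:=\{\rvec{a}\in\set{T}:\E[\rvec{a}^\herm\rvec{Q}\rvec{a}]<\infty\}$ of finite-MSE precoders, to a genuine inner product $\langle\rvec{a},\rvec{b}\rangle_{\rvec{Q}}:=\E[\rvec{a}^\herm\rvec{Q}\rvec{b}]$, finiteness of the cross term following from the pointwise bound $|\rvec{a}^\herm\rvec{Q}\rvec{b}|\le(\rvec{a}^\herm\rvec{Q}\rvec{a})^{1/2}(\rvec{b}^\herm\rvec{Q}\rvec{b})^{1/2}$ (valid as $\rvec{Q}\succeq\vec{0}$) followed by Cauchy--Schwarz for $\E[\cdot]$. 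The induced norm dominates $P^{-1/2}\|\cdot\|_{\set{T}}$, and a short argument with Fatou's lemma shows $(\set{D},\langle\cdot,\cdot\rangle_{\rvec{Q}})$ is complete. On $\set{D}$ one has $\mathrm{MSE}_k(\rvec{t}_k)=\|\rvec{t}_k\|_{\rvec{Q}}^2-2\Re\E[\rvec{g}_k^\herm\rvec{t}_k]+1$ while $\mathrm{MSE}_k\equiv+\infty$ off $\set{D}$; since $\E[\|\rvec{g}_k\|^2]\le\E[\|\rvec{H}\|_{\mathrm F}^2]<\infty$, the linear form $\rvec{t}_k\mapsto\E[\rvec{g}_k^\herm\rvec{t}_k]$ is bounded on $(\set{D},\|\cdot\|_{\rvec{Q}})$, hence by Riesz representation equals $\langle\rvec{t}_k^\star,\cdot\rangle_{\rvec{Q}}$ for a unique $\rvec{t}_k^\star\in\set{D}$; completing the square gives $\mathrm{MSE}_k(\rvec{t}_k)=\|\rvec{t}_k-\rvec{t}_k^\star\|_{\rvec{Q}}^2+1-\|\rvec{t}_k^\star\|_{\rvec{Q}}^2$, so $\rvec{t}_k^\star$ is the unique minimizer over $\set{T}$. (Existence and uniqueness also follow from the direct method: $\mathrm{MSE}_k$ is coercive and strictly convex, and weakly lower semicontinuous because it is the supremum over $n$ of the continuous convex functionals obtained by replacing $\rvec{Q}$ with the monotone truncations $(\rvec{H}\mathbbm{1}_{\{\|\rvec{H}\|_{\mathrm F}\le n\}})^\herm(\rvec{H}\mathbbm{1}_{\{\|\rvec{H}\|_{\mathrm F}\le n\}})+\tfrac1P\vec{I}\uparrow\rvec{Q}$.)

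The second step is to check that $\rvec{t}_k^\star$ solves~\eqref{eq:stationary_evaluated}. The crucial consequence of $\E[\|\rvec{Q}\|_{\mathrm F}^2]<\infty$ is that $\rvec{Q}_{l,j}\rvec{t}_{j,k}\in L^1$ for all $l,j$ and all $\rvec{t}_{j,k}\in\set{T}_j$, since $\|\rvec{Q}_{l,j}\rvec{t}_{j,k}\|\le\|\rvec{Q}\|_{\mathrm F}\|\rvec{t}_{j,k}\|$ is a product of two square-integrable random variables; with $\rvec{g}_{l,k}\in L^2$ this makes $\vec{r}_l:=\rvec{Q}_{l,l}\rvec{t}_{l,k}^\star+\sum_{j\neq l}\rvec{Q}_{l,j}\rvec{t}_{j,k}^\star-\rvec{g}_{l,k}$ an $L^1$ random vector with well-defined conditional expectation given $S_l$. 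Fixing $l$ and a bounded $\vec{d}\in\set{T}_l$ (write $\vec{d}^{(l)}\in\set{T}$ for its zero-padded extension), the scalar map $\lambda\mapsto\mathrm{MSE}_k(\rvec{t}_k^\star+\lambda\vec{d}^{(l)})$ is a real quadratic with finite coefficients — finiteness of the linear coefficient $2\Re\E[\vec{d}^\herm\vec{r}_l]$ again coming from the $\rvec{Q}$-Cauchy--Schwarz inequality and boundedness of $\vec{d}$ — and being minimized at $\lambda=0$ it forces $\Re\E[\vec{d}^\herm\vec{r}_l]=0$; applying this also to $\mathrm{i}\vec{d}$ gives $\E[\vec{d}^\herm\vec{r}_l]=0$, i.e.\ $\E[\vec{d}^\herm\E[\vec{r}_l\mid S_l]]=0$ after pulling the bounded $S_l$-measurable $\vec{d}$ inside. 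Letting $\vec{d}$ range over all bounded $\Sigma_l$-measurable functions (e.g.\ truncations of $\E[\vec{r}_l\mid S_l]$) yields $\E[\vec{r}_l\mid S_l]=\vec{0}$ a.s., which after splitting the conditional expectation is the $l$-th equation of~\eqref{eq:stationary_evaluated}.

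The last step — and the one I expect to be the main obstacle — is uniqueness of the stationary solution. Any stationary solution $\rvec{s}$ has finite MSE, hence $\rvec{s}\in\set{D}$, and the same manipulation shows $\langle\rvec{r},\rvec{s}-\rvec{t}_k^\star\rangle_{\rvec{Q}}=\E[\rvec{r}^\herm\rvec{Q}(\rvec{s}-\rvec{t}_k^\star)]=\sum_l\E[\rvec{r}_l^\herm\,\E[\,\cdot\mid S_l]]=0$ for every \emph{bounded} precoder $\rvec{r}$, using~\eqref{eq:stationary_evaluated} for both $\rvec{s}$ and $\rvec{t}_k^\star$. It remains to promote this to $\|\rvec{s}-\rvec{t}_k^\star\|_{\rvec{Q}}=0$, i.e.\ to show that bounded precoders are dense in $(\set{D},\|\cdot\|_{\rvec{Q}})$ — equivalently, that for a finite-MSE precoder $\rvec{t}_k$ the block-wise truncations $\rvec{t}_k^{(m)}$ satisfy $\E[(\rvec{t}_k^{(m)})^\herm\rvec{Q}\rvec{t}_k^{(m)}]\to\E[\rvec{t}_k^\herm\rvec{Q}\rvec{t}_k]$. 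This is exactly where the unboundedness and the inter-TX coupling of $\rvec{Q}$ are felt: the off-diagonal blocks $\rvec{Q}_{l,j}$ prevent $(\rvec{t}_k^{(m)})^\herm\rvec{Q}\rvec{t}_k^{(m)}$ from being monotone or obviously dominated in $m$, so the convergence has to be obtained by a uniform-integrability argument built from $\E[\|\rvec{Q}\|_{\mathrm F}^2]<\infty$ and $\rvec{t}_k\in\set{T}$, or, equivalently, by solving the monotone truncated problems (to which Theorem~\ref{th:quadratic_teams} applies, yielding uniformly $\set{T}$-bounded solutions $\rvec{t}_k^{(n)}$) and carefully passing to the limit $n\to\infty$ in their normal equations. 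Once this density is secured, $\langle\rvec{r},\rvec{s}-\rvec{t}_k^\star\rangle_{\rvec{Q}}=0$ for all bounded $\rvec{r}$ gives $\rvec{s}=\rvec{t}_k^\star$ a.s., so $\rvec{t}_k^\star$ is also the unique stationary solution, completing the proof.
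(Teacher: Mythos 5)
Your first two steps are essentially sound and run parallel to the paper's machinery: you build the same $\rvec{Q}$-weighted Hilbert space (the paper cites Radner for completeness and projects $\rvec{t}_0=\rvec{Q}^{-1}\rvec{g}_k$ onto the closed subspace $\set{T}$, while you invoke Riesz representation and complete the square — equivalent routes), and your derivation of the stationarity conditions at the optimum via bounded directional perturbations is a legitimate variational substitute for the paper's argument, which instead uses person-by-person optimality: it relaxes $\set{T}_l$ to unconstrained measurable functions, writes the per-agent conditional minimizer $\rvec{t}_l^{\star\star}=(\E[\rvec{Q}_{l,l}|S_l])^{-1}\bigl(\E[\rvec{g}_{l,k}|S_l]-\sum_{j\neq l}\E[\rvec{Q}_{l,j}\rvec{t}_{j,k}^\star|S_l]\bigr)$ explicitly, and recovers $\E[\|\rvec{t}_l^{\star\star}\|^2]<\infty$ simply because $\tfrac{1}{P}\E[\|\rvec{t}_l^{\star\star}\|^2]$ is one of the nonnegative summands of the finite $\mathrm{MSE}_k(\rvec{t}_{-l}^\star,\rvec{t}_l^{\star\star})\leq\mathrm{MSE}_k(\rvec{t}^\star_k)$. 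The genuine gap is your third step, and you flag it yourself: the claim that the optimum is the \emph{unique} stationary solution (equivalently, that stationarity implies optimality) is not proved. You only establish $\langle\rvec{r},\rvec{s}-\rvec{t}_k^\star\rangle_{\rvec{Q}}=0$ for \emph{bounded} precoders $\rvec{r}$, and the promotion to $\rvec{r}=\rvec{s}-\rvec{t}_k^\star$ rests on a density (or uniform-integrability, or truncate-and-pass-to-the-limit) claim that is only sketched. That claim is exactly where the difficulty of replacing $\rvec{Q}\prec B\vec{I}$ by $\E[\|\rvec{Q}\|^2_{\mathrm F}]<\infty$ lives: blockwise truncation destroys the cancellations that make $\rvec{t}^\herm\rvec{Q}\rvec{t}$ integrable, the natural majorant $\|\rvec{Q}\|_{\mathrm F}\|\rvec{t}_l\|\,\|\rvec{t}_j\|$ is a product of three merely square-integrable factors, and there is no pointwise bound of the masked quadratic form by a constant times $\rvec{t}^\herm\rvec{Q}\rvec{t}$. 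As written, therefore, half of the theorem's conclusion is unsupported.

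For comparison, the paper closes this half without any density lemma: its sufficiency argument works directly with the projection-theorem orthogonality condition tested against all $\rvec{t}\in\set{T}$, and converts $\E[\rvec{t}^\herm\rvec{Q}(\rvec{t}^\star-\rvec{Q}^{-1}\rvec{g}_k)]=0$ into the stationarity conditions \eqref{eq:stationary_evaluated} by the law of total expectation, with finiteness of the conditional expectations $\E[\rvec{Q}_{l,j}\rvec{t}^\star_{j,k}|S_l]$ obtained from $\E[\|\rvec{Q}\|^2_{\mathrm F}]<\infty$ and $\E[\|\rvec{t}_{j,k}\|^2]<\infty$ via Cauchy--Schwarz; since any stationary $\rvec{s}$ then satisfies the same orthogonality relations as the projection, uniqueness of the projection gives $\rvec{s}=\rvec{t}_k^\star$ directly. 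To complete your proposal you should either carry out the uniform-integrability/limiting argument you allude to, or restructure the last step along the paper's lines (test against all of $\set{T}$ and justify the conditional-expectation interchange once, rather than restricting to bounded test functions and then trying to remove that restriction).
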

\begin{proof} The proof is given in Appendix \ref{proof:quadratic_teams_new}. 
\end{proof}
Finally, we conclude this section by observing that the optimality conditions \eqref{eq:stationary_evaluated} are not only useful to characterize the optimal TMMSE solution, but also to evaluate the suboptimality of its approximations. This can be done via an appropriate measure of violation of the optimality conditions. Specifically, we have the following result:
\begin{lemma}\label{lem:bound}
Suppose that $\E[\|\rvec{Q}\|^2_{\mathrm{F}}]<\infty$ holds, and let $\rvec{t}_k^\star \in \set{T}$ be the unique solution to Problem \eqref{eq:team_prob}. Furthermore, define $\rvec{z}_k := [\rvec{z}_{1,k}^\T,\dots,\rvec{z}^\T_{L,k}]^\T$, where $\rvec{z}_{l,k} = \rvec{z}_{l,k}(S_l)$ is given by the left-hand side of the stationary conditions in \eqref{eq:stationary_evaluated} with $\rvec{t}_k^\star$ replaced by an arbitrary $\rvec{t}_k \in \set{T}$.
If $\rvec{z}_k \in \set{T}$, i.e., if $\E[\|\rvec{z}_{l,k}(S_l)\|^2]<\infty$, the following optimality bounds hold:
\begin{align}
\mathrm{MSE}_k(\rvec{t}_k)-\mathrm{MSE}_k(\rvec{t}_k^\star) &\leq \E\left[\|\rvec{Q}^{-\frac{1}{2}}\rvec{z}_k\|^2\right] \label{eq:first_bound} \\
&\leq P\mathbb{E}\left[\|\rvec{z}_k\|^2\right] \label{eq:bound},
\end{align}
\end{lemma}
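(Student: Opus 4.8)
The plan is to reduce both displayed bounds to a single Pythagorean-type identity for the MSE, one tower‑property manipulation exploiting the block structure of $\set{T}$, and a $\rvec{Q}$-weighted Cauchy--Schwarz inequality; the second bound \eqref{eq:bound} then follows from a pointwise operator estimate. Throughout, for $\rvec{a}\in\stdset{C}^{LN}$ I write $(\rvec{a})_l\in\stdset{C}^N$ for its $l$-th block, so that $\rvec{z}_{l,k}=\E[(\rvec{Q}\rvec{t}_k-\rvec{g}_k)_l\mid S_l]$.

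First I would set $\rvec{d}_k:=\rvec{t}_k-\rvec{t}_k^\star\in\set{T}$ and expand the quadratic cost \eqref{eq:quadratic_teams}; using that $\rvec{Q}$ is Hermitian,
\begin{equation*}
\mathrm{MSE}_k(\rvec{t}_k)=\mathrm{MSE}_k(\rvec{t}_k^\star)+\E[\rvec{d}_k^\herm\rvec{Q}\rvec{d}_k]+2\Re\,\E[\rvec{d}_k^\herm(\rvec{Q}\rvec{t}_k^\star-\rvec{g}_k)].
\end{equation*}
Since each block $\rvec{d}_{l,k}$ is $\Sigma_l$-measurable, the tower property turns the cross term into $2\Re\sum_l\E\big[\rvec{d}_{l,k}^\herm\,\E[(\rvec{Q}\rvec{t}_k^\star-\rvec{g}_k)_l\mid S_l]\big]$, which vanishes because $\rvec{t}_k^\star$, being the optimal TMMSE solution, satisfies the stationarity conditions \eqref{eq:stationary_evaluated} (Theorem~\ref{th:quadratic_teams_new}). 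Hence $\delta:=\mathrm{MSE}_k(\rvec{t}_k)-\mathrm{MSE}_k(\rvec{t}_k^\star)=\E[\rvec{d}_k^\herm\rvec{Q}\rvec{d}_k]$, and $\delta\ge 0$ by optimality of $\rvec{t}_k^\star$. Next, subtracting the stationarity identity for $\rvec{t}_k^\star$ from the definition of $\rvec{z}_k$ gives $\rvec{z}_{l,k}=\E[(\rvec{Q}\rvec{d}_k)_l\mid S_l]$ for every $l$, so a further application of the tower property and of the $\Sigma_l$-measurability of $\rvec{d}_{l,k}$ yields
\begin{equation*}
\delta=\E[\rvec{d}_k^\herm\rvec{Q}\rvec{d}_k]=\sum_l\E\big[\rvec{d}_{l,k}^\herm\,\E[(\rvec{Q}\rvec{d}_k)_l\mid S_l]\big]=\E[\rvec{d}_k^\herm\rvec{z}_k].
\end{equation*}

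Now I would write $\rvec{d}_k^\herm\rvec{z}_k=(\rvec{Q}^{1/2}\rvec{d}_k)^\herm(\rvec{Q}^{-1/2}\rvec{z}_k)$ (legitimate since $\rvec{Q}\succ\vec{0}$ a.s.) and apply Cauchy--Schwarz in $L^2(\Omega;\stdset{C}^{LN})$ with inner product $\langle\rvec{a},\rvec{b}\rangle:=\E[\rvec{a}^\herm\rvec{b}]$:
\begin{equation*}
\delta=\big|\E[\rvec{d}_k^\herm\rvec{z}_k]\big|\le\sqrt{\E[\rvec{d}_k^\herm\rvec{Q}\rvec{d}_k]}\,\sqrt{\E[\|\rvec{Q}^{-1/2}\rvec{z}_k\|^2]}=\sqrt{\delta}\,\sqrt{\E[\|\rvec{Q}^{-1/2}\rvec{z}_k\|^2]},
\end{equation*}
which, after dividing by $\sqrt{\delta}$ (the case $\delta=0$ being trivial), is exactly \eqref{eq:first_bound}. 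Finally, $\rvec{Q}=\rvec{H}^\herm\rvec{H}+\tfrac1P\vec{I}\succeq\tfrac1P\vec{I}$ forces $\rvec{Q}^{-1}\preceq P\vec{I}$ a.s., so $\|\rvec{Q}^{-1/2}\rvec{z}_k\|^2=\rvec{z}_k^\herm\rvec{Q}^{-1}\rvec{z}_k\le P\|\rvec{z}_k\|^2$ pointwise; taking expectations gives \eqref{eq:bound}.

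The main obstacle is not the algebra but the measure‑theoretic bookkeeping: $\mathrm{MSE}_k(\rvec{t}_k)$ could a priori be $+\infty$, and the argument repeatedly exchanges expectations and conditional expectations. This is precisely where the hypotheses are used. The assumption $\rvec{z}_k\in\set{T}$, together with $\rvec{Q}^{-1}\preceq P\vec{I}$, makes $\rvec{Q}^{-1/2}\rvec{z}_k$ square‑integrable; combined with the standing assumption $\E[\|\rvec{Q}\|_\mathrm{F}^2]<\infty$ (hence $\E[\|\rvec{Q}\|]<\infty$ and $\E[\|\rvec{g}_k\|^2]<\infty$) and with $\mathrm{MSE}_k(\rvec{t}_k^\star)<\infty$, one checks that all expectations appearing above are finite, $\delta<\infty$, and the tower‑property steps are valid. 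If full rigour demands it, I would first establish the two identities for the bounded, $\Sigma_l$-measurable truncations $\rvec{t}_{l,k}\mathbbm{1}\{\|\rvec{t}_{l,k}\|\le n\}\in\set{T}_l$ — for which every moment in sight is trivially finite — and then pass to the limit by dominated/monotone convergence, the $L^2$-convergence of the truncated residuals to $\rvec{z}_k$ following from $\rvec{z}_k\in\set{T}$.
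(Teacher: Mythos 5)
Your proof is correct and follows essentially the same route as the paper's: the quadratic expansion with the cross term killed by stationarity is exactly the paper's Pythagoras-plus-orthogonality step in the $\rvec{Q}$-weighted Hilbert space, the identity $\delta=\E[\rvec{d}_k^\herm\rvec{z}_k]$ and the $\rvec{Q}$-weighted Cauchy--Schwarz mirror the paper's $\langle\rvec{Q}^{-1}\rvec{z}_k,\rvec{t}_k-\rvec{t}_k^\star\rangle$ argument, and the final step via $\rvec{Q}^{-1}\preceq P\vec{I}$ is identical. The only difference is presentational (direct computation instead of invoking the projection-theorem machinery), so no further comparison is needed.
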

\begin{proof}
The proof is given in Appendix \ref{proof:bound}. 
\end{proof}  
Clearly, $\rvec{z}_k(S_1,\ldots,S_L) = \vec{0}$ a.s. gives the optimality conditions in \eqref{eq:stationary_evaluated}, and in fact it corresponds to a zero optimality gap in \eqref{eq:bound}. Intuitively, the bounds in \eqref{eq:bound} can be quite tight if $\rvec{z}_k(S_1,\ldots,S_L) \approx \vec{0}$ with high probability. However, if this is not satisfied, we remark that both bounds can be looser than other trivial bounds obtained, e.g., by assuming a centralized information structure, or even output negative estimates of $\mathrm{MSE}_k(\rvec{t}_k^\star)$. As already mentioned, we leave further studies on suboptimal solutions for future work, and we use \eqref{eq:bound} only in Section \ref{ssec:iid_fading} for getting analytical insights into a particular setup.

\section{Applications to cell-free massive MIMO}
\label{sec:applications}
In this section we specialize the theory of Section \ref{sec:TMMSE} to cell-free massive MIMO networks, and explicitly derive optimal TMMSE precoders for two practical examples. In the scope of this study, the important feature of the cell-free massive MIMO paradigm is the exploitation of time division duplex operations and channel reciprocity to efficiently acquire estimates $\hat{\rvec{H}}_l$ of the local channel $\rvec{H}_l$ at each TX~$l$ via over-the-uplink training \cite{ngo2017cellfree}. 
These estimates may be subsequently shared through the fronthaul according to some predefined CSIT sharing mechanism, forming at each TX~$l$ a side information about the global channel $\rvec{H}$ of the type
\begin{equation}
S_l:= \left(\hat{\rvec{H}}_l,\bar{S}_l\right)
\end{equation}
where $\bar{S}_l$ denotes the side information about the other channels $\{\rvec{H}_j\}_{j\neq l}$ collected at TX~$l$. Depending on the CSIT sharing mechanism, $\bar{S}_l$ may be a function of the other local channel estimates $\{\hat{\rvec{H}}_j\}_{j\neq l}$ (e.g., in case of error-free digital signalling), or include additional noise (e.g., in case of  random events such as protocol delays). Consistently with the above discussion, we consider the following assumptions: 
\begin{assumption}[Local channel estimation]\label{ass:TDD}
For every $l\in \set{L}$, let $\rvec{E}_l:= \rvec{H}_l - \hat{\rvec{H}}_l$ be the local channel estimation error for the local channel. Assume that $\hat{\rvec{H}}_l$ and $\rvec{E}_l$ are independent. Furthermore, assume $\E[\rvec{E}_l] = \vec{0}$, and that $\E[\rvec{E}_l^\herm\rvec{E}_l] =: \vec{\Sigma}_l$ has finite elements. Finally, assume that $(\hat{\rvec{H}}_l,\rvec{E}_l)$ and $(\hat{\rvec{H}}_j,\rvec{E}_j)$ are independent for $l\neq j$.
\end{assumption} 

\begin{assumption}[CSIT sharing mechanism]\label{ass:CSIT_sharing}
For every $(l,j) \in \set{L}^2$ s.t. $l\neq j$, assume the following Markov chain: $
\rvec{H}_l \to \hat{\rvec{H}}_l \to S_l \to S_j \to \hat{\rvec{H}}_j \to \rvec{H}_j$.
\end{assumption}
Assumption \ref{ass:TDD} is widely used in the wireless communication literature and it holds, e.g., for pilot-based MMSE estimates of Gaussian channels \cite{ngo2017cellfree,emil2020scalable}.  Assumption \ref{ass:CSIT_sharing} essentially states that all the  available information about $\rvec{H}_l$ is fully contained in $\hat{\rvec{H}}_l$ at TX $l$, and that TX $j$ can only obtain a degraded version of it. We now rewrite the optimality conditions given by \eqref{eq:stationary_evaluated} in light of the considered model: 

\begin{lemma}\label{lem:stationarity_imperfect}
Suppose that Assumption~\ref{ass:TDD}, Assumption~\ref{ass:CSIT_sharing}, and the assumption of Theorem~\ref{th:quadratic_teams_new} hold. Then, the unique TMMSE solution to Problem~\eqref{eq:team_prob} is given by the unique $\rvec{t}_k^\star \in \set{T}$ satisfying
\begin{equation}\label{eq:stationary_imperfect}
\rvec{t}_{l,k}^\star(S_l) =  \rvec{F}_l\left(\vec{e}_k-\sum_{j\neq l} \E\left[\hat{\rvec{H}}_j\rvec{t}^\star_{j,k}\Big|S_l\right] \right) \quad \mathrm{a.s.}, \quad \forall l \in \set{L},
\end{equation} 
where $\rvec{F}_l:=\left(\hat{\rvec{H}}_l^\herm\hat{\rvec{H}}_l+\vec{\Sigma}_l + P^{-1}\vec{I}\right)^{-1}\hat{\rvec{H}}_l^\herm$.
\end{lemma}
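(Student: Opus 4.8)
The plan is to start from the stationarity conditions \eqref{eq:stationary_evaluated}, which by Theorem~\ref{th:quadratic_teams_new} (applicable since $\E[\|\rvec{Q}\|_{\mathrm F}^2]<\infty$ is assumed) already characterize the unique TMMSE solution $\rvec{t}_k^\star$, and to evaluate the three conditional expectations appearing in \eqref{eq:stationary_evaluated} under Assumptions~\ref{ass:TDD} and~\ref{ass:CSIT_sharing}. Throughout I would use the decomposition $\rvec{H}_l=\hat{\rvec{H}}_l+\rvec{E}_l$, the fact that $\hat{\rvec{H}}_l$ is $\sigma(S_l)$-measurable (since $S_l=(\hat{\rvec{H}}_l,\bar S_l)$), and the conditional-independence relations encoded by the Markov chain of Assumption~\ref{ass:CSIT_sharing}. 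Finiteness of all expectations involved is not an issue: $\|\rvec{Q}_{l,l}\|_{\mathrm F},\|\rvec{Q}_{l,j}\|_{\mathrm F}\le\|\rvec{Q}\|_{\mathrm F}$, so $\E[\|\rvec{Q}\|_{\mathrm F}^2]<\infty$ together with $\mathrm{tr}(\vec{\Sigma}_l)<\infty$, the square-integrability of the precoders $\rvec{t}^\star_{j,k}\in\set{T}_j$, and Cauchy--Schwarz make every term (including $\E[\hat{\rvec{H}}_j\rvec{t}^\star_{j,k}\,|\,S_l]$) well defined.

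The first step is the two ``local'' terms. I would show $\E[\rvec{E}_l\,|\,S_l]=\vec{0}$ and $\E[\rvec{E}_l^\herm\rvec{E}_l\,|\,S_l]=\vec{\Sigma}_l$: the sub-chain $\rvec{H}_l\to\hat{\rvec{H}}_l\to S_l$ of Assumption~\ref{ass:CSIT_sharing} makes $\rvec{E}_l=\rvec{H}_l-\hat{\rvec{H}}_l$ conditionally independent of $S_l$ given $\hat{\rvec{H}}_l$; combining this with the unconditional independence of $\rvec{E}_l$ and $\hat{\rvec{H}}_l$ from Assumption~\ref{ass:TDD} and with the $\sigma(S_l)$-measurability of $\hat{\rvec{H}}_l$, the conditional law of $\rvec{E}_l$ given $S_l$ coincides with its marginal law, so these conditional moments collapse to $\E[\rvec{E}_l]=\vec{0}$ and $\E[\rvec{E}_l^\herm\rvec{E}_l]=\vec{\Sigma}_l$. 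Expanding $\rvec{Q}_{l,l}=\hat{\rvec{H}}_l^\herm\hat{\rvec{H}}_l+\hat{\rvec{H}}_l^\herm\rvec{E}_l+\rvec{E}_l^\herm\hat{\rvec{H}}_l+\rvec{E}_l^\herm\rvec{E}_l+\tfrac1P\vec{I}$ and taking $\E[\cdot\,|\,S_l]$ then gives $\E[\rvec{Q}_{l,l}\,|\,S_l]=\hat{\rvec{H}}_l^\herm\hat{\rvec{H}}_l+\vec{\Sigma}_l+\tfrac1P\vec{I}$, and likewise $\E[\rvec{g}_{l,k}\,|\,S_l]=\E[\rvec{H}_l^\herm\,|\,S_l]\,\vec{e}_k=\hat{\rvec{H}}_l^\herm\vec{e}_k$.

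Next come the ``cross'' terms. For $j\neq l$ I would compute $\E[\rvec{Q}_{l,j}\rvec{t}^\star_{j,k}\,|\,S_l]=\E[\rvec{H}_l^\herm\rvec{H}_j\rvec{t}^\star_{j,k}(S_j)\,|\,S_l]$ by first conditioning on the pair $(S_l,S_j)$. In the chain of Assumption~\ref{ass:CSIT_sharing}, $S_l$ separates $\rvec{H}_l$ from $(S_j,\rvec{H}_j)$ while $S_j$ separates $\rvec{H}_j$ from $(S_l,\rvec{H}_l)$; hence $\rvec{H}_l$ and $\rvec{H}_j$ are conditionally independent given $(S_l,S_j)$, with $\E[\rvec{H}_l\,|\,S_l,S_j]=\E[\rvec{H}_l\,|\,S_l]=\hat{\rvec{H}}_l$ and $\E[\rvec{H}_j\,|\,S_l,S_j]=\E[\rvec{H}_j\,|\,S_j]=\hat{\rvec{H}}_j$, the last two identities following from the local-error argument of the previous paragraph (applied to $l$, and to $j$ via the symmetric instance of the assumptions). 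Consequently $\E[\rvec{H}_l^\herm\rvec{H}_j\,|\,S_l,S_j]=\hat{\rvec{H}}_l^\herm\hat{\rvec{H}}_j$, and taking the outer $\E[\cdot\,|\,S_l]$ and pulling out the $\sigma(S_l)$-measurable factor $\hat{\rvec{H}}_l^\herm$ yields $\E[\rvec{Q}_{l,j}\rvec{t}^\star_{j,k}\,|\,S_l]=\hat{\rvec{H}}_l^\herm\,\E[\hat{\rvec{H}}_j\rvec{t}^\star_{j,k}\,|\,S_l]$.

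Finally I would substitute the three evaluated terms into \eqref{eq:stationary_evaluated}, obtaining for every $l$ that $(\hat{\rvec{H}}_l^\herm\hat{\rvec{H}}_l+\vec{\Sigma}_l+\tfrac1P\vec{I})\rvec{t}^\star_{l,k}(S_l)=\hat{\rvec{H}}_l^\herm(\vec{e}_k-\sum_{j\neq l}\E[\hat{\rvec{H}}_j\rvec{t}^\star_{j,k}\,|\,S_l])$ a.s.; since $\hat{\rvec{H}}_l^\herm\hat{\rvec{H}}_l+\vec{\Sigma}_l+\tfrac1P\vec{I}\succeq\tfrac1P\vec{I}\succ\vec{0}$ is invertible and its inverse right-multiplied by $\hat{\rvec{H}}_l^\herm$ is exactly $\rvec{F}_l$, this is \eqref{eq:stationary_imperfect}. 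Running the same computation backwards shows that, conversely, any $\rvec{t}_k\in\set{T}$ solving \eqref{eq:stationary_imperfect} solves \eqref{eq:stationary_evaluated}, so the two systems have the same solution set, and uniqueness (and optimality for Problem~\eqref{eq:team_prob}) is inherited from Theorem~\ref{th:quadratic_teams_new}. The main obstacle is the careful bookkeeping of the conditional-independence relations implied by Assumption~\ref{ass:CSIT_sharing} — in particular establishing $\rvec{H}_l\perp\rvec{H}_j\mid(S_l,S_j)$ with the right conditional means and legitimately treating $\hat{\rvec{H}}_l$ as known under $\E[\cdot\,|\,S_l]$; the remaining steps are routine linear algebra.
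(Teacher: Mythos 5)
Your proposal is correct and follows essentially the same route as the paper's own proof: it evaluates $\E[\rvec{Q}_{l,l}|S_l]$, $\E[\rvec{g}_{l,k}|S_l]$, and the cross terms $\E[\rvec{Q}_{l,j}\rvec{t}^\star_{j,k}|S_l]$ via the Markov chains implied by Assumption~\ref{ass:CSIT_sharing} (conditioning on $(S_l,S_j)$, factorizing by conditional independence, and collapsing to $\hat{\rvec{H}}_l$, $\hat{\rvec{H}}_j$ using Assumption~\ref{ass:TDD}), then rearranges using the invertibility of $\hat{\rvec{H}}_l^\herm\hat{\rvec{H}}_l+\vec{\Sigma}_l+P^{-1}\vec{I}$ and inherits uniqueness from Theorem~\ref{th:quadratic_teams_new}. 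Your additional remarks on finiteness of the expectations and on the reverse implication are fine and consistent with the paper's argument.
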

\begin{proof} 
The first term of the stationarity conditions \eqref{eq:stationary_evaluated} is evaluated by letting
\begin{equation}
\begin{split}
\E[\rvec{Q}_{l,l}|S_l] &= \E[\rvec{H}_l^\herm\rvec{H}_l|S_l]+P^{-1}\vec{I}\\
&= \E[(\hat{\rvec{H}}_l+\rvec{E}_l)^\herm(\hat{\rvec{H}}_l+\rvec{E}_l)|\hat{\rvec{H}}_l]+P^{-1}\vec{I}\\
&=\hat{\rvec{H}}_l^\herm\hat{\rvec{H}}_l +\vec{\Sigma}_l+P^{-1}\vec{I},
\end{split}
\end{equation} 
where we used the Markov chain $\rvec{H}_l \to \hat{\rvec{H}}_l\to S_l$ and Assumption \ref{ass:TDD}. Then, for $j\neq l$:
\begin{equation}
\begin{split}
\E[\rvec{Q}_{l,j}\rvec{t}_j^\star|S_l] &= \E[\rvec{H}_l^\herm\rvec{H}_j\rvec{t}_{j,k}^\star|S_l]\\
&\overset{(a)}{=} \E\left[\E[\rvec{H}_l^\herm\rvec{H}_j|S_l,S_j]\rvec{t}_{j,k}^\star\middle|S_l \right]\\
&\overset{(b)}{=} \E\left[\E[\rvec{H}_l^\herm|S_l,S_j]\E[\rvec{H}_j|S_l,S_j]\rvec{t}_{j,k}^\star\middle|S_l \right]\\
&\overset{(c)}{=} \E\left[\E[\rvec{H}_l^\herm|\hat{\rvec{H}}_l]\E[\rvec{H}_j|\hat{\rvec{H}}_j]\rvec{t}_{j,k}^\star\middle|S_l \right]\\
&\overset{(d)}{=} \hat{\rvec{H}}_l^\herm\E[\hat{\rvec{H}}_j\rvec{t}_{j,k}^\star|S_l],
\end{split}
\end{equation}
where $(a)$ follows from the law of total expectation and $\rvec{t}_{j,k}=\rvec{t}_{j,k}(S_j)$, $(b)$ from the Markov chain $\rvec{H}_l \to (S_l,S_j) \to \rvec{H}_j$, $(c)$ from the Markov chain $\rvec{H}_l \to \hat{\rvec{H}}_l\to (S_l,S_j)$, and $(d)$ from Assumption~\ref{ass:TDD}. Note that all the aforementioned Markov chains are implied by Assumption~\ref{ass:CSIT_sharing}. The proof is concluded by using $
\E[\rvec{g}_{l,k}|S_l] = \hat{\rvec{H}}_l^\herm\vec{e}_k$ and by rearranging the terms.
\end{proof}
The above lemma reveals the following structure of the optimal TMMSE solution: the matrix $\rvec{F}_l$ can be recognized as a \textit{local} MMSE precoding stage (studied, e.g., in \cite{emil2020scalable}), that is, a centralized MMSE solution \cite{massivemimobook} assuming that there are no other TXs than TX $l$; the remaining part can be then interpreted as a `corrective' stage which takes into account the effect of the other TXs based on the available CSIT and long-term statistical information.

\subsection{No CSIT sharing}
\label{ssec:localCSIT}
As an important example, we assume that no local channel estimate is shared along the fronthaul. This corresponds to the original cell-free massive MIMO setup studied in \cite{ngo2017cellfree}. Specifically, we let $\hat{\rvec{H}}_l$ as in Assumption~\ref{ass:TDD} and 
\begin{equation}\label{eq:local_CSIT}
S_l = \hat{\rvec{H}}_l, \quad \forall l\in \set{L}.
\end{equation}
\begin{theorem}
\label{th:teamMMSE_local}
The TMMSE precoders solving \eqref{eq:stationary_imperfect} under no CSIT sharing \eqref{eq:local_CSIT} are given by 
\begin{equation}
\label{eq:localTMMSE}
\rvec{t}_{l,k}^\star(S_l)=\rvec{F}_l\vec{C}_l\vec{e}_k, \quad \forall l \in \set{L},
\end{equation}
for some matrices of coefficients $\vec{C}_l \in \mathbb{
C}^{K\times K}$. Furthermore, the optimal $\vec{C}_l$ are given by the unique solution of the linear system
$\vec{C}_l + \sum_{j\neq l}\vec{\Pi}_j\vec{C}_j = \vec{I}$, $\forall l \in \set{L}$, where $\vec{\Pi}_l := \E\left[\hat{\rvec{H}}_l\rvec{F}_l\right]$.
\end{theorem}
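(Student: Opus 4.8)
The plan is to substitute the no-sharing structure $S_l=\hat{\rvec{H}}_l$ into the stationarity equations \eqref{eq:stationary_imperfect} and exploit that Assumption~\ref{ass:TDD} makes the local estimates $\{\hat{\rvec{H}}_j\}_{j\in\set{L}}$ mutually independent. Fix $l\in\set{L}$. Since any $\rvec{t}_{j,k}^\star\in\set{T}_j$ is a square-integrable function of $S_j=\hat{\rvec{H}}_j$, and $\hat{\rvec{H}}_j$ is independent of $S_l=\hat{\rvec{H}}_l$ for $j\neq l$, the vector $\hat{\rvec{H}}_j\rvec{t}_{j,k}^\star$ is independent of $S_l$, so $\E[\hat{\rvec{H}}_j\rvec{t}_{j,k}^\star\,|\,S_l]=\E[\hat{\rvec{H}}_j\rvec{t}_{j,k}^\star]$, a \emph{deterministic} vector in $\stdset{C}^K$ that is finite because $\|\E[\hat{\rvec{H}}_j\rvec{t}_{j,k}^\star]\|\le\sqrt{\E[\|\hat{\rvec{H}}_j\|_{\mathrm F}^2]}\,\sqrt{\E[\|\rvec{t}_{j,k}^\star\|^2]}<\infty$ by Cauchy--Schwarz (and $\E[\|\hat{\rvec{H}}_j\|_{\mathrm F}^2]<\infty$ follows from $\E[\|\rvec{H}\|_{\mathrm F}^2]<\infty$ and Assumption~\ref{ass:TDD}). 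Hence \eqref{eq:stationary_imperfect} collapses to $\rvec{t}_{l,k}^\star(S_l)=\rvec{F}_l\vec{a}_{l,k}$ with the constant $\vec{a}_{l,k}:=\vec{e}_k-\sum_{j\neq l}\E[\hat{\rvec{H}}_j\rvec{t}_{j,k}^\star]$, which is exactly the claimed form \eqref{eq:localTMMSE} once we set $\vec{C}_l:=[\vec{a}_{l,1},\ldots,\vec{a}_{l,K}]\in\stdset{C}^{K\times K}$, so that $\vec{a}_{l,k}=\vec{C}_l\vec{e}_k$.

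Next I would feed the derived form back in to pin down the $\vec{C}_l$. Writing $\rvec{t}_{j,k}^\star=\rvec{F}_j\vec{C}_j\vec{e}_k$ and using independence once more, $\E[\hat{\rvec{H}}_j\rvec{t}_{j,k}^\star]=\E[\hat{\rvec{H}}_j\rvec{F}_j]\vec{C}_j\vec{e}_k=\vec{\Pi}_j\vec{C}_j\vec{e}_k$. Therefore $\vec{C}_l\vec{e}_k=\vec{e}_k-\sum_{j\neq l}\vec{\Pi}_j\vec{C}_j\vec{e}_k$ for every $k\in\set{K}$, i.e.\ $\vec{C}_l+\sum_{j\neq l}\vec{\Pi}_j\vec{C}_j=\vec{I}$ for all $l\in\set{L}$, as claimed. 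Here $\vec{\Pi}_l=\E[\hat{\rvec{H}}_l\rvec{F}_l]$ is well-defined: by the standard shrinkage property of regularized inverses (an AM--GM argument using $\hat{\rvec{H}}_l^\herm\hat{\rvec{H}}_l+\vec{\Sigma}_l+P^{-1}\vec{I}\succeq P^{-1}\vec{I}$), the largest singular value of $\rvec{F}_l=(\hat{\rvec{H}}_l^\herm\hat{\rvec{H}}_l+\vec{\Sigma}_l+P^{-1}\vec{I})^{-1}\hat{\rvec{H}}_l^\herm$ is at most $\sqrt{P}/2$, so $\E[\|\hat{\rvec{H}}_l\rvec{F}_l\|_{\mathrm F}]<\infty$; the same uniform bound shows $\rvec{F}_l\vec{C}_l\vec{e}_k\in\set{T}_l$ for any fixed matrix $\vec{C}_l$.

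It remains to establish uniqueness of the $\vec{C}_l$. Existence of a solution is already granted, since the unique TMMSE solution $\rvec{t}_k^\star$ of Lemma~\ref{lem:stationarity_imperfect} has just been shown to induce one. For uniqueness I would run the computation in reverse: given any tuple $(\vec{C}_1,\ldots,\vec{C}_L)$ satisfying $\vec{C}_l+\sum_{j\neq l}\vec{\Pi}_j\vec{C}_j=\vec{I}$, the precoders $\rvec{t}_{l,k}:=\rvec{F}_l\vec{C}_l\vec{e}_k$ lie in $\set{T}$ (by the norm bound above) and, repeating the independence computation, satisfy \eqref{eq:stationary_imperfect}; by the uniqueness part of Lemma~\ref{lem:stationarity_imperfect} (equivalently Theorem~\ref{th:quadratic_teams_new}) they must coincide with $\rvec{t}_k^\star$. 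Consequently two solutions $\vec{C}_l,\vec{C}_l'$ yield $\rvec{F}_l\vec{C}_l\vec{e}_k=\rvec{F}_l\vec{C}_l'\vec{e}_k$ a.s.\ for all $l,k$; multiplying by $\hat{\rvec{H}}_l$ and taking expectations gives $\vec{\Pi}_l\vec{C}_l=\vec{\Pi}_l\vec{C}_l'$, and substituting back into the linear system yields $\vec{C}_l=\vec{I}-\sum_{j\neq l}\vec{\Pi}_j\vec{C}_j=\vec{I}-\sum_{j\neq l}\vec{\Pi}_j\vec{C}_j'=\vec{C}_l'$ for every $l$. The main obstacle is exactly this last step: the reduction to a finite-dimensional linear system is routine, but one must argue carefully that uniqueness of the \emph{precoders} (which is what the team-theoretic machinery delivers) transfers to uniqueness of the \emph{coefficient matrices}; the enabling observation is that applying $\E[\hat{\rvec{H}}_l(\cdot)]$ followed by the linear system reconstructs $\vec{C}_l$ from the precoder. (An equivalent, more self-contained route would be to show directly that the block operator $(\vec{C}_l)_l\mapsto(\vec{C}_l+\sum_{j\neq l}\vec{\Pi}_j\vec{C}_j)_l$ on $(\stdset{C}^{K\times K})^L$ is nonsingular, by identifying any nonzero kernel element with a nonzero stationary solution of the homogeneous team problem $\min_{\rvec{t}_k\in\set{T}}\E[\|\rvec{H}\rvec{t}_k\|^2+\|\rvec{t}_k\|^2/P]$, which admits only the trivial solution by Theorem~\ref{th:quadratic_teams_new}.)
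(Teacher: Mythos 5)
Your argument is correct and, up to the last step, follows the same route as the paper: both exploit the mutual independence of the local estimates to drop the conditioning in \eqref{eq:stationary_imperfect}, which collapses the fixed-point equations to the finite linear system $\vec{C}_l+\sum_{j\neq l}\vec{\Pi}_j\vec{C}_j=\vec{I}$ (you derive the form \eqref{eq:localTMMSE} as a necessary consequence of \eqref{eq:stationary_imperfect}, while the paper verifies it as an ansatz; given Lemma~\ref{lem:stationarity_imperfect} this is a cosmetic difference). Where you genuinely diverge is in settling existence and uniqueness of the coefficient matrices: the paper shows in Appendix~\ref{proof:teamMMSE_local}, by pure linear algebra, that the stacked system $(\vec{D}+\vec{U}\vec{\Pi}^\T)\vec{C}=\vec{U}$ is always nonsingular (Woodbury identity, Lemma~\ref{lem:woodbury}, combined with $\vec{0}\preceq\vec{\Pi}_l\prec\vec{I}$), which is self-contained, holds unconditionally, and yields the closed form $\vec{C}=(\vec{D}+\vec{U}\vec{\Pi}^\T)^{-1}\vec{U}$ that one actually uses to compute the precoders; you instead import existence from the optimal $\rvec{t}^\star_k$ and transfer the team-theoretic uniqueness of the precoders to the $\vec{C}_l$ via left-multiplication by $\hat{\rvec{H}}_l$, expectation, and back-substitution in the linear system. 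Your transfer argument is valid and arguably slicker, but it is non-constructive and leans on the assumption $\E[\|\rvec{Q}\|^2_{\mathrm{F}}]<\infty$ of Theorem~\ref{th:quadratic_teams_new}, whereas the paper's nonsingularity statement needs no such hypothesis. One small technical caveat: your claim that the largest singular value of $\rvec{F}_l$ is at most $\sqrt{P}/2$ by AM--GM is only immediate when $\vec{\Sigma}_l$ commutes with $\hat{\rvec{H}}_l^\herm\hat{\rvec{H}}_l$ (squaring is not operator monotone); what you actually need is a uniform bound, and the general estimate $\rvec{F}_l\rvec{F}_l^\herm\preceq\left(\hat{\rvec{H}}_l^\herm\hat{\rvec{H}}_l+\vec{\Sigma}_l+P^{-1}\vec{I}\right)^{-1}\preceq P\vec{I}$ gives $\|\rvec{F}_l\|\leq\sqrt{P}$, which suffices for $\vec{\Pi}_l$ to be well defined and for $\rvec{F}_l\vec{C}_l\vec{e}_k\in\set{T}_l$, so the proof is unaffected.
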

\begin{proof}
Substituting \eqref{eq:localTMMSE} into the optimality conditions \eqref{eq:stationary_imperfect}, we need to show that
\begin{equation}
\hat{\rvec{H}}_l^\herm \left(\vec{C}_l + \sum_{j\neq l}\mathbb{E}\left[\hat{\rvec{H}}_j\rvec{F}_j\vec{C}_j\middle| S_l\right]-\vec{I}\right)\vec{e}_k = \vec{0} \quad \mathrm{a.s. }, \quad \forall l \in \set{L}.
\end{equation}
By the independence between $\hat{\rvec{H}}_l$ and $\hat{\rvec{H}}_j$, we can drop the conditioning on $S_l$ and obtain $\hat{\rvec{H}}_l^\herm \left(\vec{C}_l + \sum_{j\neq l}\vec{\Pi}_j\vec{C}_j-\vec{I}\right)\vec{e}_k = \vec{0}$ a.s., $\forall l \in \set{L}.$ The proof is concluded by observing that $
\vec{C}_l + \sum_{j\neq l}\vec{\Pi}_j\vec{C}_j=\vec{I}$, $\forall l \in \set{L}$, always has a unique solution, as shown in Appendix~\ref{proof:teamMMSE_local}. 
\end{proof}
The optimal solution \eqref{eq:localTMMSE} corresponds to a two-stage precoding scheme composed by a local MMSE precoding stage $\rvec{F}_l$ preceded by a statistical precoding stage $\vec{C}_l$. By letting the rows $\hat{\rvec{g}}_{l,k}^\herm$ of $\hat{\rvec{H}}_l$ to be independent and distributed as $\CN(\vec{0},\vec{K}_{l,k})$, corresponding for instance to a non-line-of-sight (NLoS) scenario with no pilot contamination \cite{emil2020scalable}, it can be shown that the matrices $\vec{\Pi}_l$ are diagonal. Hence, \eqref{eq:localTMMSE} takes the simpler form 
\begin{equation}\label{eq:localMMSE}
\rvec{t}_{l,k}^\star(S_l)=c_{l,k}\rvec{F}_l\vec{e}_k,
\end{equation} 
which, by mapping the optimal $c_{l,k}$ to the optimal large-scale fading decoding coefficients in a dual UL channel, was already studied in \cite{emil2020cellfree}. However, if the channels have non-zero mean, such as in line-of-sight (LoS) models, \eqref{eq:localTMMSE} may provide significantly higher rates than \eqref{eq:localMMSE}. To see this, let $\hat{\rvec{H}}_l \approx \bar{\vec{H}}_l$ for some fixed matrix $\bar{\vec{H}}_l$, $\forall l\in \set{L}$. Then, since $\bar{\vec{H}}_l$ is statistical information known to all TXs, the TMMSE precoders should take a form similar to a `long-term' centralized MMSE solution, which cannot be implemented using \eqref{eq:localMMSE}. Finally, we point out that a suboptimal variation of \eqref{eq:localTMMSE} called \textit{optimal bilinear equalizer} (OBE), with $\rvec{F}_l$ replaced by $\hat{\rvec{H}}_l^\herm$, was already proposed in \cite{neumann2018bilinear} as a low-complexity alternative to centralized MMSE precoding which maintains robustness against pilot contamination.

\subsection{Unidirectional CSIT sharing}
\label{ssec:unidirectional}
We now consider a more involved example and let the local channel measurements be shared unidirectionally along a serial fronthaul. This setup is relevant, e.g., for the cell-free massive MIMO network in Figure \ref{fig:stripe}, where CSIT, messages, and power are distributed along a serial fronthaul from and/or towards a CPU located at one edge, an architecture also known as a \textit{radio stripe} \cite{interdonato2019ubiquitous,shaik2020mmse}.  
%\vspace{-0.5cm}
\begin{figure}[ht!]
\centering
\begin{overpic}[width=0.8\columnwidth,tics=5]{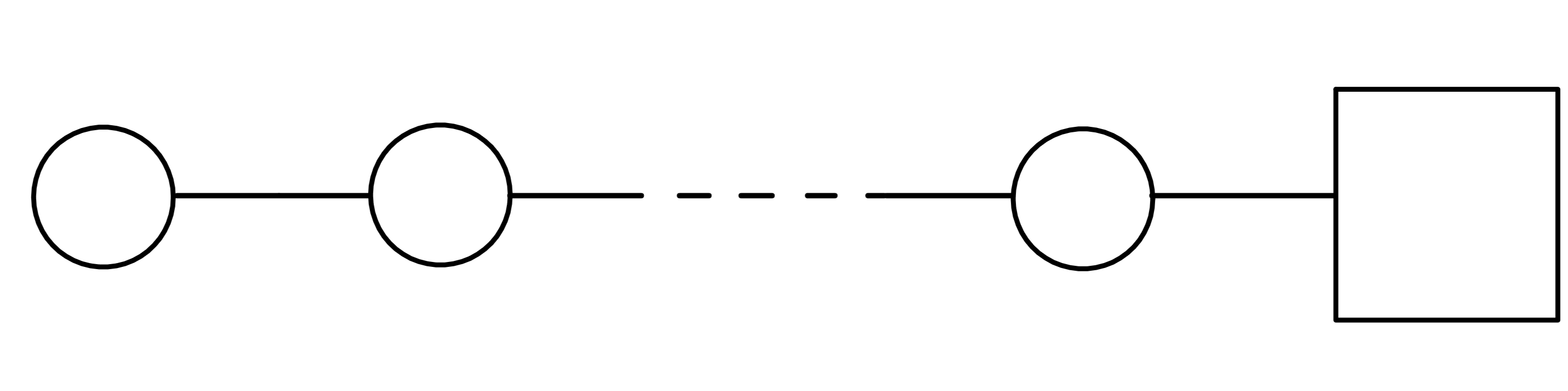}
 \put(3.8,11.5){\small TX$\,1$}
 \put(25.3,11.5){\small TX$\,2$}
 \put(66.1,11.5){\small TX$\,L$}
 \put(89.2,11.5){CPU}
 
 \put(1,20){\small (a) \normalsize $\hat{\rvec{H}}_1$}
 \put(26,20){$\hat{\rvec{H}}_2$}
 \put(67,20){$\hat{\rvec{H}}_L$}
 
 \put(1,2){\small (b) \normalsize $\hat{\rvec{H}}_1$}
 \put(13,2){$\longrightarrow$}
 \put(23.5,2){$\hat{\rvec{H}}_1,\hat{\rvec{H}}_2$}
 \put(37.5,2){$\longrightarrow$}
 \put(58,2){$\hat{\rvec{H}}_1,\hat{\rvec{H}}_2,\ldots,\hat{\rvec{H}}_L$}
\end{overpic} 
\caption{Pictorial representation of a radio stripe with (a) no CSIT sharing, and (b) unidirectional CSIT sharing.}
\label{fig:stripe}
\end{figure}
%\vspace{-0.5cm}

Specifically, $\forall l\in \set{L}$, we let $\hat{\rvec{H}}_l$ as in Assumption~\ref{ass:TDD} and
\begin{equation}\label{eq:unidirectional_CSIT}
S_l = (\hat{\rvec{H}}_1,\ldots,\hat{\rvec{H}}_l).
\end{equation}
This particular information structure can be interpreted as the CSIT which is accumulated at every TX during the first phase of a centralized precoding scheme for radio stripes, where the CPU collects the $K\times LN$ channel matrix $\vec{H}$ through the serial fronthaul.

\begin{theorem}\label{th:teamMMSE_undirectional}
The TMMSE precoders solving \eqref{eq:stationary_imperfect} under unidirectional CSIT sharing \eqref{eq:unidirectional_CSIT} are given by 
\begin{equation}\label{eq:teamMMSE_solution}
\rvec{t}_{l,k}^\star(S_l)=\rvec{F}_l \rvec{V}_l\left[ \prod_{i=1}^{l-1}\bar{\rvec{V}}_i\right]\vec{e}_k, \quad \forall l \in \set{L},
\end{equation}
where we use the following short-hands:
\begin{itemize}
\item $\rvec{V}_l:=\left(\vec{I}-\vec{\Pi}_l\rvec{P}_l\right)^{-1}(\vec{I}-\vec{\Pi}_l)$;
\item $\bar{\rvec{V}}_l:=\vec{I}-\rvec{P}_l\rvec{V}_l$;
\item $\rvec{P}_l := \hat{\rvec{H}}_l\rvec{F}_l$;
\item $\vec{\Pi}_l:=
 \E[\rvec{P}_{l+1} \rvec{V}_{l+1}]+\vec{\Pi}_{l+1}\E[\bar{\rvec{V}}_{l+1}]$, $\vec{\Pi}_L:=\vec{0}$.
\end{itemize}
\end{theorem}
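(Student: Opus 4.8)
The plan is to \emph{verify} that the closed-form precoders in \eqref{eq:teamMMSE_solution} satisfy the optimality conditions \eqref{eq:stationary_imperfect} of Lemma~\ref{lem:stationarity_imperfect}; by the uniqueness part of that lemma (inherited from Theorem~\ref{th:quadratic_teams_new}) this already identifies them as the TMMSE solution. To organize the verification I would write the candidate as $\rvec{t}_{l,k}^\star = \rvec{F}_l\rvec{a}_l$ with $\rvec{a}_l := \rvec{V}_l\rvec{b}_l$ and $\rvec{b}_l := \big[\prod_{i=1}^{l-1}\bar{\rvec{V}}_i\big]\vec{e}_k$, and record the elementary facts $\rvec{b}_1 = \vec{e}_k$, $\rvec{b}_{l+1} = \bar{\rvec{V}}_l\rvec{b}_l$, and --- since $\rvec{P}_i$, $\rvec{V}_i$, $\bar{\rvec{V}}_i$ each depend on $\hat{\rvec{H}}_i$ only, the $\vec{\Pi}_i$ being deterministic --- that $\rvec{b}_l$ is $S_{l-1}$-measurable while $\rvec{b}_{l+1}$ is $S_l$-measurable. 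Using $\hat{\rvec{H}}_j\rvec{F}_j = \rvec{P}_j$, condition \eqref{eq:stationary_imperfect} becomes $\rvec{F}_l\rvec{a}_l = \rvec{F}_l\big(\vec{e}_k - \sum_{j\neq l}\E[\rvec{P}_j\rvec{a}_j\mid S_l]\big)$, so it suffices to prove the stronger, $\rvec{F}_l$-free identity $\rvec{a}_l = \vec{e}_k - \sum_{j\neq l}\E[\rvec{P}_j\rvec{a}_j\mid S_l]$ a.s.\ for every $l\in\set{L}$.

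The heart of the argument is to exploit the nested information structure $S_1\subseteq\cdots\subseteq S_L$ induced by \eqref{eq:unidirectional_CSIT} and to split the sum over $j\neq l$ into a ``past'' part ($j<l$) and a ``future'' part ($j>l$). For the past part, $\rvec{P}_j\rvec{a}_j = \rvec{P}_j\rvec{V}_j\rvec{b}_j$ is already $S_l$-measurable, so the conditional expectation is vacuous, and the algebraic identity $\rvec{P}_j\rvec{V}_j = \vec{I}-\bar{\rvec{V}}_j$ turns the sum into a telescoping one, $\sum_{j<l}\rvec{P}_j\rvec{V}_j\rvec{b}_j = \sum_{j=1}^{l-1}(\rvec{b}_j-\rvec{b}_{j+1}) = \vec{e}_k-\rvec{b}_l$. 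For the future part, I would write $\rvec{a}_j = \rvec{V}_j\big[\prod_{i=l+1}^{j-1}\bar{\rvec{V}}_i\big]\rvec{b}_{l+1}$ for $j>l$; the prefactor $\rvec{P}_j\rvec{V}_j\prod_{i=l+1}^{j-1}\bar{\rvec{V}}_i$ depends only on $\hat{\rvec{H}}_{l+1},\dots,\hat{\rvec{H}}_j$, which are independent of $S_l$ by Assumption~\ref{ass:TDD}, whereas $\rvec{b}_{l+1}$ is $S_l$-measurable, so $\E[\rvec{P}_j\rvec{a}_j\mid S_l] = \E\big[\rvec{P}_j\rvec{V}_j\prod_{i=l+1}^{j-1}\bar{\rvec{V}}_i\big]\rvec{b}_{l+1}$. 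A short backward induction on $l$ --- isolating the $j=l+1$ term and pulling the deterministic factor $\E[\bar{\rvec{V}}_{l+1}]$ out of the remaining terms by the same independence --- then shows that $\sum_{j=l+1}^{L}\E\big[\rvec{P}_j\rvec{V}_j\prod_{i=l+1}^{j-1}\bar{\rvec{V}}_i\big]$ obeys exactly the recursion defining $\vec{\Pi}_l$ (with terminal value $\vec{\Pi}_L=\vec{0}$), hence equals $\vec{\Pi}_l$. Therefore the future part equals $\vec{\Pi}_l\rvec{b}_{l+1} = \vec{\Pi}_l\bar{\rvec{V}}_l\rvec{b}_l$.

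Combining the two parts, the target identity collapses to $\rvec{V}_l\rvec{b}_l = (\vec{I} - \vec{\Pi}_l\bar{\rvec{V}}_l)\rvec{b}_l$, for which it suffices that $\rvec{V}_l = \vec{I} - \vec{\Pi}_l\bar{\rvec{V}}_l = \vec{I} - \vec{\Pi}_l + \vec{\Pi}_l\rvec{P}_l\rvec{V}_l$, i.e.\ $(\vec{I} - \vec{\Pi}_l\rvec{P}_l)\rvec{V}_l = \vec{I} - \vec{\Pi}_l$ --- which is nothing but the definition of $\rvec{V}_l$. At this point the verification is complete, and uniqueness finishes the proof. It is worth noting that the cases $l=1$ and $l=L$ drop out of the same computation ($\rvec{b}_1=\vec{e}_k$ kills the past sum, $\vec{\Pi}_L=\vec{0}$ kills the future sum), so no separate boundary analysis is needed.

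I expect the only genuine obstacle to be \emph{well-posedness} of the shorthands used throughout: the argument tacitly assumes that each $\vec{\Pi}_l$ has finite entries, that $\vec{I}-\vec{\Pi}_l\rvec{P}_l$ is a.s.\ invertible, and that the resulting $\rvec{t}_{l,k}^\star$ is square-integrable so that it actually belongs to $\set{T}$. I would establish this by a separate backward induction controlling operator norms, using the a.s.\ deterministic bounds $\|\rvec{F}_l\|\le\sqrt{P}/2$ and $\vec{0}\preceq\rvec{P}_l\prec\vec{I}$ --- both consequences of $N<K$ and the regularizer $P^{-1}\vec{I}$ appearing in $\rvec{F}_l$ --- to bound $\vec{\Pi}_l$ and the relevant inverses, in the same spirit as the solvability argument used for Theorem~\ref{th:teamMMSE_local} in Appendix~\ref{proof:teamMMSE_local}. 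Everything else reduces to the bookkeeping of the telescoping sum and the independence-based factorizations described above.
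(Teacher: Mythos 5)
Your verification argument is, in substance, the paper's own proof: substitute the candidate into \eqref{eq:stationary_imperfect}, split the sum over $j\neq l$ into past ($j<l$) and future ($j>l$) terms, use the independence of $\hat{\rvec{H}}_{l+1},\ldots,\hat{\rvec{H}}_L$ from $S_l$ to collapse the future sum onto the recursion defining $\vec{\Pi}_l$, telescope the past sum via $\rvec{P}_j\rvec{V}_j=\vec{I}-\bar{\rvec{V}}_j$, and close with $\rvec{V}_l+\vec{\Pi}_l\bar{\rvec{V}}_l=\vec{I}$, which is just the definition of $\rvec{V}_l$; uniqueness from Lemma~\ref{lem:stationarity_imperfect} then finishes. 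Your $\rvec{a}_l,\rvec{b}_l$ bookkeeping and explicit telescoping is a slightly tidier write-up of the same recursions the paper unrolls, so there is no methodological difference in this part, and that part of your argument is correct.

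The gap sits exactly where you flag the ``only genuine obstacle,'' and the fix you sketch would not close it as stated. One must show, for every $l$, that $\vec{I}-\vec{\Pi}_l\rvec{P}_l$ is a.s.\ invertible and that the recursion $\vec{\Pi}_{l-1}=\E[\rvec{P}_l\rvec{V}_l]+\vec{\Pi}_l\E[\bar{\rvec{V}}_l]$, run backwards from $\vec{\Pi}_L=\vec{0}$, keeps $\vec{\Pi}_l$ controlled. Pure operator-norm induction (your $\|\rvec{F}_l\|\le\sqrt{P}/2$, $\vec{0}\preceq\rvec{P}_l\prec\vec{I}$, plus the triangle inequality) does not propagate: $\|\rvec{V}_l\|$ and $\|\bar{\rvec{V}}_l\|$ are in general larger than one (roughly $\|\rvec{V}_l\|\le(1+\|\vec{\Pi}_l\|)/(1-\|\vec{\Pi}_l\|\,\|\rvec{P}_l\|)$), so $\|\vec{\Pi}_{l-1}\|\le\|\E[\rvec{P}_l\rvec{V}_l]\|+\|\vec{\Pi}_l\|\,\|\E[\bar{\rvec{V}}_l]\|$ can exceed one and the induction hypothesis is lost; moreover even the Hermitian-ness of $\vec{\Pi}_{l-1}$, which an argument of the form ``$\|\vec{\Pi}_l\|\,\|\rvec{P}_l\|<1$ implies invertibility of $\vec{I}-\vec{\Pi}_l\rvec{P}_l$'' implicitly leans on when phrased in the Loewner order, is not evident from the raw recursion. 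The paper's Appendix~\ref{proof:teamMMSE_unidirectional} instead works in the semidefinite order: assuming $\vec{0}\preceq\vec{\Pi}_l\prec\vec{I}$, it uses Lemmas~\ref{lem:inverse_product} and~\ref{lem:push_through} to write $\rvec{P}_l(\vec{I}-\vec{\Pi}_l\rvec{P}_l)^{-1}=\hat{\rvec{H}}_l\bigl(\rvec{Q}_l-\hat{\rvec{H}}_l^\herm\vec{\Pi}_l\hat{\rvec{H}}_l\bigr)^{-1}\hat{\rvec{H}}_l^\herm$ (which simultaneously establishes the needed invertibility), and then shows $\vec{\Pi}_{l-1}=\vec{\Pi}_l+(\vec{I}-\vec{\Pi}_l)^{1/2}\E[\tilde{\rvec{P}}_l](\vec{I}-\vec{\Pi}_l)^{1/2}$ with $\vec{0}\preceq\tilde{\rvec{P}}_l\prec\vec{I}$, so $\vec{0}\preceq\vec{\Pi}_{l-1}\prec\vec{I}$ and the backward induction continues. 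Once that Loewner bound is available, the a.s.\ boundedness of $\rvec{V}_l,\bar{\rvec{V}}_l$ and the square-integrability of $\rvec{t}^\star_{l,k}$ (so that $\rvec{t}^\star_k\in\set{T}$) do follow by easy norm estimates of the kind you describe — but it is the positive-semidefinite structure, not norms alone, that makes the induction go through. (Minor point: your bound $\|\rvec{F}_l\|\le\sqrt{P}/2$ is correct, but it comes from the regularizer $P^{-1}\vec{I}$ and $\vec{\Sigma}_l\succeq\vec{0}$, not from $N<K$.)
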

\begin{proof}
We first assume that all the matrix inverses involved in the following steps exist. Substituting \eqref{eq:teamMMSE_solution} into  \eqref{eq:stationary_imperfect}, we need to show that 
\begin{equation}
\hat{\rvec{H}}_l^\herm \left(\rvec{V}_l\prod_{i=1}^{l-1}\bar{\rvec{V}}_i + \sum_{j\neq l}\mathbb{E}\left[\rvec{P}_j\rvec{V}_j\prod_{i=1}^{j-1}\bar{\rvec{V}}_i\middle| S_l\right]-\vec{I}\right)\vec{e}_k = \vec{0} \quad \mathrm{a.s.}, \quad \forall l \in \set{L}.
\end{equation}
To verify the above statement, we rewrite the first two terms inside the outer brackets as:
\begin{equation}\label{eq:stationarity_stripes}
\left(\rvec{V}_l + \sum_{j> l}\mathbb{E}\left[\rvec{P}_j\rvec{V}_j\prod_{i=l+1}^{j-1}\bar{\rvec{V}}_i \right]\bar{\rvec{V}}_l\right)\prod_{i=1}^{l-1}\bar{\rvec{V}}_i +\sum_{j< l}\rvec{P}_j\rvec{V}_j\prod_{i=1}^{j-1}\bar{\rvec{V}}_i,
\end{equation}
where we use the fact that $\rvec{P}_j$, $\rvec{V}_j$, and $\bar{\rvec{V}}_j$ are deterministic functions of $\hat{\rvec{H}}_j$ only, hence they are independent from $S_l$ for $j>l$, while they are deterministic functions of $S_l$ otherwise. Furthermore, since $\rvec{P}_j$, $\rvec{V}_j$, and $\bar{\rvec{V}}_j$ are independent from $\rvec{P}_i$, $\rvec{V}_i$, and $\bar{\rvec{V}}_i$ $\forall i \neq j$, we have
\begin{equation}
\begin{split}
\sum_{j> l}\mathbb{E}\left[\rvec{P}_j\rvec{V}_j\prod_{i=l+1}^{j-1}\bar{\rvec{V}}_i \right] &= \sum_{j> l}\mathbb{E}\left[\rvec{P}_j\rvec{V}_j\right]\prod_{i=l+1}^{j-1}\mathbb{E}\left[\bar{\rvec{V}}_i \right]\\
&= \mathbb{E}\left[\rvec{P}_{l+1}\rvec{V}_{l+1}\right] + \sum_{j >l+1}\mathbb{E}\left[\rvec{P}_j\rvec{V}_j\right]\prod_{i=l+1}^{j-1}\mathbb{E}\left[\bar{\rvec{V}}_i \right] \\
&= \mathbb{E}\left[\rvec{P}_{l+1}\rvec{V}_{l+1}\right] + \left(\sum_{j >l+1}\mathbb{E}\left[\rvec{P}_j\rvec{V}_j\right]\prod_{i=l+2}^{j-1}\mathbb{E}\left[\bar{\rvec{V}}_i\right]\right)\mathbb{E}\left[\bar{\rvec{V}}_{l+1} \right].
\end{split}
\end{equation}
The second and last term of the above chain of equalities define a recursion terminating with 
$\mathbb{E}\left[\rvec{P}_L\rvec{V}_L\right] + \vec{0} \mathbb{E}\left[\bar{\rvec{V}}_L\right]= \vec{\Pi}_{L-1}$. This recursion gives precisely $\sum_{j> l}\mathbb{E}\left[\rvec{P}_j\rvec{V}_j\prod_{i=1}^{j-1}\bar{\rvec{V}}_i \right] = \vec{\Pi}_l$. Together with the property  $\rvec{V}_l + \vec{\Pi}_l\bar{\rvec{V}}_l = \vec{I}$, \eqref{eq:stationarity_stripes} simplifies to 
\begin{equation}
\begin{split}
\prod_{i=1}^{l-1}\bar{\rvec{V}}_i +\sum_{j< l}\rvec{P}_j\rvec{V}_j\prod_{i=1}^{j-1}\bar{\rvec{V}}_i &= \left(\bar{\rvec{V}}_{l-1} + \rvec{P}_{l-1}\rvec{V}_{l-1}\right)\prod_{i=1}^{l-2}\bar{\rvec{V}}_i+\sum_{j <l-1}\rvec{P}_j\rvec{V}_j\prod_{i=1}^{j-1}\bar{\rvec{V}}_i\\ 
&= \prod_{i=1}^{l-2}\bar{\rvec{V}}_i +\sum_{j <l-1}\rvec{P}_j\rvec{V}_j\prod_{i=1}^{j-1}\bar{\rvec{V}}_i,
\end{split}
\end{equation}
where the last equation follows from the definition of $\bar{\rvec{V}}_l$, and where we identify another recursive structure among the remaining terms. By continuing until termination, we finally obtain $\prod_{i=1}^{l-1}\bar{\rvec{V}}_i +\sum_{j< l}\rvec{P}_j\rvec{V}_j\prod_{i=1}^{j-1}\bar{\rvec{V}}_i = \vec{I}$, which proves the main statement under the assumption that all the matrix inverses involved exist. This assumption is indeed always satisfied, as shown in Appendix \ref{proof:teamMMSE_unidirectional}.
\end{proof} 
By locally computing precoders based on $S_l$ only, and at the expense of some performance loss, the scheme in \eqref{eq:teamMMSE_solution} eliminates the additional overhead required by centralized precoding to share back the computed $K\times LN$ precoding matrix from the CPU to the TXs. Furthermore, inspired by the schemes proposed in \cite{interdonato2019ubiquitous,emil2020cellfree,shaik2020mmse} for UL processing exploiting the peculiarity of a serial fronthaul, the CSIT sharing overhead can be further reduced as follows:
\begin{remark}\label{rem:recursive}
The scheme in \eqref{eq:teamMMSE_solution} can be alternatively implemented via a recursive algorithm involving a $K\times K$ aggregate information  matrix $\prod_{i=1}^{l-1}\bar{\rvec{V}}_i$ which is sequentially processed and forwarded in the direction from TX $1$ to TX $L$. Therefore, the capacity of the serial fronthaul can be made independent from $L$, which is typically larger than $K$.

Furthermore, if data sharing is implemented through the sequential forwarding of a vector $\rvec{u}:=[U_1,\ldots,U_K]^\T \in \stdset{C}^K$ of coded and modulated I/Q symbols originating from a CPU placed next to TX $1$, then this can be replaced by the forwarding of a sequentially precoded $K$-dimensional vector $\prod_{i=1}^{l-1}\bar{\rvec{V}}_i\rvec{u}$, thus eliminating the CSIT sharing overhead.
\end{remark}

We conclude this section by providing the following corollary to Theorem \ref{th:teamMMSE_undirectional}.
\begin{corollary}\label{cor:bidirectional_TMMSE}
An alternative expression for centralized MMSE precoding \cite{massivemimobook,emil2020scalable}, or equivalently, for the TMMSE solution under full CSIT sharing $S_l = (\hat{\rvec{H}}_1,\ldots,\hat{\rvec{H}}_L)$ $\forall l \in \set{L}$, is given by \eqref{eq:teamMMSE_solution} with $\vec{\Pi}_l$ replaced by $
\bar{\rvec{P}}_l:=
 \rvec{P}_{l+1} \rvec{V}_{l+1}+\bar{\rvec{P}}_{l+1}\bar{\rvec{V}}_{l+1}$, $\bar{\rvec{P}}_L:=\vec{0}$.
\end{corollary}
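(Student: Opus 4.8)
The idea is to obtain the centralized solution as the ``full-information'' degeneration of Theorem~\ref{th:teamMMSE_undirectional}, in which the conditional expectations over the other TXs' CSIT all collapse. First I would note that full CSIT sharing is the instance of the model of Section~\ref{sec:applications} with $S_1 = \dots = S_L = (\hat{\rvec{H}}_1,\dots,\hat{\rvec{H}}_L)$, so that the information subfields coincide, $\Sigma_1 = \dots = \Sigma_L$. Assumption~\ref{ass:TDD} is inherited and Assumption~\ref{ass:CSIT_sharing} holds trivially (the Markov chain $\hat{\rvec{H}}_l \to S_l \to S_j \to \hat{\rvec{H}}_j$ is degenerate since $S_l = S_j$), so, under the finiteness assumption $\E[\|\rvec{Q}\|_{\mathrm{F}}^2]<\infty$ of Lemma~\ref{lem:stationarity_imperfect}, that lemma applies: the unique TMMSE precoders---which, since all TXs share the same information, coincide with the centralized MMSE precoders of~\cite{massivemimobook,emil2020scalable}---are the unique $\rvec{t}_k^\star \in \set{T}$ solving \eqref{eq:stationary_imperfect}. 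Since $\hat{\rvec{H}}_j\rvec{t}_{j,k}^\star$ is $\Sigma_j$-measurable and $\Sigma_j = \Sigma_l$, every conditional expectation in \eqref{eq:stationary_imperfect} disappears and the optimality conditions reduce to $\rvec{t}_{l,k}^\star = \rvec{F}_l\big(\vec{e}_k - \sum_{j\neq l}\hat{\rvec{H}}_j\rvec{t}_{j,k}^\star\big)$ a.s., for all $l \in \set{L}$.

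Next I would substitute the candidate \eqref{eq:teamMMSE_solution} with $\vec{\Pi}_l$ replaced everywhere by $\bar{\rvec{P}}_l$---equivalently, with $\rvec{V}_l = (\vec{I}-\bar{\rvec{P}}_l\rvec{P}_l)^{-1}(\vec{I}-\bar{\rvec{P}}_l)$, $\bar{\rvec{V}}_l = \vec{I}-\rvec{P}_l\rvec{V}_l$, $\rvec{P}_l = \hat{\rvec{H}}_l\rvec{F}_l$ and $\bar{\rvec{P}}_l = \rvec{P}_{l+1}\rvec{V}_{l+1} + \bar{\rvec{P}}_{l+1}\bar{\rvec{V}}_{l+1}$, $\bar{\rvec{P}}_L = \vec{0}$---into these reduced conditions. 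Using $\hat{\rvec{H}}_j\rvec{F}_j = \rvec{P}_j$, the verification reduces to the purely algebraic identity $\rvec{V}_l\prod_{i=1}^{l-1}\bar{\rvec{V}}_i + \sum_{j\neq l}\rvec{P}_j\rvec{V}_j\prod_{i=1}^{j-1}\bar{\rvec{V}}_i = \vec{I}$ for all $l \in \set{L}$, which I would establish by replaying the chain of equalities in the proof of Theorem~\ref{th:teamMMSE_undirectional} \emph{with every $\E[\,\cdot\,]$ and $\E[\,\cdot\,|\,S_l]$ deleted}. The point is that in the unidirectional proof one used that $\rvec{P}_j,\rvec{V}_j,\bar{\rvec{V}}_j$ are $\Sigma_l$-measurable for $j \le l$ and independent of $S_l$ for $j > l$, whereas under full sharing they are $\Sigma_l$-measurable for \emph{all} $j$; hence each independence-based factorization of an expectation there becomes a trivial matrix identity. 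Splitting the sum at $j=l$ and factoring as in \eqref{eq:stationarity_stripes}, the same telescoping recursion now yields $\sum_{j>l}\rvec{P}_j\rvec{V}_j\prod_{i=l+1}^{j-1}\bar{\rvec{V}}_i = \bar{\rvec{P}}_l$ (the very recursion in the statement); the identity $\rvec{V}_l + \bar{\rvec{P}}_l\bar{\rvec{V}}_l = \vec{I}$ still holds by construction of $\rvec{V}_l$; and the residual $j<l$ terms collapse via $\bar{\rvec{V}}_j + \rvec{P}_j\rvec{V}_j = \vec{I}$, giving $\prod_{i=1}^{l-1}\bar{\rvec{V}}_i + \sum_{j<l}\rvec{P}_j\rvec{V}_j\prod_{i=1}^{j-1}\bar{\rvec{V}}_i = \vec{I}$, which closes the verification.

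Finally I would dispatch well-posedness: that all the matrix inverses $(\vec{I}-\bar{\rvec{P}}_l\rvec{P}_l)^{-1}$ exist follows by the same argument as in Appendix~\ref{proof:teamMMSE_unidirectional}. The equivalence with centralized MMSE precoding is then immediate---either from the observation that common information makes Problem~\eqref{eq:team_prob} the centralized MMSE problem, or from uniqueness of the stationary solution (Theorem~\ref{th:quadratic_teams_new}). The main obstacle is not a fresh computation but rather justifying carefully why the ``delete every expectation'' substitution is legitimate step by step---i.e.\ that upgrading $\Sigma_l$-measurability to \emph{all} indices converts every independence-based factorization in the proof of Theorem~\ref{th:teamMMSE_undirectional} into a plain matrix identity, positioned correctly within each product chain; the well-posedness of the inverses is a secondary, routine point, handled exactly as in the unidirectional case.
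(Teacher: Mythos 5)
Your proposal is correct and follows essentially the same route as the paper: the paper's proof is precisely the observation that under full CSIT sharing all quantities are deterministic given $S_l$, so the proof of Theorem~\ref{th:teamMMSE_undirectional} can be repeated with every expectation removed, which turns $\vec{\Pi}_l$ into the recursion $\bar{\rvec{P}}_l$. Your additional care about measurability, the degenerate Markov chains, and well-posedness of the inverses is a more explicit spelling-out of the same argument, not a different one.
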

\begin{proof}
Since all random quantities become deterministic after conditioning on $S_l$, the proof of Theorem \ref{th:teamMMSE_undirectional} can be repeated by removing $\E[\cdot]$ everywhere.
\end{proof}
The expression in Corollary \ref{cor:bidirectional_TMMSE} can be alternatively derived by applying recursively known block-matrix inversion lemmas to the original centralized MMSE precoding expression \cite{massivemimobook}. The details are omitted due to space limitations. Similarly to the implementation of \eqref{eq:teamMMSE_solution} described in Remark \ref{rem:recursive}, Corollary \ref{cor:bidirectional_TMMSE} provides a novel distributed and recursive implementation of centralized MMSE precoding. The main difference is that, in contrast to $\vec{\Pi}_l$ which can be computed offline, the computation of $\bar{\rvec{P}}_l$ entails an additional sequential procedure in the reverse direction, thus increasing the overhead. 

\subsection{Asymptotic results and relation with the SGD scheme \cite{rodriguez2020decentralized}}
\label{ssec:iid_fading}
The idea of designing recursive precoding schemes exploiting the opportunities of a serial connection between antenna elements has been also explored by \cite{rodriguez2020decentralized}. Motivated by the need of reducing hardware complexity of a massive MIMO cellular base station, and by focusing on $N=1$ and no channel estimation error, the authors of \cite{rodriguez2020decentralized} propose the following so-called \textit{SGD} precoding scheme: 
\begin{equation}\label{eq:SGD}
T_{l,k}(S_l) = \mu_{l,k}\rvec{h}_l^\herm \left(\vec{e}_k-\sum_{j=1}^{l-1} \rvec{h}_j T_{j,k}(S_j) \right), \quad \forall l \in \set{L},
\end{equation}
where $\rvec{h}_l:= \hat{\rvec{H}}_l = \rvec{H}_l$, $S_l$ is given by \eqref{eq:unidirectional_CSIT} assuming unidirectional CSIT sharing, and $\mu_{l,k} \in \stdset{R}$ are tunable step-sizes of a stochastic gradient descent algorithm. The choice $\mu_{l,k} = \|\rvec{h}_l\|^{-2}$ is motivated by \cite{rodriguez2020decentralized} as a good solution for i.i.d. Rayleigh fading and high SNR. Furthermore, to cope with finite SNR, \cite{rodriguez2020decentralized} suggests to take $\mu_{l,k} = \mu_k\|\rvec{h}_l\|^{-2}$ for a single deterministic scalar $\mu_k\in\stdset{R}$ per RX to be optimized, e.g., using line search. Interestingly, the SGD scheme with $\mu_{l,k} = \|\rvec{h}_l\|^{-2}$ can be also derived from team theoretical arguments, as a particular case of the following asymptotic result considering $N\geq 1$:
\begin{lemma}\label{lem:bound_high_SNR}
Assume $\mathrm{vec}(\rvec{H})\sim \CN(\vec{0},\vec{I})$, $\hat{\rvec{H}}_l = \rvec{H}_l$ $\forall l\in \set{L}$, unidirectional CSIT sharing \eqref{eq:unidirectional_CSIT}, and let $\rvec{t}_k^\star$ be the optimal TMMSE solution of Problem \eqref{eq:team_prob}. Then,
\begin{equation}\label{eq:bound_high_SNR}
R_k = \log(\mathrm{MSE}_k(\rvec{t}_k^\star))^{-1} \leq L \log\left(\dfrac{K}{K-N} \right),
\end{equation}
with equality attained as $P\to \infty$ by 
\begin{equation}\label{eq:sequentialZF}
\rvec{t}_{l,k}(S_l) = (\rvec{H}_l^\herm\rvec{H}_l)^{-1}\rvec{H}_l^\herm \left(\vec{e}_k-\sum_{j=1}^{l-1} \rvec{H}_j \rvec{t}_{j,k}(S_j) \right), \quad \forall l \in \set{L}.
\end{equation}
\end{lemma}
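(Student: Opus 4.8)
The plan is to \emph{certify} the sequential ZF scheme \eqref{eq:sequentialZF} as an asymptotically optimal approximate solution of the optimality conditions and to read off the rate bound from Lemma~\ref{lem:bound}. First I would check that the standing hypotheses hold: since $\mathrm{vec}(\rvec{H})\sim\CN(\vec{0},\vec{I})$ has all moments finite, $\E[\|\rvec{Q}\|_\mathrm{F}^2]<\infty$ (so Theorem~\ref{th:quadratic_teams_new} applies); and with $\hat{\rvec{H}}_l=\rvec{H}_l$ we have $\rvec{E}_l=\vec{0}$, $\vec{\Sigma}_l=\vec{0}$, so Assumption~\ref{ass:TDD} is trivial, while Assumption~\ref{ass:CSIT_sharing} holds because $S_l=(\rvec{H}_1,\dots,\rvec{H}_l)$ is a deterministic function of $S_j$ whenever $l<j$ and the blocks are mutually independent. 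Hence Lemma~\ref{lem:stationarity_imperfect} applies with $\rvec{F}_l=(\rvec{H}_l^\herm\rvec{H}_l+P^{-1}\vec{I})^{-1}\rvec{H}_l^\herm$, and $\rvec{t}_k^\star$ is the unique solution of \eqref{eq:stationary_imperfect}.

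Next I would analyze the ZF scheme on its own. With $\vec{r}_l:=\vec{e}_k-\sum_{j\le l}\rvec{H}_j\rvec{t}_{j,k}$ and $\rvec{t}_{l,k}$ as in \eqref{eq:sequentialZF}, one obtains the recursion $\vec{r}_l=\vec{P}_l^\perp\vec{r}_{l-1}$, $\vec{r}_0=\vec{e}_k$, where $\vec{P}_l^\perp:=\vec{I}-\rvec{H}_l(\rvec{H}_l^\herm\rvec{H}_l)^{-1}\rvec{H}_l^\herm$ is the orthogonal projection onto $\ker\rvec{H}_l^\herm$, a uniformly random $(K-N)$-dimensional subspace independent of $(\rvec{H}_1,\dots,\rvec{H}_{l-1})$. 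By rotational invariance $\E[\vec{P}_l^\perp]=\tfrac{K-N}{K}\vec{I}$, so conditioning on $(\rvec{H}_1,\dots,\rvec{H}_{l-1})$ yields $\E[\|\vec{r}_l\|^2]=\tfrac{K-N}{K}\E[\|\vec{r}_{l-1}\|^2]$ and hence $\E[\|\vec{r}_L\|^2]=(\tfrac{K-N}{K})^L$. Since $\|\vec{r}_{l-1}\|\le1$, also $\|\rvec{t}_{l,k}\|^2\le\mathrm{tr}((\rvec{H}_l^\herm\rvec{H}_l)^{-1})$ and $\E[\|\rvec{t}_{l,k}\|^2]\le\tfrac{N}{K-N}<\infty$, so $\rvec{t}_k\in\set{T}$ and $\mathrm{MSE}_k(\rvec{t}_k)=(\tfrac{K-N}{K})^L+\tfrac1P\sum_l\E[\|\rvec{t}_{l,k}\|^2]$.

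The crux is to evaluate the stationarity residual $\rvec{z}_k$ of Lemma~\ref{lem:bound} at this $\rvec{t}_k$. Using the identities from the proof of Lemma~\ref{lem:stationarity_imperfect} (which hold for any feasible precoder, not only $\rvec{t}_k^\star$), $\rvec{z}_{l,k}=(\rvec{H}_l^\herm\rvec{H}_l+P^{-1}\vec{I})\rvec{t}_{l,k}+\rvec{H}_l^\herm\sum_{j\ne l}\E[\rvec{H}_j\rvec{t}_{j,k}\,|\,S_l]-\rvec{H}_l^\herm\vec{e}_k$. I would simplify this as follows: the $j<l$ contributions are $S_l$-measurable and obey $\rvec{H}_l^\herm\sum_{j<l}\rvec{H}_j\rvec{t}_{j,k}=\rvec{H}_l^\herm(\vec{e}_k-\vec{r}_{l-1})$, while $(\rvec{H}_l^\herm\rvec{H}_l+P^{-1}\vec{I})\rvec{t}_{l,k}=\rvec{H}_l^\herm\vec{r}_{l-1}+P^{-1}\rvec{t}_{l,k}$, so these cancel together with $-\rvec{H}_l^\herm\vec{e}_k$ and leave $P^{-1}\rvec{t}_{l,k}+\rvec{H}_l^\herm\sum_{j>l}\E[\rvec{H}_j\rvec{t}_{j,k}\,|\,S_l]$; for $j>l$ one has $\rvec{H}_j\rvec{t}_{j,k}=\vec{P}_j\vec{r}_{j-1}$ with $\vec{P}_j$ the isotropic projection onto $\mathrm{col}(\rvec{H}_j)$, independent of $S_l$ and of $\vec{r}_{j-1}$, so the tower rule turns $\sum_{j>l}\E[\rvec{H}_j\rvec{t}_{j,k}\,|\,S_l]$ into a geometric series equal to $\big(1-(\tfrac{K-N}{K})^{L-l}\big)\vec{r}_l$, which is annihilated by $\rvec{H}_l^\herm$ since $\vec{r}_l\in\ker\rvec{H}_l^\herm$. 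This yields the clean identity $\rvec{z}_{l,k}=P^{-1}\rvec{t}_{l,k}$, so $\rvec{z}_k=P^{-1}\rvec{t}_k\in\set{T}$ and $P\,\E[\|\rvec{z}_k\|^2]=\tfrac1P\sum_l\E[\|\rvec{t}_{l,k}\|^2]$. Substituting into \eqref{eq:bound} cancels the power-penalty term of $\mathrm{MSE}_k(\rvec{t}_k)$ exactly, giving $\mathrm{MSE}_k(\rvec{t}_k^\star)\ge\mathrm{MSE}_k(\rvec{t}_k)-P\E[\|\rvec{z}_k\|^2]=(\tfrac{K-N}{K})^L$, i.e. $R_k=\log(\mathrm{MSE}_k(\rvec{t}_k^\star))^{-1}\le L\log\tfrac{K}{K-N}$; and the sandwich $(\tfrac{K-N}{K})^L\le\mathrm{MSE}_k(\rvec{t}_k^\star)\le\mathrm{MSE}_k(\rvec{t}_k)=(\tfrac{K-N}{K})^L+O(1/P)$ (using $\sum_l\E[\|\rvec{t}_{l,k}\|^2]\le\tfrac{LN}{K-N}$) shows both quantities tend to $(\tfrac{K-N}{K})^L$ as $P\to\infty$, so \eqref{eq:sequentialZF} attains the bound in the limit.

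I expect the main obstacle to be that last computation, $\rvec{z}_{l,k}=P^{-1}\rvec{t}_{l,k}$: it relies on three distinct cancellations conspiring --- the forward telescoping of the $j<l$ terms, the geometric-series collapse of the $j>l$ conditional expectations (which requires getting the tower/independence structure of the product of projections exactly right), and the orthogonality $\rvec{H}_l^\herm\vec{r}_l=\vec{0}$ --- so that only the $P^{-1}$ term survives and then precisely cancels the power penalty in Lemma~\ref{lem:bound}. As a sanity check, the limiting value $(\tfrac{K-N}{K})^L$ is exactly what a backward-induction argument predicts for $\inf_{\rvec{t}_k\in\set{T}}\E[\|\rvec{H}\rvec{t}_k-\vec{e}_k\|^2]$: optimizing the fully-informed last TX reduces the objective to $\tfrac{K-N}{K}$ times that of the $(L-1)$-TX problem, and iterating gives the same constant with minimizer \eqref{eq:sequentialZF}.
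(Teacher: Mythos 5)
Your proposal is correct and follows essentially the same route as the paper's proof: both certify the sequential ZF scheme through Lemma~\ref{lem:bound} by showing the stationarity residual collapses to $\rvec{z}_{l,k}=P^{-1}\rvec{t}_{l,k}$ (via the projection identity $\vec{e}_k-\sum_{j\le l}\rvec{H}_j\rvec{t}_{j,k}=\rvec{P}_l^\perp\cdots\rvec{P}_1^\perp\vec{e}_k$ and $\rvec{H}_l^\herm\rvec{P}_l^\perp=\vec{0}$), compute $\mathrm{MSE}_k(\rvec{t}_k)=\left(1-\tfrac{N}{K}\right)^L+O(1/P)$, and conclude with the exact lower bound and the $P\to\infty$ sandwich. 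The only differences are cosmetic (your telescoping/rotational-invariance bookkeeping versus the paper's explicit Wishart identities, and a cruder but sufficient bound on $\E[\|\rvec{t}_{l,k}\|^2]$).
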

\begin{proof}
The proof is given in Appendix \ref{proof:bound_high_SNR}.
\end{proof}

\section{Performance evaluation}
\label{sec:simulations}
\subsection{Simulation setup}
Inspired by the ``football arena" \cite{shaik2020mmse} or ``outdoor piazza" \cite{interdonato2019ubiquitous} scenarios, we simulate a network with a radio stripe of $L=30$ equally spaced TXs with $N=2$ antennas each wrapped around a circular area of radius $r_1 = 60$ m, and $K=7$ RXs independently and uniformly drawn within a concentric circular  area of radius $r_2= 50$ m. We  let the channel coefficient $h_{l,k,n}$ between the $n$-th antenna of TX $l$ and RX $k$ be independently distributed as $H_{l,k,n} \sim \CN(0,\rho^2_{l,k})$, where $\rho^2_{l,k}$ denotes the channel gain between TX $l$ and RX $k$. We follow the 3GPP NLoS Urban Microcell path-loss model \cite[Table B.1.2.1-1]{3GPP}
\begin{equation}
\mathrm{PL}_{l,k} = 36.7 \log_{10}\left(\dfrac{d_{l,k}}{1 \; \mathrm{m}}\right) + 22.7 + 26\log_{10}\left(\dfrac{f_c}{1 \; \mathrm{GHz}}\right) \quad [\text{dB}],
\end{equation}
where $f_c = 2$ GHz is the carrier frequency, and $d_{l,k}$ is the distance between TX~$l$ and RX~$k$ including a difference in height of $10$ m. We let the noise power at all RXs be given by
$
P_{\mathrm{noise}} = -174 + 10 \log_{10}(B/1 \; \mathrm{Hz}) + F$ dBm,
where $B = 20$ MHz is the system bandwidth, and $F = 7$ dB is the noise figure. Finally, we let $\rho^2_{l,k} := 10^{-\frac{\mathrm{PL}_{l,k} +P_{\mathrm{noise}}}{10}}$ $\text{mW}^{-1}$, and, leveraging the short distances, we consider a relatively low total radiated power $P_{\mathrm{sum}} = 100$ mW. 

\begin{figure}[th!]
\centering
\includegraphics[width=0.5\columnwidth]{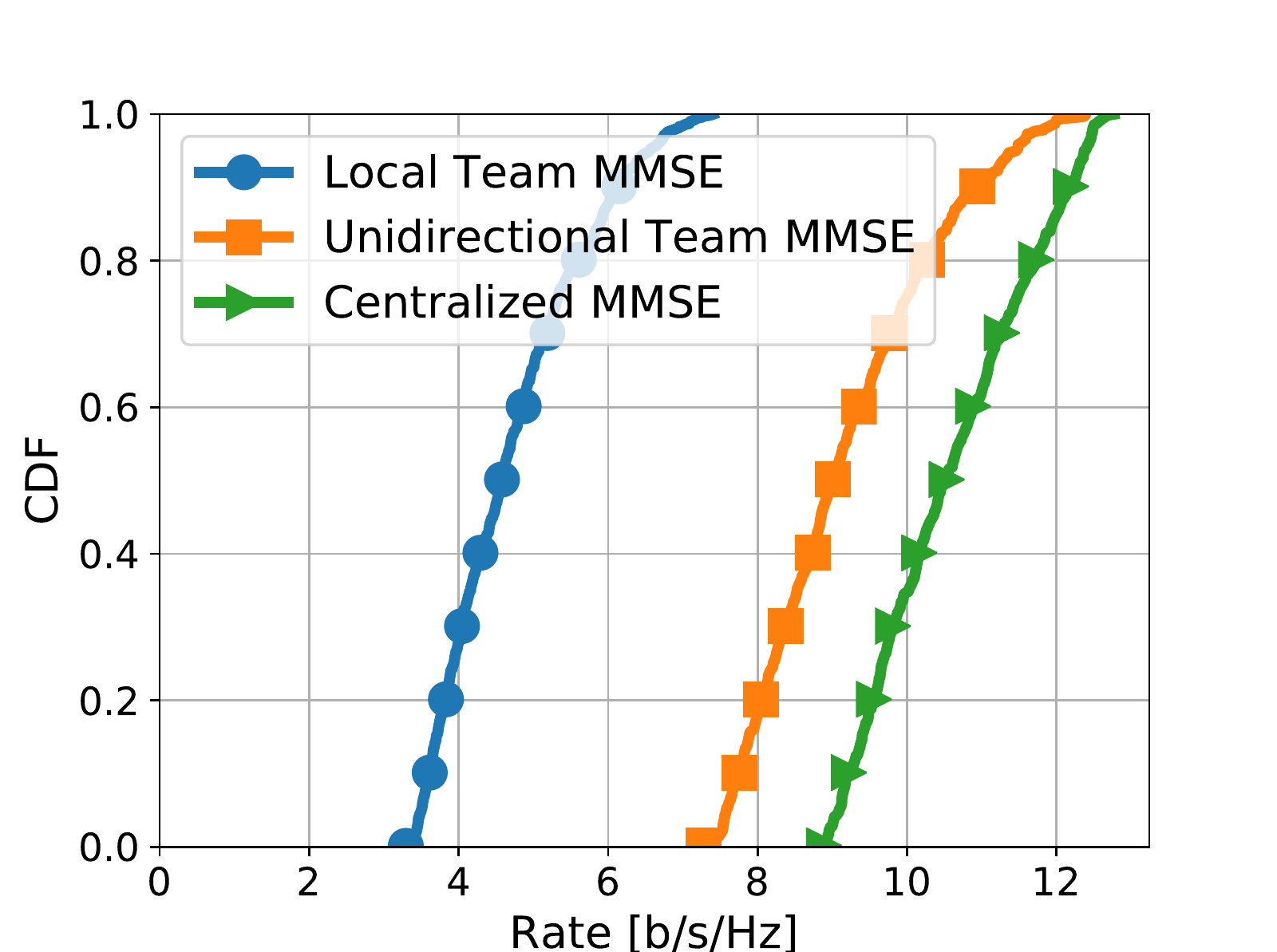}
\label{fig:cdf_stadium}
\caption{Comparison among different CSIT configuration, empirical CDF of the optimal per-RX achievable rates. Unidirectional TMMSE is a promising intermediate solution for supporting network-wide interference management when centralized precoding becomes too costly.}
\end{figure}

\subsection{Comparison among different CSIT configurations}
\label{ssec:comparison_information}
We numerically evaluate the Pareto optimal achievable rates $R_k = -\log(\mathrm{MSE}_k(\rvec{t}^\star_k))$, where $\rvec{t}^\star_k$ denotes the optimal solution of Problem \eqref{eq:team_prob}, under the following CSIT configurations: (i) no CSIT sharing \eqref{eq:local_CSIT}, (ii) unidrectional CSIT sharing \eqref{eq:unidirectional_CSIT}, and (iii) full CSIT sharing as in Corollary \ref{cor:bidirectional_TMMSE}. The resulting optimal precoding schemes are respectively denoted by (i) local TMMSE, (ii) unidirectional TMMSE, and (iii) centralized MMSE. We assume for simplicity $\hat{\rvec{H}}_l = \rvec{H}_l$ to study the impact of the different CSIT configurations in absence of measurement noise, and focus on the Pareto optimal point parametrized by $w_k = 1$ $\forall k \in \set{K}$.

Figure \ref{fig:cdf_stadium} reports the empirical cumulative distribution function (CDF) of $R_k$ for multiple i.i.d.  realizations of the RX locations. As expected, adding information constraints on the CSIT configuration leads to performance degradation. However, the degradation is less pronounced from centralized to unidirectional MMSE precoding, showing that unidirectional CSIT sharing does not prevent effective forms of network-wide interference management. Therefore, the unidirectional team MMSE scheme appears as a promising intermediate solution whenever centralized MMSE precoding becomes too costly, e.g., when the CSIT sharing overhead becomes problematic due to high RXs mobility. Quantifying the savings in terms of CSIT sharing overhead is an interesting open problem which depends on many implementation details. For instance, if computational complexity is not an issue and the message sharing is implemented through the forwarding of high-precision  I/Q symbols, unidirectional TMMSE precoding may have the same overhead as local TMMSE precoding, owing to the sequential implementation outlined in Remark~\ref{rem:recursive}. If this is not possible, for instance because non-linear operations such as matrix inversions at each symbol time are not allowed, then the savings may become less prominent, e.g., down to a factor 2.

\begin{figure*}[ht!]
\centering
\subfloat[]{\includegraphics[width=0.4\columnwidth]{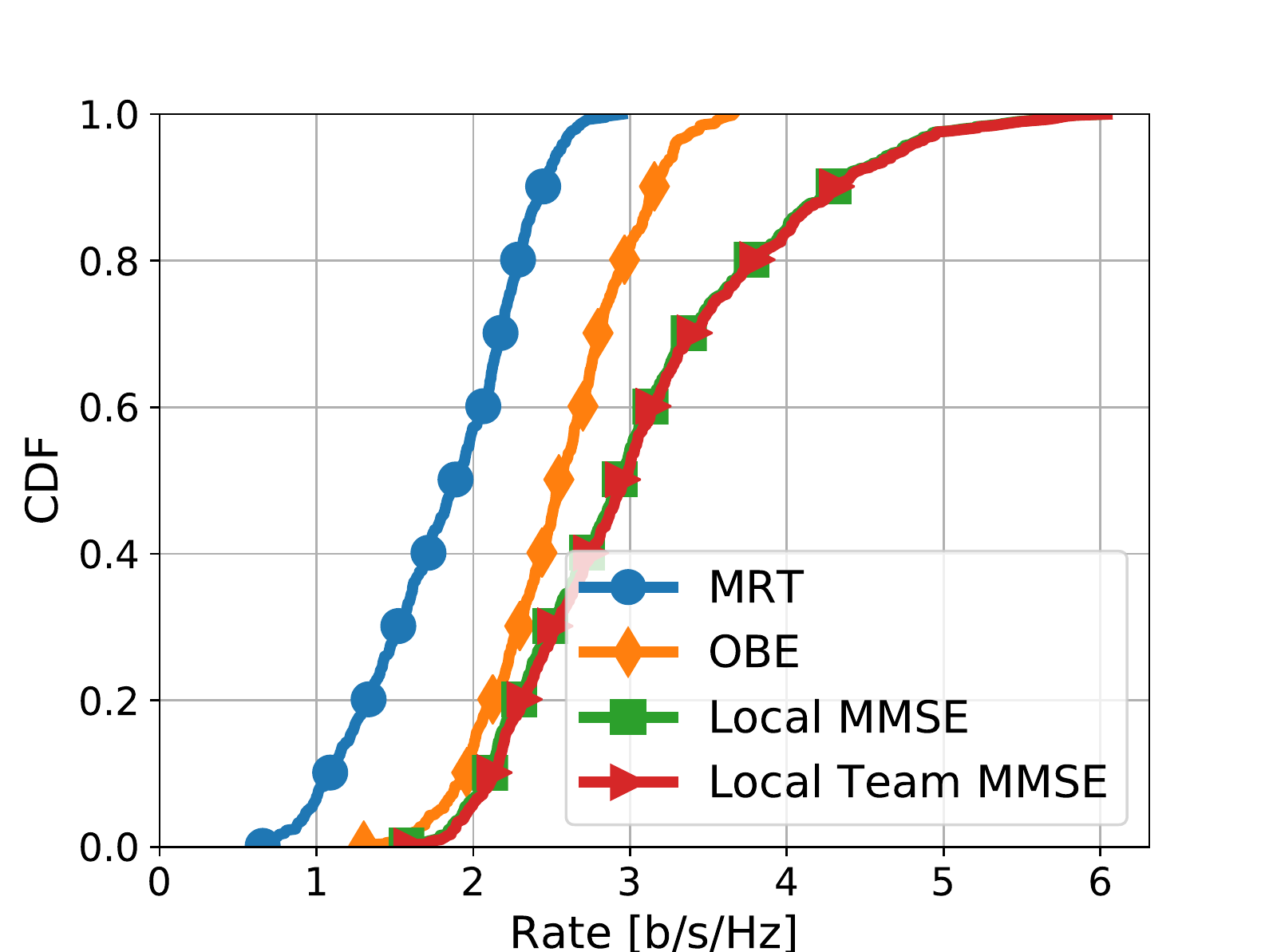}
\label{fig:NLoS}}
\hfil
\subfloat[]{\includegraphics[width=0.4\columnwidth]{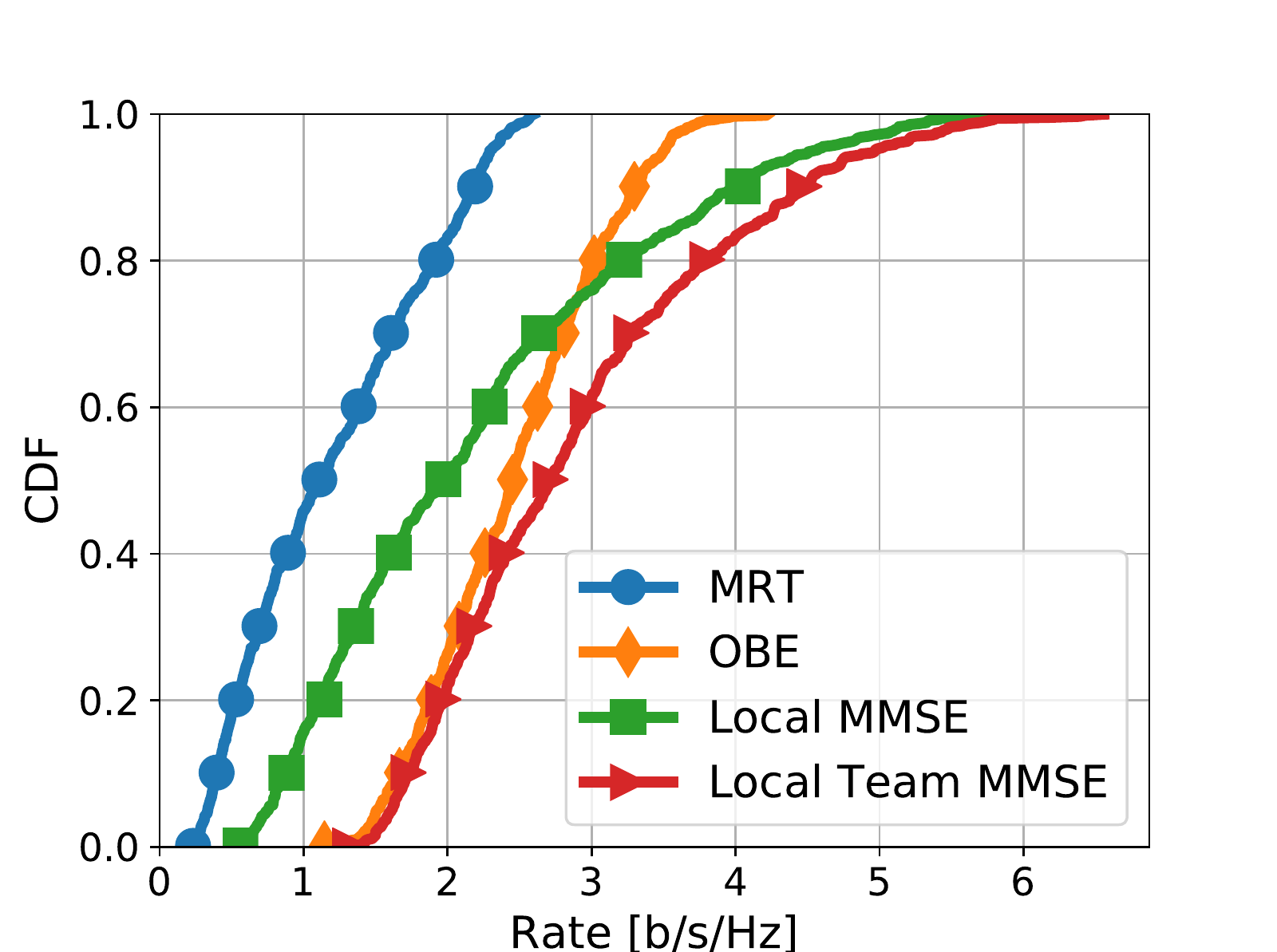}
\label{fig:LoS}}
\caption{Empirical CDF of the per-RX achievable rates for different local precoding schemes, under Ricean factor (a) $\kappa  = 0$, and (b) $\kappa = 1$. In contrast to previously known heuristics, the team MMSE approach optimally exploits statistical information such as the channel mean and always exhibits superior performance. Furthermore, consistently with our theoretical results, local MMSE precoding \cite{emil2020cellfree} is optimal in case (a). However, as expected, and in contrast to the team MMSE approach and the OBE method \cite{neumann2018bilinear}, it may not handle well the interference originating from the channel mean, as shown by the poor perfomance of the weaker RXs in case (b).}
\label{fig:local_precoding}
\end{figure*}
\subsection{Comparison among local precoding schemes}
\label{ssec:comparison_local}
In this section, we compare the optimal local TMMSE solution against classical MRT, the OBE method \cite{neumann2018bilinear}, and local MMSE precoding \eqref{eq:localMMSE} with optimal large-scale fading coefficients $c_{l,k}$ computed using the method in \cite{emil2020cellfree}.  
Since the bound in \eqref{eq:WMSE_lower_bound} may be overly pessimistic for suboptimal schemes, for a fair comparison we compute the DL rates $R_k=R_k^{\mathrm{hard}}$ by means of their dual UL rates $R_k^{\mathrm{UatF}}$ as defined in the proof of Theorem \ref{th:duality_WMSE}, using the same dual UL power allocation $w_k = 1$ $\forall k\in \set{K}$. 
One of the major weaknesses of MRT and local MMSE precoding is that they do not exploit channel mean information, typically arising from LoS components. To study this effect, we modify our simulation setup by letting $N=1$ and by considering a simple Ricean fading model $H_{l,k,1} \sim \CN\left(\sqrt{\frac{\kappa}{\kappa+1}\rho^2_{l,k}}, \frac{1}{\kappa+1}\rho^2_{l,k}\right)$ for some $\kappa\geq 0$, and consider again no measurement noise $\hat{\rvec{H}}_l=\rvec{H}_l$. Figure \ref{fig:local_precoding} confirms the above observation: while, as expected, local MMSE precoding is optimal for a NLoS setup ($\kappa = 0$), it may incur significant performance loss w.r.t. local TMMSE precoding and the OBE method even in case of relatively weak LoS components ($\kappa = 1$).

\subsection{Comparison between unidirectional TMMSE precoding and the SGD scheme \cite{rodriguez2020decentralized}}
In this section, we compare the unidirectional TMMSE solution \eqref{eq:unidirectional_CSIT} for $N=1$ against the suboptimal SGD scheme \eqref{eq:SGD} proposed in \cite{rodriguez2020decentralized} for $\mu_k = 1$ and its robust version obtained by optimizing $\mu_k$ statistically via line search. Figure \ref{fig:TMMSEvsSGD} plots the rate $R_1=R_1^{\mathrm{hard}}$ of the first RX (measured via its dual UL rate as in Section \ref{ssec:comparison_local}) versus the $\mathrm{SNR}:=P\sum_{l}\rho^2_{l,1}$ for a single realization of the simulation setup, and by focusing on the following aspects:
\begin{enumerate}[(a)]
\item Equal path-loss, i.e., $r_2 = 0$ (all RXs colocated at the center of the circular service area), and no channel estimation errors, i.e., $\rvec{E}_l = \vec{0}_{K\times 1}$ $\forall l \in \set{L}$; 
\item Equal path-loss, and channel estimation errors, i.e., we let $\rvec{E}_l \sim \CN(\vec{0}_{K\times 1},\epsilon\vec{K}_l)$ $\forall l \in \set{L}$ and $\hat{\rvec{H}}_l \sim \CN(\vec{0}_{K\times 1},(1-\epsilon)\vec{K}_l)$, where $\vec{K}_l = \mathrm{diag}(\rho_{l,1},\ldots,\rho_{l,K})$ and $\epsilon = 0.2$;
\item Realistic path-loss, i.e., $r_2 = 50$ m (single realization), and no channel estimation errors.
\end{enumerate}
Although the SGD scheme assumes no channel estimation errors, in the above experiments we adapt \eqref{eq:SGD} to  case (b) by replacing $\rvec{h}_l$ with $\hat{\rvec{H}}_l$ everywhere.
\begin{figure*}[!t]
\centering
\subfloat[]{\includegraphics[width=0.31\columnwidth]{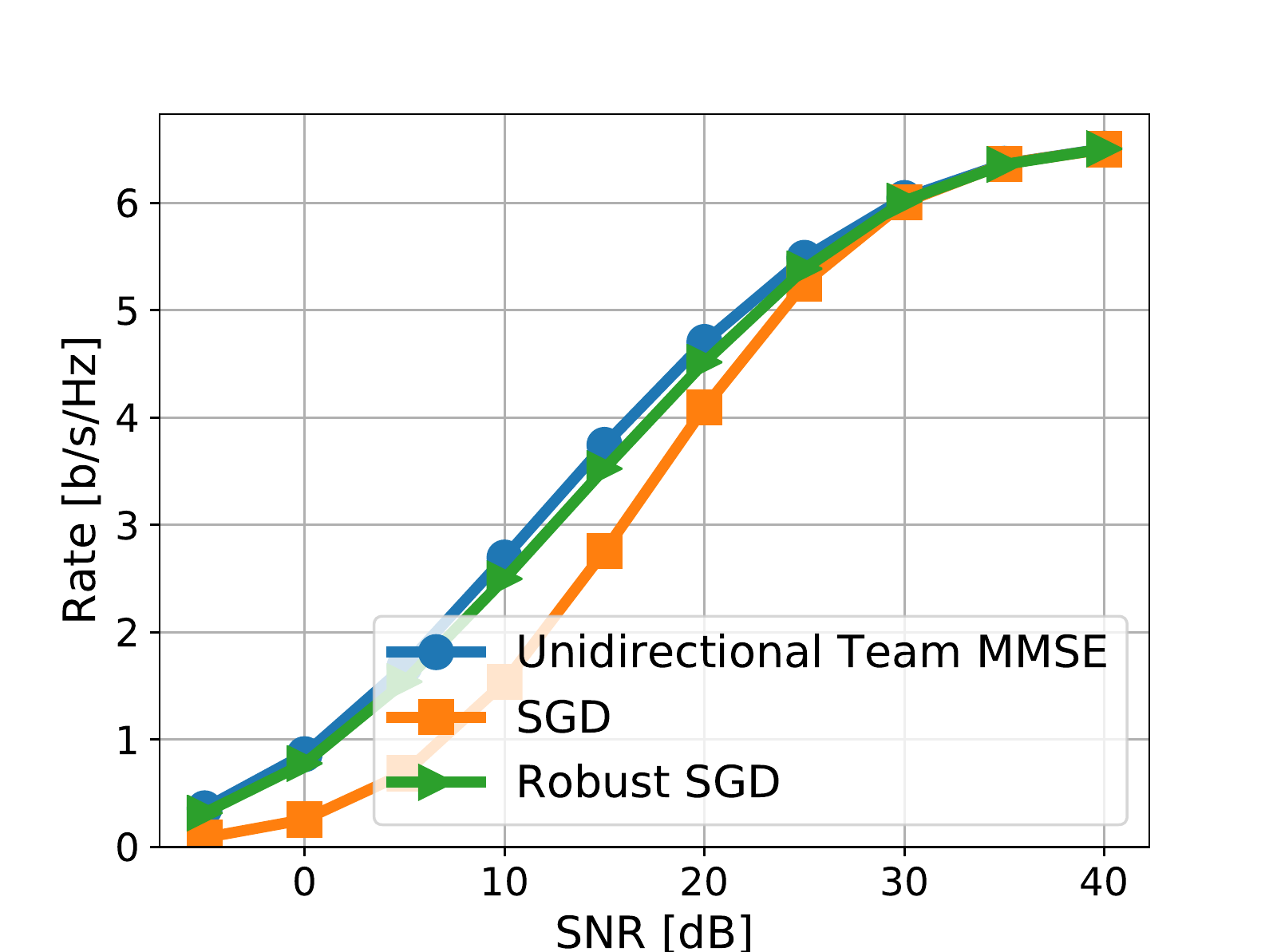}
\label{fig:iid}}
\hfil
\subfloat[]{\includegraphics[width=0.31\columnwidth]{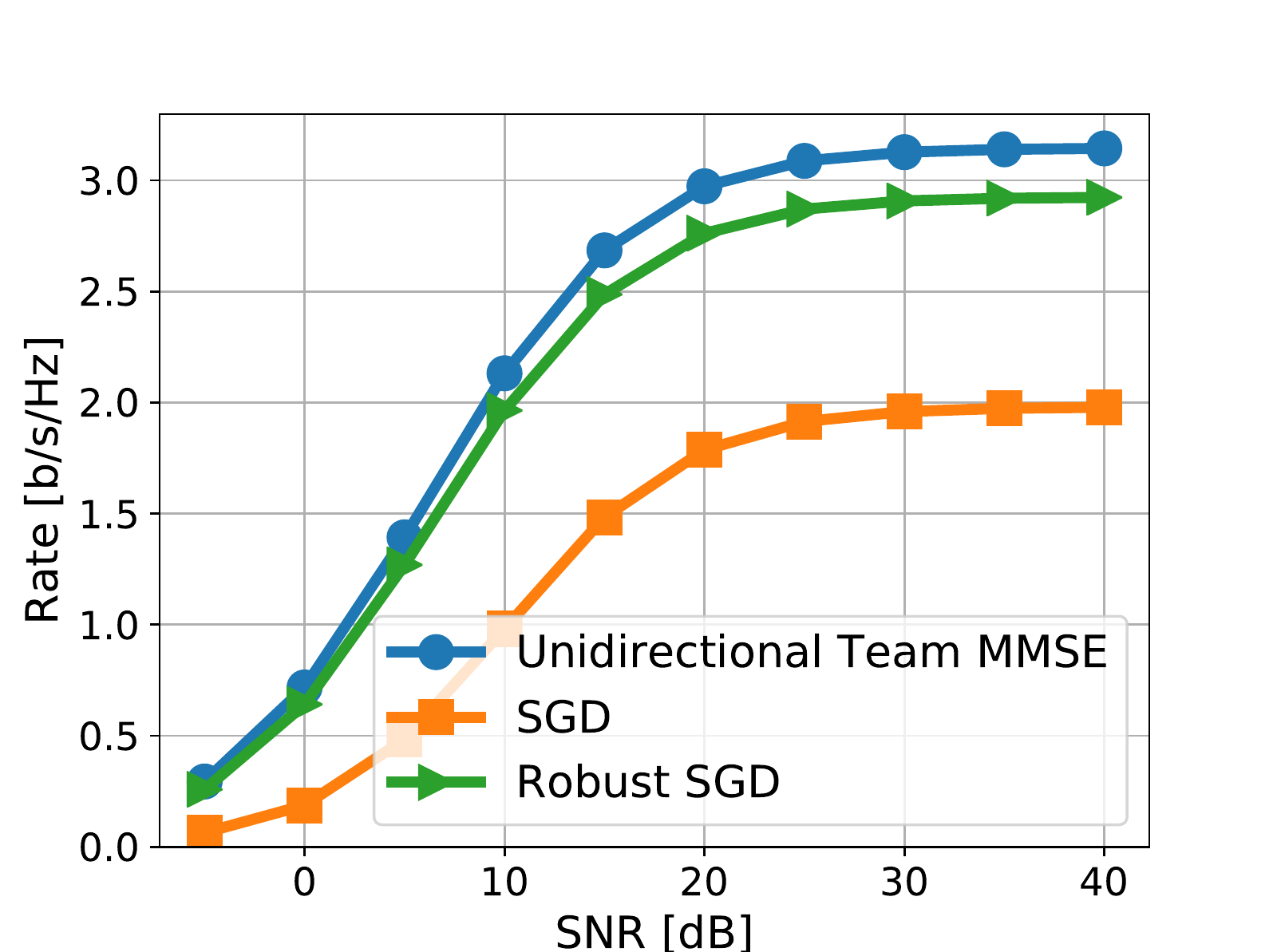}
\label{fig:noise}}
\hfil
\subfloat[]{\includegraphics[width=0.31\columnwidth]{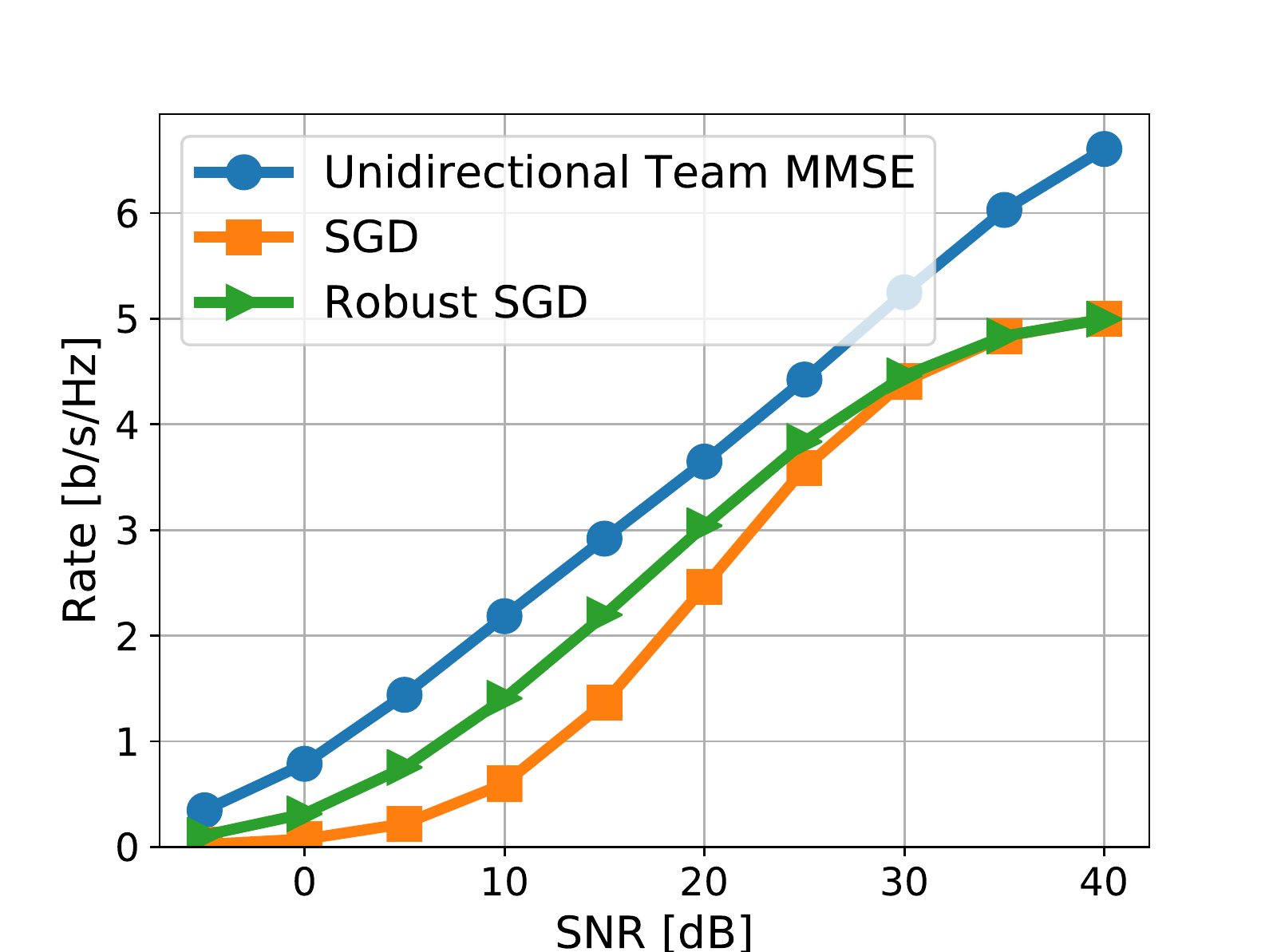}
\label{fig:realistic}}
\caption{Rate vs SNR for RX $1$ under: (a) equal path-loss and no channel estimation errors; (b) equal path-loss and channel estimation errors; (c) realistic path-loss and no channel estimation errors. In contrast to the (robust) SGD scheme \cite{rodriguez2020decentralized}, the team MMSE approach optimally exploits  the path loss information and hence exhibits superior performance in case (c).}
\label{fig:TMMSEvsSGD}
\end{figure*}
As expected, from Figure \ref{fig:iid} we observe that the SGD scheme is asymptotically optimal in case (a), but its performance degrades for low SNR, or in the presence of channel estimation error noise and/or realistic path-loss as shown in Figure \ref{fig:noise} and \ref{fig:realistic}. In contrast, its robust version seems sufficient to recover most of the loss due to finite SNR and channel estimation errors. However, Figure \ref{fig:realistic} shows that the (robust) SGD scheme may not handle more realistic path-loss configurations. 

The main advantage of the (robust) SGD scheme over optimal unidirectional TMMSE precoding is that the former does not perform any $K\times K$ matrix inversion. Therefore, it may be considered as a low-complexity alternative to unidirectional TMMSE precoding. However, further research is needed in particular regarding the choice of the parameters $\mu_{l,k}$ and the support for $N>1$ TX antennas.

\section{Concluding remarks}
This work provides novel guidelines for distributed precoding design in systems with distributed CSIT such as cell-free massive MIMO networks. By assuming full data sharing and a sum-power constraint, the proposed optimal approach outperforms previously known heuristic methods in several setups of interest. Going beyond the two chosen examples where the optimal precoders can be derived explicitly, the proposed approach can be potentially applied, exactly or using standard numerical approximations, to a wide range of practical setups. Furthermore, since it exploits the UL-DL duality principle, we remark that the proposed approach also provides optimal distributed combiners for UL operations. Although not covered for simplicity, the proposed approach may be also extended to limited data sharing, for example using network clustering techniques. Other promising lines of research include the extension to different power constraints, the revisitation of cell-free massive MIMO performance analysis involving LoS models or pilot contamination, and the development of new efficient algorithms exploiting fronthaul architectures such as tree or star topologies.
  
\section*{Acknowledgment}
We would like to thank Prof. Serdar Y\"uksel for his precious technical comments on Theorem \ref{th:quadratic_teams_new}, and Prof. Luca Sanguinetti for his valuable insights on Section \ref{ssec:localCSIT}.

\appendix
\section{Collection of Proofs }
\subsection{Proof of Theorem \ref{th:duality_WMSE}}\label{proof:duality}
Consider a dual UL network with $K$ single-antenna TXs and $L$ cooperating RXs equipped with $N$ antennas each, governed by the MIMO channel law $\rvec{y}^{\mathrm{UL}} = \sum_{k=1}^K\sqrt{Pw_k}\rvec{g}_kV_k+ \rvec{n}^{\mathrm{UL}}$,
where $\rvec{y}^{\mathrm{UL}} \in \stdset{C}^{LN}$ is the received signal at all RXs, $\begin{bmatrix}
\rvec{g}_1,\ldots,\rvec{g}_K
\end{bmatrix}=\rvec{H}^\herm$ is the dual channel matrix, $V_k\sim \CN(0,1)$ is the independent message of TX $k$, and $\rvec{n}^{\mathrm{UL}}\sim \CN(\vec{0},\vec{I})$. Then, we consider the processed channel 
$\hat{V}_k = \frac{1}{\sqrt{P}}\rvec{t}_k^\herm\rvec{y}^{\mathrm{UL}}$, where $\rvec{t}_k=\begin{bmatrix}
\rvec{t}_{1,k}^\T & \ldots & \rvec{t}_{L,k}^\T
\end{bmatrix}^\T $ is a distributed linear combiner satisfying the information constraint $ \rvec{t}_k \in \set{T}$. Let $R_k^{\mathrm{UL}} := I(V_k;\hat{V}_k)$ be achievable rates on this channel. By standard information inequalities \cite{el2011network}, we obtain
\begin{equation}
\begin{split}
I(V_k;\hat{V}_k) &= h(V_k) - h(V_k|\hat{V}_k) \\
&\geq \log(\pi e)-h(V_k-\alpha\hat{V}_k)\\
&\geq \log(\pi e)-\log(\pi e \E[|V_k-\alpha\hat{V}_k|^2]).
\end{split}
\end{equation}
Optimizing $\alpha$ according to channel statistics, i.e., choosing $\alpha = \alpha^\star$ with $\alpha^\star := \E[V_k\hat{V}_k^*]/\E[|\hat{V}_k|^2]$ being the solution of $\min_{\alpha\in \stdset{C}}\E[|V_k-\alpha\hat{V}_k|^2]$, leads to the well-known UatF bound \cite{massivemimobook}
\begin{equation}
R_k^{\mathrm{UL}} \geq R_k^{\mathrm{UatF}}:= \log(1+\mathrm{SINR}_k),
\end{equation}  
\begin{equation}
\mathrm{SINR}_k := \dfrac{w_k|\E[\rvec{t}_k^\herm\rvec{g}_k]|^2}{w_k\V[\rvec{t}_k^\herm\rvec{g}_k]+\sum_{j\neq k} w_j\E[|\rvec{t}^\herm_k\rvec{g}_j|^2]+P^{-1}\E[\|\rvec{t}_k\|^2]}.
\end{equation}
Alternatively, we can keep $\alpha \in \stdset{C}$ unoptimized and obtain the bound $R_k^{\mathrm{UL}} \geq R_k^{\mathrm{UatF}} \geq \log(\E[|V_k-\alpha\hat{V}_k|^2])^{-1}$, where after simple manipulations we recognize 
\begin{equation}
\E[|V_k-\alpha\hat{V}_k|^2] = \E\left[\left\|\alpha\vec{W}^{\frac{1}{2}}\rvec{H} \rvec{t}_k -\vec{e}_k \right\|^2\right] + \frac{\alpha^2}{P}\E[\|\rvec{t}_k\|^2] = \mathrm{MSE}_k(\alpha\rvec{t}_k).
\end{equation}
The above steps also shows that solving $\min_{\alpha \in \stdset{C},\rvec{t}_k \in \set{T}} \mathrm{MSE}_k(\alpha\rvec{t}_k)$ is equivalent to solving
\begin{equation}\label{eq:UatF_maximization}
\underset{\rvec{t}_k \in \set{T}}{\text{maximize}} \; R_k^{\mathrm{UatF}}.
\end{equation}
Furthermore, we observe that $\min_{\alpha \in \stdset{C},\rvec{t}_k \in \set{T}} \mathrm{MSE}_k(\alpha\rvec{t}_k) = \min_{\rvec{t}_k \in \set{T}} \mathrm{MSE}_k(\rvec{t}_k)$. This is because $\alpha$ is a deterministic scalar, hence $\alpha \rvec{t}_k \in \set{T}$. Therefore, Problem \eqref{eq:UatF_maximization} and Problem \eqref{eq:team_prob} have the same optimal solution $\rvec{t}_k^\star$, and the optima are related by $R_k^{\mathrm{UatF}} = \log(\mathrm{MSE}_k(\rvec{t}_k^\star))^{-1}$.

Let $R^\star_k(\vec{w})$ be the optimum of Problem \eqref{eq:UatF_maximization} for some dual UL power allocation policy $\vec{w}:=(w_1,\ldots,w_K) \in \stdset{R}_+^K$. Let then $\set{R}^{\mathrm{UatF}}$ be the union of all rate tuples $(R_1,\ldots,R_K)\in \stdset{R}^K_+$ satisfying $R_k \leq R^\star_k(\vec{w})$ $\forall k \in \set{K}$, where the union is taken over all $\vec{w}$ satisfying $\sum_{k=1}^K Pw_k \leq P_{\mathrm{sum}}$. By definition of $\set{R}^{\mathrm{UatF}}$, its Pareto boundary $\partial\set{R}^{\mathrm{UatF}}$ is composed by rate tuples of the type $(R_1,\ldots,R_K)$, $R_k=R^\star_k(\vec{w})$, achieved by some $\vec{w}$ satisfying $\sum_{k}w_k \leq K$ and by the optimal combiners $\{\rvec{t}_k\}_{k=1}^K$ solving Problem \eqref{eq:team_prob} $\forall k \in \set{K}$. It turns out that it is possible to fully characterize $\partial\set{R}^{\mathrm{UatF}}$ by restricting $\vec{w}\in \set{W}$, i.e., by using all the available power $KP=P_{\mathrm{sum}}$. This is because $R_k^{\mathrm{UatF}}$ is a continuous monotonic increasing functions of $P$, and so is its (finite) supremum over $\rvec{t}_k \in \set{T}$. Furthermore, it can be shown that \textit{all} $\vec{w}\in \set{W}$ induce Pareto optimal rate tuples. This last statement can be proven by contradiction as follows.

%Let $K=2$, and consider the function $\mathrm{SINR}_1(w_1,w_2,f_1)$. By letting $w_2 = 1-w_1$, standard calculus shows that $\mathrm{SINR}_1(w_1,1-w_1,f_1)$ is a continuous monotonic increasing function of $w_1\in [0,1]$. Therefore, its supremum over a family of functions\footnote{\textcolor{red}{I have to write somewhere that these functions must be well behaved, i.e., induce finite expectations. Also before.}} $f_1$ is also a continuous monotonic increasing function of $w_1\in [0,1]$, and so is the optimum of Problem \eqref{eq:UatF_maximization} for $k=1$. Similarly, the optimum of Problem \eqref{eq:UatF_maximization} for $k=2$ is a continuous monotonic decreasing function of $w_1\in [0,1]$. Suppose $w_1'$ induces a Pareto subotimal point $(R_1',R_2')$. Therefore $\exists (R_1,R_2)>(R_1',R_2')$ induced by some $ w_1 \in [0,1]$. By the monotonicity properties previously described, moving from $w_1'$ towards $w_1>w_1'$ implies $R_1\geq R_1'$ and $R_2\leq R_2'$, which contradicts the suboptimality assumption $(R_1',R_2') < (R_1,R_2)$. Specular argument holds for $w_1<w_1'$, showing that such suboptimal point cannot exist.

Let $\vec{w} \in \set{W}$ and suppose that $(R_1^\star(\vec{w}),\ldots,R_K^\star(\vec{w})) \notin \partial\set{R}^{\mathrm{UatF}}$, i.e., $\exists \vec{w}' \in \set{W}$, $\vec{w}'\neq \vec{w}$, s.t. $R_k^\star(\vec{w}')>R_k^\star(\vec{w})$ $\forall k\in\set{K}$.
We now build an iterative procedure which moves from $\vec{w}$ to $\vec{w}'$ and contradicts the previous supposition. Consider the following sequence of updates $\vec{w}^{(i)}:=(w_1^{(i)},\ldots,w_K^{(i)})$ for $i=0,\ldots,K-1$, where $\vec{w}^{(0)}:=\vec{w}$ and
\begin{equation}
w_k^{(i)} := \begin{cases} w_k' & \text{ if } k \leq i \\
\frac{\sum_{j>i}w'_j}{\sum_{j>i}w_j^{(i-1)}}w_k^{(i-1)} & \text{ if } k > i 
\end{cases}.
\end{equation}
The $i$-th step of the above procedure changes $w_i$ into the target $w_i'$ and scales all weights $w_k$ with $k>i$ by a common factor s.t. the constraint $\vec{w}^{(i)} \in \set{W} $ is not violated. Note that this constraint also implies that $\vec{w}^{(K-1)}=\vec{w}'$ without the need of a $K$-th update. In the following, we use properties of $R_k^\star$ inferred by the fact that $\log_2\left(1+\frac{ax}{bx + cy + d}\right)$ is continuous monotonic increasing in $x \in \stdset{R}_+$ and continuous monotonic decreasing in $y\in \stdset{R}_+$ for any $a \geq 0$, $b\geq 0$, $c\geq 0$, and $d>0$, and so is its supremum over some family of parameters $(a,b,c,d)$. At step $i=1$, assume w.l.o.g. that $w_1^{(1)}=w_1'\geq w_1$, which also implies $w_k^{(1)} = \frac{\sum_{j>i}w'_j}{\sum_{j>i}w_j^{(0)}}w_k\leq w_k$ for $k>1$. In fact, we can always reindex the RXs such that this assumption holds. When going from $x=1$ to $x = \frac{\sum_{j>i}w'_j}{\sum_{j>i}w_j^{(0)}}\leq 1$, and then subsequently from $y=1$ to $y = \frac{w_1'}{w_1}\geq 1$, the function
\begin{equation}
\log\left(1+ \dfrac{w_k|\E[\rvec{t}_k^\herm\rvec{g}_k]|^2x}{(\sum_{j> 1} w_j\E[|\rvec{t}^\herm_k\rvec{g}_j|^2]-w_k|\E[\rvec{t}_k^\herm\rvec{g}_k]|^2)x+w_1\E[|\rvec{t}^\herm_k\rvec{g}_1|^2]y+\E[\|\rvec{t}_k\|^2]/P}\right)
\end{equation}
for $k>1$ is continuous monotonically decreasing, and so is its supremum over $\rvec{t}_k \in \set{T}$. Hence, we have $R_k^\star(\vec{w}^{(1)})\leq R_k^\star(\vec{w}^{(0)})$ $\forall k>1$. At step $i=2$, we assume  w.l.o.g. that $w_2'\geq w_2^{(1)}$ (otherwise we can just properly reindex all RXs $k>2$), and similarly obtain $R_k^\star(\vec{w}^{(2)})\leq R_k^\star(\vec{w}^{(1)})$ $\forall k>2$. By continuing until step $K-1$, we finally obtain $R_K^\star(\vec{w}^{(K-1)})\leq R_K^\star(\vec{w}^{(K-2)})$, which can be combined with the previous steps leading to the desired contradiction $R_K^\star(\vec{w}')\leq R_K^\star(\vec{w})$ up to a possible reindexing, i.e., at least one rate cannot be strictly increased when moving from $\vec{w}$ to $\vec{w}'$.

The proof is concluded by invoking the duality principle between the UatF bound and the hardening bound \cite[Theorem~4.8]{massivemimobook}, which shows that $\set{R}^{\mathrm{UatF}} = \set{R}^{\mathrm{hard}}$ and that for every rate tuple $(R_1^{\mathrm{UatF}},\ldots,R_K^{\mathrm{UatF}})$ achieved by some $\{\rvec{t}_k\}_{k=1}^K$ and $\vec{w}$, there is a rate tuple $(R_1^{\mathrm{hard}},\ldots,R_K^{\mathrm{hard}}) = (R_1^{\mathrm{UatF}},\ldots,R_K^{\mathrm{UatF}})$  achievable by using the same functions $\{\rvec{t}_k\}_{k=1}^K$, and by choosing $p_k = \dfrac{\tilde{p}_kP}{\E[\|\rvec{t}_k\|^2]}$ $\forall k \in \set{K}$ with $\tilde{\vec{p}}:=[\tilde{p}_1,\ldots,\tilde{p}_K]^\T$ being the solution of $
(\vec{D}^{-1}-\vec{B})\tilde{\vec{p}} = (\vec{D}^{-1}-\vec{B}^\T)\vec{w}$, where $\vec{D} := \mathrm{diag}(d_1,\ldots,d_K)$, $d_k:= \mathrm{SINR}_k \dfrac{\E[\| \rvec{t}_k \|^2]}{|\E[\rvec{t}_k^\herm \rvec{g}_{k}]|^2}$, and where the $(k',k)$-th element of $\vec{B} \in \stdset{C}^{K\times K}$ is given by
\begin{equation}
[\vec{B}]_{k',k} = \begin{cases}\dfrac{\E[|\rvec{t}_k^\herm \rvec{g}_{k'}|^2]}{\E[\| \rvec{t}_k \|^2]} & \text{ if } k'\neq k \\
\dfrac{\E[|\rvec{t}_k^\herm \rvec{g}_{k}|^2]-|\E[\rvec{t}_k^\herm \rvec{g}_{k}]|^2}{\E[\| \rvec{t}_k \|^2]} & \text{ otherwise.} \end{cases}
\end{equation}
The above linear system is guaranteed to have a unique solution satisfying $\sum_{k=1}^K \tilde{p}_k = \sum_{k=1}^K w_k$, which implies $\sum_{l=1}^L\E[\|\rvec{x}_l\|^2] = \sum_{k=1}^Kw_kP = P_{\mathrm{sum}}$.

\subsection{Proof of Theorem \ref{th:quadratic_teams_new}}
\label{proof:quadratic_teams_new}
To avoid cumbersome notation, we omit the subscript $k$ everywhere. The proof is split into three separate lemmas. We start with a minor extension of \cite[Theorem~3]{radner1962team} obtained by introducing the constraint $\E[\|\rvec{t}_k\|^2]<\infty$ and specializing to the considered cost function.
\begin{lemma}[Existence and uniqueness]\label{lem:quadratic_teams_original} Problem~\eqref{eq:team_prob} admits a unique team optimal solution.
\end{lemma}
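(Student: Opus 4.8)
The plan is to establish existence and uniqueness separately, both leveraging the strict convexity of the cost and the Hilbert space structure of $\set{T}$. First I would equip $\set{T} = \prod_{l=1}^L \set{T}_l$ with the inner product $\langle \rvec{a}, \rvec{b}\rangle := \E[\rvec{a}^\herm \rvec{b}]$, under which each $\set{T}_l$ is a closed subspace of the Hilbert space $L^2(\Omega, \Sigma; \stdset{C}^N)$ (closedness follows because $\Sigma_l$-measurability is preserved under $L^2$ limits), and hence $\set{T}$ is itself a Hilbert space. The objective $\mathrm{MSE}(\rvec{t}) = \E[\|\rvec{H}\rvec{t} - \vec{e}\|^2 + \|\rvec{t}\|^2/P]$ can then be written as $\langle \rvec{t}, \mathcal{Q}\rvec{t}\rangle - 2\Re\langle \rvec{g}, \rvec{t}\rangle + 1$ for a suitable positive self-adjoint operator; the key point is that the quadratic part is bounded below by $\frac{1}{P}\|\rvec{t}\|^2$ (coercivity), since $\rvec{H}^\herm\rvec{H}\succeq \vec{0}$ a.s., so the functional is strongly convex.

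For uniqueness, I would argue directly: if $\rvec{t}$ and $\rvec{t}'$ are both optimal, then by the parallelogram-type identity for the strongly convex quadratic, $\mathrm{MSE}\big(\tfrac{\rvec{t}+\rvec{t}'}{2}\big) = \tfrac12\mathrm{MSE}(\rvec{t}) + \tfrac12\mathrm{MSE}(\rvec{t}') - \tfrac{1}{4P}\|\rvec{t}-\rvec{t}'\|^2$, which is strictly smaller than the common optimal value unless $\rvec{t} = \rvec{t}'$ in $\set{T}$. This is the easy half and needs only that $\set{T}$ is convex and that $\mathrm{MSE}$ is finite at the optimum. For existence, I would take a minimizing sequence $\rvec{t}^{(n)}$; coercivity ($\mathrm{MSE}(\rvec{t})\geq \frac{1}{P}\|\rvec{t}\|^2 - 2\|\rvec{g}\|\|\rvec{t}\| + 1$, using finiteness of $\E[\|\rvec{g}\|^2] = \E[\|\rvec{H}^\herm\vec{e}\|^2]<\infty$ from the finite fading power assumption) shows $\{\rvec{t}^{(n)}\}$ is bounded in $\set{T}$, hence has a weakly convergent subsequence with limit $\rvec{t}^\star \in \set{T}$ (using that $\set{T}$, being a closed subspace of a Hilbert space, is weakly closed). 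Since $\mathrm{MSE}$ is convex and strongly (hence weakly lower semi-)continuous, $\mathrm{MSE}(\rvec{t}^\star) \le \liminf \mathrm{MSE}(\rvec{t}^{(n)}) = \inf \mathrm{MSE}$, so $\rvec{t}^\star$ is optimal and finite.

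The main obstacle I anticipate is purely technical: verifying that the quadratic form $\rvec{t}\mapsto \E[\|\rvec{H}\rvec{t}\|^2]$ is actually finite and well-defined on all of $\set{T}$, and that the induced operator $\mathcal{Q}$ is bounded — this is precisely where an integrability hypothesis on $\rvec{H}$ enters. Under the blanket assumption $\E[\|\rvec{H}\|_{\mathrm{F}}^2]<\infty$ alone this need not hold for every $\rvec{t}\in\set{T}$ (one cannot separate $\E[\|\rvec{H}\rvec{t}\|^2]$ into a product of expectations), so the cleanest route is to note that the objective is still a well-defined $[0,\infty]$-valued convex functional, restrict attention to its (nonempty) effective domain, and run the coercivity-plus-weak-lower-semicontinuity argument there; the bound $\mathrm{MSE}(\rvec{t})\geq\frac{1}{P}\|\rvec{t}\|^2 - 2\E[\|\rvec{g}\|\,\|\rvec{t}\|]+1 \geq \frac{1}{P}\|\rvec{t}\|^2 - 2\sqrt{\E[\|\rvec{g}\|^2]}\,\|\rvec{t}\| + 1$ (Cauchy--Schwarz) uses only $\E[\|\rvec{H}\|_{\mathrm{F}}^2]<\infty$ and remains valid. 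I would also record that $\mathrm{MSE}$ is proper, since $\rvec{t}=\vec{0}$ gives $\mathrm{MSE}(\vec{0})=1<\infty$. Everything else — convexity, closedness of $\set{T}_l$, the strict-convexity uniqueness step — is routine Hilbert-space bookkeeping.
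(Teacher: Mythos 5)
Your proof is correct, but it takes a genuinely different route from the paper's. The paper works in the $\rvec{Q}$-weighted Hilbert space $\set{H}$ of measurable functions with $\E[\rvec{a}^\herm\rvec{Q}\rvec{a}]<\infty$, endowed with $\langle\rvec{a},\rvec{b}\rangle=\E[\rvec{b}^\herm\rvec{Q}\rvec{a}]$; it completes the square to write $\mathrm{MSE}(\rvec{t})=\|\rvec{t}-\rvec{t}_0\|_{\set{H}}^2-\|\rvec{t}_0\|_{\set{H}}^2+1$ with $\rvec{t}_0=\rvec{Q}^{-1}\rvec{g}$ (shown to lie in $\set{H}$ via a trace bound), observes that $\set{T}$ is a closed linear subspace of $\set{H}$, and obtains existence and uniqueness in one stroke from the Hilbert projection theorem. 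You instead stay in the unweighted $L^2$ geometry and run the direct method: coercivity from the $P^{-1}\|\rvec{t}\|^2$ regularizer plus Cauchy--Schwarz on the linear term, weak compactness of a minimizing sequence in the closed subspace $\set{T}$, weak lower semicontinuity of the convex functional, and a strong-convexity midpoint argument for uniqueness. What the paper's choice buys is that the weighted geometry and its orthogonality conditions are reused verbatim later (sufficiency of stationarity in Lemma~\ref{lem:sufficiency} and the suboptimality bound of Lemma~\ref{lem:bound}); what your choice buys is that you never need $\set{T}\subseteq\set{H}$, i.e.\ finiteness of $\E[\rvec{t}^\herm\rvec{Q}\rvec{t}]$ for every $\rvec{t}\in\set{T}$ --- precisely the integrability issue you flag --- since you treat $\mathrm{MSE}$ as a proper convex $[0,\infty]$-valued functional. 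Two small touch-ups rather than gaps: your midpoint relation should be stated as an inequality, since the exact identity subtracts $\tfrac14\E\bigl[\|\rvec{H}(\rvec{t}-\rvec{t}')\|^2+P^{-1}\|\rvec{t}-\rvec{t}'\|^2\bigr]$ and you keep only the second term (which is all you need); and because $\mathrm{MSE}$ may take the value $+\infty$, weak lower semicontinuity should be justified by strong lower semicontinuity (Fatou along a.s.-convergent subsequences) combined with convexity, rather than by strong continuity --- this is exactly the repair your final paragraph gestures at, and it closes the argument.
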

\begin{proof}
Let $\set{H}$ be the space of $\Sigma$-measurable functions $\rvec{a}:\Omega\to \stdset{C}^{LN}$ s.t. $\E\left[\rvec{a}^\herm\rvec{Q}\rvec{a}\right]<\infty$. We define the inner product $\langle \rvec{a},\rvec{b}\rangle := \E[\rvec{b}^\herm\rvec{Q}\rvec{a}]$, $\forall (\rvec{a},\rvec{b}) \in \set{H}^2$, and its induced norm  $\|\rvec{a}\|_{\set{H}} := \sqrt{\langle \rvec{a},\rvec{a}\rangle}$, $\forall \rvec{a} \in \set{H}$. Then, the tuple $(\set{H},\langle\cdot,\cdot \rangle)$ is a  Hilbert space\footnote{In fact, the positive matrix square root $\rvec{Q}^{\frac{1}{2}}$ induces an isometry between $(\set{H},\langle\cdot,\cdot \rangle)$ and the perhaps more familiar Hilbert space of measurable functions such that $\E\left[\|\rvec{a}\|^2\right]<\infty$, equipped with the standard inner product $\langle \rvec{a},\rvec{b}\rangle := \E[\rvec{b}^\herm\rvec{a}]$.} \cite{radner1962team}.  Let us further define $\rvec{t}_0:= \rvec{Q}^{-1}\rvec{g}$, which is the unique minimizer of $c(\vec{H},\vec{t})$ for any realization $\vec{H}$. Firstly, we observe that $\rvec{t}_0 \in \set{H}$, since
\begin{equation}\label{eq:radner_is_valid}
\begin{split}
\|\rvec{t}_0\|^2_{\set{H}}
&= \E \left[\vec{e}^\herm\rvec{H}\left(\rvec{H}^\herm\rvec{H}+P^{-1}\vec{I}\right)^{-1}\rvec{H}^\herm\vec{e}\right]\\
&\leq \E \left[ \mathrm{tr}\left(\rvec{H}\left(\rvec{H}^\herm\rvec{H}+P^{-1}\vec{I}\right)^{-1}\rvec{H}^\herm \right)\right] \\
&= \E \left[\mathrm{tr}\left(\rvec{H}^\herm\rvec{H}\left(\rvec{H}^\herm\rvec{H}+P^{-1}\vec{I}\right)^{-1}\right)\right]\\
&\leq \E \left[\mathrm{tr}\left(\left(\rvec{H}^\herm\rvec{H}+P^{-1}\vec{I}\right)\left(\rvec{H}^\herm\rvec{H}+P^{-1}\vec{I}\right)^{-1}\right)\right] = NL.
\end{split}
\end{equation}
Secondly, we observe that $\set{T}$ is a closed linear subspace of $\set{H}$ \cite[Theorem~2.6.6]{yukselbook}. Finally, simple algebraic manipulations show that the objective of Problem~\eqref{eq:team_prob} can be equivalently rewritten as $\mathrm{MSE}(\rvec{t}) = \|\rvec{t}-\rvec{t}_0\|^2_{\set{H}} -\|\rvec{t}_0\|^2_{\set{H}} + 1$. Therefore, by following \cite{radner1962team, yukselbook}, we consider the infinite dimensional orthogonal projection problem:
\begin{equation}
\label{eq:quadratic_projection}
\underset{\rvec{t}\in \set{T}}{\text{minimize}} \;\|\rvec{t}-\rvec{t}_0\|^2_{\set{H}}.
\end{equation}  
The solution to Problem~\eqref{eq:team_prob} corresponds to the projection of $\rvec{t}_0\in \set{H}$ onto the closed linear subspace $\set{T} \subseteq{\set{H}}$. By the Hilbert projection theorem, this projection is unique and always exists \cite{luenberger1997optimization}.
\end{proof} 

The following result extends Lemma~\ref{lem:quadratic_teams_original} by following similar lines as \cite[Theorem~2.6.6]{yukselbook}.
\begin{lemma}[Sufficiency of stationarity]\label{lem:sufficiency} Suppose that $\E[\|\rvec{Q}\|^2_{\mathrm{F}}]<\infty$.
Then, if $\rvec{t}^\star\in \set{T}$ is stationary, it is also the unique optimal solution to Problem~\eqref{eq:team_prob}.
\end{lemma}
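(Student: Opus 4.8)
The plan is to recognise the claim as the orthogonality characterisation of a projection in the Hilbert space built in the proof of Lemma~\ref{lem:quadratic_teams_original}. Recall that $(\set{H},\langle\cdot,\cdot\rangle)$ with $\langle\rvec{a},\rvec{b}\rangle=\E[\rvec{b}^\herm\rvec{Q}\rvec{a}]$ is a Hilbert space, $\set{T}$ is a closed linear subspace of $\set{H}$, $\rvec{t}_0:=\rvec{Q}^{-1}\rvec{g}\in\set{H}$, and $\mathrm{MSE}(\rvec{t})=\|\rvec{t}-\rvec{t}_0\|_{\set{H}}^2-\|\rvec{t}_0\|_{\set{H}}^2+1$. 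Since $\mathrm{MSE}(\rvec{t}^\star)<\infty$ by definition of a stationary solution, $\rvec{t}^\star\in\set{T}$ lies in $\set{H}$, and expanding the square gives, for every $\rvec{t}\in\set{T}$,
\[
\mathrm{MSE}(\rvec{t})-\mathrm{MSE}(\rvec{t}^\star)=\|\rvec{t}-\rvec{t}^\star\|_{\set{H}}^2+2\,\Re\langle\rvec{t}-\rvec{t}^\star,\rvec{t}^\star-\rvec{t}_0\rangle .
\]
Because $\rvec{Q}\succeq P^{-1}\vec{I}$ a.s., $\|\cdot\|_{\set{H}}$ dominates $P^{-1/2}$ times the $L^2$ norm, so $\|\rvec{t}-\rvec{t}^\star\|_{\set{H}}=0$ forces $\rvec{t}=\rvec{t}^\star$ a.s. Hence the entire lemma follows once we prove $\langle\rvec{\tau},\rvec{t}^\star-\rvec{t}_0\rangle=0$ for every $\rvec{\tau}\in\set{T}$: the displayed identity then becomes $\mathrm{MSE}(\rvec{t})-\mathrm{MSE}(\rvec{t}^\star)=\|\rvec{t}-\rvec{t}^\star\|_{\set{H}}^2\ge 0$, strict unless $\rvec{t}=\rvec{t}^\star$, and comparing with Lemma~\ref{lem:quadratic_teams_original} identifies $\rvec{t}^\star$ with the unique team-optimal solution.

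Next I would verify the orthogonality relation transmitter by transmitter. Since $\langle\cdot,\cdot\rangle$ is a finite bilinear form on $\set{H}$ and $\set{T}=\prod_{l}\set{T}_l$, it is enough to take $\rvec{\tau}$ with a single nonzero block $\rvec{\tau}_l\in\set{T}_l$ and sum the $L$ resulting contributions. For such $\rvec{\tau}$, using $\rvec{Q}^\herm=\rvec{Q}$ and $\rvec{Q}\rvec{t}_0=\rvec{g}$ a.s.\ one has, pointwise, $(\rvec{t}^\star-\rvec{t}_0)^\herm\rvec{Q}\rvec{\tau}=\rvec{Z}_l^\herm\rvec{\tau}_l$, where $\rvec{Z}_l$ is the $l$-th block of $\rvec{Q}\rvec{t}^\star-\rvec{g}$, i.e.\ $\rvec{Z}_l=\rvec{Q}_{l,l}\rvec{t}^\star_l+\sum_{j\ne l}\rvec{Q}_{l,j}\rvec{t}^\star_j-\rvec{g}_l$, and Cauchy--Schwarz in $\set{H}$ shows $\rvec{Z}_l^\herm\rvec{\tau}_l\in L^1$ with $\E[\rvec{Z}_l^\herm\rvec{\tau}_l]=\langle\rvec{\tau},\rvec{t}^\star-\rvec{t}_0\rangle$. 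The hypothesis $\E[\|\rvec{Q}\|_{\mathrm{F}}^2]<\infty$ together with $\E[\|\rvec{t}^\star\|^2]<\infty$ (which holds because $\rvec{t}^\star\in\set{T}$) gives, by Cauchy--Schwarz, $\rvec{Q}\rvec{t}^\star\in L^1$, and $\rvec{g}\in L^1$ as well; hence $\rvec{Z}_l\in L^1$ and $\E[\rvec{Z}_l\mid S_l]$ is well defined. Pulling the $S_l$-measurable $\rvec{t}^\star_l$ out of the conditioning, the stationarity conditions \eqref{eq:stationary_evaluated} — which, under $\E[\|\rvec{Q}\|_{\mathrm{F}}^2]<\infty$, are exactly the evaluation of the defining conditions \eqref{eq:stationary} — state precisely that $\E[\rvec{Z}_l\mid S_l]=\vec{0}$ a.s.

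It then remains to pass from $\E[\rvec{Z}_l\mid S_l]=\vec{0}$ to $\E[\rvec{Z}_l^\herm\rvec{\tau}_l]=0$. Since $\rvec{\tau}_l$ need only be square-integrable I would not invoke the conditional pull-out property directly but truncate: with $\rvec{\tau}_l^{(n)}:=\rvec{\tau}_l\,\mathbbm{1}\{\|\rvec{\tau}_l\|\le n\}$, each $\rvec{\tau}_l^{(n)}$ is a bounded $S_l$-measurable function, so $\E[\rvec{Z}_l^\herm\rvec{\tau}_l^{(n)}]=\E[\E[\rvec{Z}_l\mid S_l]^\herm\rvec{\tau}_l^{(n)}]=0$; since $\rvec{Z}_l^\herm\rvec{\tau}_l^{(n)}\to\rvec{Z}_l^\herm\rvec{\tau}_l$ a.s.\ and $|\rvec{Z}_l^\herm\rvec{\tau}_l^{(n)}|\le|\rvec{Z}_l^\herm\rvec{\tau}_l|\in L^1$, dominated convergence yields $\E[\rvec{Z}_l^\herm\rvec{\tau}_l]=0$. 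Summing over the $L$ single-block test functions gives $\langle\rvec{\tau},\rvec{t}^\star-\rvec{t}_0\rangle=0$ for all $\rvec{\tau}\in\set{T}$, completing the argument.

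The main obstacle is exactly the integrability bookkeeping that the uniform bound $\rvec{Q}\prec B\vec{I}$ of Theorem~\ref{th:quadratic_teams} rendered automatic: one must certify that $\rvec{Q}\rvec{t}^\star$, the residuals $\rvec{Z}_l$, and every conditional expectation appearing in \eqref{eq:stationary_evaluated} are genuine finite $L^1$ objects — this is where the second-moment condition $\E[\|\rvec{Q}\|_{\mathrm{F}}^2]<\infty$ enters, paired through Cauchy--Schwarz with the square-integrable precoders — and that conditional expectations may be moved across an unbounded $S_l$-measurable test function, which is handled by the truncation step. A subsidiary point also requiring care is that under this weaker hypothesis the stationarity notion \eqref{eq:stationary} is still equivalent to its evaluated form \eqref{eq:stationary_evaluated}, i.e.\ that one may differentiate the conditional cost under the conditional expectation.
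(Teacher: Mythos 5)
Your proposal is correct and follows essentially the same route as the paper's proof: both cast the problem as projection onto the closed subspace $\set{T}$ of the $\rvec{Q}$-weighted Hilbert space, reduce optimality to the orthogonality conditions $\langle \rvec{t}^\star-\rvec{t}_0,\rvec{\tau}\rangle=0$ for all $\rvec{\tau}\in\set{T}$, and deduce these from the stationarity conditions \eqref{eq:stationary_evaluated} via the law of total expectation, with $\E[\|\rvec{Q}\|_{\mathrm{F}}^2]<\infty$ and Cauchy--Schwarz guaranteeing the needed integrability. Your only additions are extra bookkeeping the paper leaves implicit (explicit expansion of the squared norm instead of quoting the projection theorem's characterization, and the truncation/dominated-convergence step for pulling the square-integrable $S_l$-measurable test function through the conditional expectation), which is consistent with, not different from, the paper's argument.
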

\begin{proof}
Let us consider again the equivalent problem \eqref{eq:quadratic_projection}. Since  $\set{T}$ is a closed linear subspace, the Hilbert projection theorem also states that a solution $\rvec{t}^\star \in \set{T}$ is the unique optimal solution if and only if the following orthogonality conditions \cite{luenberger1997optimization} hold: $(\forall \rvec{t} \in \set{T})$
\begin{equation}\label{eq:orthogonality_condition}
\begin{gathered}
\langle \rvec{t}^\star-\rvec{t}_0, \rvec{t} \rangle = 0, \\
\iff \E\left[\rvec{t}^\herm\rvec{Q}\left(\rvec{t}^\star-\rvec{Q}^{-1}\rvec{g}\right)\right] = 0, \\
\iff\E\left[\sum_{l=1}^L \rvec{t}_l^\herm\left(\E[\rvec{Q}_{l,l}|S_l]\rvec{t}^\star_l + \sum_{j\neq l}\mathbb{E}[\rvec{Q}_{l,j}\rvec{t}^\star_j|S_l] - \E[\rvec{g}_l|S_l]\right) \right] = 0,
\end{gathered}
\end{equation}
where the last equality follows by the law of total expectation, provided that the inner expectations are finite. Finiteness of $\E[\rvec{g}_l|S_l]$ and $\E[\rvec{Q}_{l,l}|S_l]$ follows by the assumption $\E[\|\rvec{H}\|_{\mathrm{F}}^2] < \infty$. Finiteness of $\mathbb{E}[\rvec{Q}_{l,j}\rvec{t}^\star_j|S_l]$ follows by applying the Cauchy-Schwarz inequality elementwise, and by using $\E[\|\rvec{t}_j\|^2] < \infty$ and $\E[\|\rvec{Q}\|_{\mathrm{F}}^2] < \infty$. The proof is concluded by observing that if the stationary conditions in \eqref{eq:stationary_evaluated} are satisfied for some $\rvec{t}^\star$, then the orthogonality conditions are satisfied and $\rvec{t}^\star$ is the unique optimal solution.
\end{proof}

To conclude the proof, it remains to show the converse statement of Lemma \ref{lem:sufficiency}. In the following, we depart from \cite[Theorem~2.6.6]{yukselbook} and use a different argument tailored to the cost function considered in here.
\begin{lemma}[Necessity of stationarity]\label{lem:necessity}
Suppose that $\E[\|\rvec{Q}\|^2_{\mathrm{F}}]<\infty$. Then, if $\rvec{t}^\star\in \set{T}$ is the unique optimal solution to Problem~\eqref{eq:team_prob}, it is also stationary.
\end{lemma}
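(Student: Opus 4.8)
The plan is to derive the stationarity conditions \eqref{eq:stationary_evaluated} (equivalently \eqref{eq:stationary}) from the orthogonality characterization of the Hilbert projection established in the proof of Lemma~\ref{lem:quadratic_teams_original}, using a truncation argument to get around the fact that the residuals in \eqref{eq:stationary_evaluated} need not be square-integrable (recall that the subscript $k$ is omitted throughout this appendix). By Lemma~\ref{lem:quadratic_teams_original}, the unique optimal $\rvec{t}^\star$ is the orthogonal projection of $\rvec{t}_0=\rvec{Q}^{-1}\rvec{g}$ onto the closed subspace $\set{T}$ of $(\set{H},\langle\cdot,\cdot\rangle)$, so the Hilbert projection theorem gives $\langle\rvec{t}^\star-\rvec{t}_0,\rvec{t}\rangle=\E[\rvec{t}^\herm\rvec{Q}(\rvec{t}^\star-\rvec{t}_0)]=0$ for every $\rvec{t}\in\set{T}$; moreover $\mathrm{MSE}(\rvec{t}^\star)=\|\rvec{t}^\star-\rvec{t}_0\|_{\set{H}}^2-\|\rvec{t}_0\|_{\set{H}}^2+1<\infty$ since $\rvec{t}^\star,\rvec{t}_0\in\set{H}$.

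First I would feed into this orthogonality condition test functions that vanish on all blocks but one and are bounded. Fix $l\in\set{L}$, and for a bounded $\Sigma_l$-measurable $\rvec{a}\colon\Omega\to\stdset{C}^N$ let $\rvec{t}$ have $l$-th block $\rvec{a}$ and all other blocks $\vec{0}$; then $\rvec{t}\in\set{T}$, because $\E[\rvec{a}^\herm\rvec{Q}_{l,l}\rvec{a}]\le\|\rvec{a}\|_\infty^2\,\E[\|\rvec{Q}\|_{\mathrm{F}}]<\infty$ under the hypothesis $\E[\|\rvec{Q}\|_{\mathrm{F}}^2]<\infty$. Using $\rvec{Q}\rvec{t}_0=\rvec{g}$, the orthogonality condition for this $\rvec{t}$ reads $\E\big[\rvec{a}^\herm\big(\sum_{j}\rvec{Q}_{l,j}\rvec{t}^\star_j-\rvec{g}_l\big)\big]=0$, and by iterated conditioning on $S_l$ — which is legitimate because, by elementwise Cauchy–Schwarz, the bounds $\E[\|\rvec{Q}\|_{\mathrm{F}}^2]<\infty$, $\E[\|\rvec{H}\|_{\mathrm{F}}^2]<\infty$ and $\E[\|\rvec{t}^\star_j\|^2]<\infty$ make the integrand absolutely integrable and the conditional expectations a.s.\ finite, exactly as in the proof of Lemma~\ref{lem:sufficiency} — this becomes $\E[\rvec{a}^\herm\rvec{z}_l]=0$, where $\rvec{z}_l:=\E[\rvec{Q}_{l,l}\,|\,S_l]\rvec{t}^\star_l+\sum_{j\neq l}\E[\rvec{Q}_{l,j}\rvec{t}^\star_j\,|\,S_l]-\E[\rvec{g}_l\,|\,S_l]$ is the $\Sigma_l$-measurable residual of \eqref{eq:stationary_evaluated}.

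Next comes the truncation step. For $M>0$, the choice $\rvec{a}=\mathbbm{1}_{\{\|\rvec{z}_l\|\le M\}}\,\rvec{z}_l$ is an admissible bounded $\Sigma_l$-measurable test function, and it yields $\E\big[\mathbbm{1}_{\{\|\rvec{z}_l\|\le M\}}\|\rvec{z}_l\|^2\big]=0$ for every $M$; letting $M\to\infty$ and invoking monotone convergence gives $\E[\|\rvec{z}_l\|^2]=0$, hence $\rvec{z}_l=\vec{0}$ a.s. Since $l\in\set{L}$ was arbitrary, all residuals in \eqref{eq:stationary_evaluated} vanish a.s., so \eqref{eq:stationary_evaluated} — and hence \eqref{eq:stationary}, all of whose expectations are finite under the present hypothesis — holds; combined with $\mathrm{MSE}(\rvec{t}^\star)<\infty$ this shows $\rvec{t}^\star$ is stationary, completing the argument.

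I expect the main obstacle to be exactly the point where the bounded-support argument behind Theorem~\ref{th:quadratic_teams} breaks down: the residual $\rvec{z}_l$ need not belong to $\set{T}_l$, so it cannot be tested against itself directly. The truncation repairs this, but one has to check carefully that each truncated test function genuinely lies in $\set{H}$ (so that orthogonality applies to it) and that the reduction of $\langle\rvec{t}^\star-\rvec{t}_0,\rvec{t}\rangle$ to $\E[\rvec{a}^\herm\rvec{z}_l]$ through the tower property is valid — both of which rely on the moment bound $\E[\|\rvec{Q}\|_{\mathrm{F}}^2]<\infty$ rather than on uniform boundedness of $\rvec{Q}$.
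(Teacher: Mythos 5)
Your proof is correct, but it takes a genuinely different route from the paper. The paper proves necessity via person-by-person optimality: it relaxes the constraint $\E[\|\rvec{t}_l\|^2]<\infty$, exchanges the minimum with the outer expectation, exhibits the explicit conditional best response $\rvec{t}_l^{\star\star}(S_l)=(\E[\rvec{Q}_{l,l}|S_l])^{-1}\bigl(\E[\rvec{g}_l|S_l]-\sum_{j\neq l}\E[\rvec{Q}_{l,j}\rvec{t}_j^\star|S_l]\bigr)$, and then verifies a posteriori that $\rvec{t}_l^{\star\star}\in\set{T}_l$ because $\frac{1}{P}\E[\|\rvec{t}_l^{\star\star}\|^2]$ is one of the nonnegative summands of a finite MSE, forcing all inequalities to be equalities. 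You instead exploit the same Hilbert-space machinery used for sufficiency: necessity of the orthogonality conditions \eqref{eq:orthogonality_condition} from the projection theorem, tested against bounded block-sparse $\Sigma_l$-measurable functions, followed by the truncation $\rvec{a}=\mathbbm{1}_{\{\|\rvec{z}_l\|\le M\}}\rvec{z}_l$ and monotone convergence to conclude $\rvec{z}_l=\vec{0}$ a.s.; the integrability checks you invoke (elementwise Cauchy--Schwarz with $\E[\|\rvec{Q}\|_{\mathrm{F}}^2]<\infty$, $\E[\|\rvec{H}\|_{\mathrm{F}}^2]<\infty$, $\E[\|\rvec{t}_j^\star\|^2]<\infty$) are exactly the ones needed and mirror those in Lemma~\ref{lem:sufficiency}. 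Your route buys a unified treatment (optimality $\Leftrightarrow$ orthogonality $\Leftrightarrow$ stationarity within $\set{T}$) and sidesteps both the min--expectation interchange and any need to manipulate $(\E[\rvec{Q}_{l,l}|S_l])^{-1}$, with the truncation neatly handling the fact that $\rvec{z}_l$ is not known a priori to be square-integrable; the paper's route, in exchange, stays closer to the team-theoretic person-by-person narrative and delivers the explicit best-response form of the optimal $\rvec{t}_l^\star$ as a byproduct. The only cosmetic slip is the bound $\E[\rvec{a}^\herm\rvec{Q}_{l,l}\rvec{a}]\le\|\rvec{a}\|_\infty^2\,\E[\|\rvec{Q}\|_{\mathrm{F}}]$, whose right-hand side is finite already under $\E[\|\rvec{H}\|_{\mathrm{F}}^2]<\infty$ (and a fortiori under the stated hypothesis), so nothing breaks; like the paper, you also implicitly rely on $\set{T}$ being a closed subspace of $(\set{H},\langle\cdot,\cdot\rangle)$, which is inherited from Lemma~\ref{lem:quadratic_teams_original}.
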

\begin{proof}
We start by using the so-called notion of \textit{person-by-person optimality} \cite{radner1962team,yukselbook}. Similarly to the notion of Nash equilibrium, it states that a necessary condition for a tuple $\rvec{t}^\star$ to be globally optimal is that it must satisfy
\begin{equation}
\mathrm{MSE}(\rvec{t}^\star) = \underset{\rvec{t}_l \in \set{T}_l}{\text{min}} \mathrm{MSE}(\rvec{t}^\star_{-l},\rvec{t}_l),\quad \forall l \in \set{L}.
\end{equation}
We relax the the above conditions by letting $\set{T}_{l,\mathrm{unc}}$ be the unconstrained version of $\set{T}_l$, i.e., where we remove the constraint $\E[\|\rvec{t}_l\|^2]<\infty$. We then have 
\begin{equation}\label{eq:relaxation_chain}
\begin{split}
\infty > \mathrm{MSE}(\rvec{t}^\star) &= \underset{\rvec{t}_l \in \set{T}_l}{\text{min}} \mathrm{MSE}(\rvec{t}^\star_{-l},\rvec{t}_l) \\
 &\geq  \underset{\rvec{t}_l \in \set{T}_{l,\mathrm{unc}}}{\text{min}} \mathrm{MSE}(\rvec{t}^\star_{-l},\rvec{t}_l)\\
 &= \underset{\rvec{t}_l \in \set{T}_{l,\mathrm{unc}}}{\text{min}} \E\left[\E\left[c\left(\rvec{H},\rvec{t}^\star_{-l},\rvec{t}_l\right)\middle|S_l\right]\right]\\
&\geq \E\left[\underset{\rvec{t}_l \in \set{T}_{l,\mathrm{unc}}}{\text{min}}\E\left[c\left(\rvec{H},\rvec{t}^\star_{-l},\rvec{t}_l\right)\middle|S_l\right]\right]\\
&= \E\left[\E\left[c\left(\rvec{H},\rvec{t}^\star_{-l},\rvec{t}^{\star\star}_l \right)\middle|S_l\right] \right]\\
&= \mathrm{MSE}(\rvec{t}^\star_{-l},\rvec{t}^{\star\star}_l)
\end{split}
\end{equation}
where $\rvec{t}_l^{\star\star}$ is given by the first-order optimality condition $\nabla_{\vec{t}_l}\phi_l(S_l,\vec{t}_l)=\vec{0}$ a.s. applied to the convex function $\phi_l(s_l,\vec{t}_l):=\E\left[c(\rvec{H},\rvec{t}^\star_{-l},\vec{t}_l)\middle|S_l=s_l\right]$. Note that 
\begin{equation}
\rvec{t}^{\star\star}_l(S_l) := (\E[\rvec{Q}_{l,l}|S_l])^{-1}\left(\E[\rvec{g}_l|S_l]-\sum_{j\neq l}\E[\rvec{Q}_{l,j}\rvec{t}^\star_j|S_l]\right) \in \set{T}_{l,\mathrm{unc}},
\end{equation}
because it is given by sums and products of measurable functions (we recall that $\rvec{Q}_{l,l} \succ \vec{0}$), and that all the expectations are finite as discussed in the proof of Lemma~\ref{lem:sufficiency}. Finally, we observe that $\rvec{t}_l^{\star\star} \in \set{T}_l$, i.e., $\E[\|\rvec{t}^{\star\star}_l\|^2]<\infty$, because from the original problem formulation $\eqref{eq:team_prob}$ we notice that $\mathrm{MSE}(\rvec{t}^\star_{-l},\rvec{t}^{\star\star}_l)$ is given by a sum of non-negative terms, one of which is precisely $\frac{1}{P}\E[\|\rvec{t}^{\star\star}_l\|^2]$, and $\mathrm{MSE}(\rvec{t}^\star_{-l},\rvec{t}^{\star\star}_l) \leq \mathrm{MSE}(\rvec{t}^\star)<\infty$. Therefore, the inequalities in \eqref{eq:relaxation_chain} are equalities, and the optimal solution must satisfy $
\mathrm{MSE}(\rvec{t}^\star) = \mathrm{MSE}(\rvec{t}^\star_{-l},\rvec{t}^{\star\star}_l)$, $ \forall l\in\set{L}$. This proves that an optimal solution must satisfy the stationarity conditions given by \eqref{eq:stationary_evaluated}.
\end{proof}

\subsection{Proof of Lemma \ref{lem:bound}}\label{proof:bound}
We use the same notation and definitions as in the proof of Theorem \ref{th:quadratic_teams_new} given in Appendix~\ref{proof:quadratic_teams_new}. The optimality gap can be expressed as follows:
\begin{equation}
\begin{split}
&\mathrm{MSE}(\rvec{t}) - \mathrm{MSE}(\rvec{t}^\star)\\
&= \|\rvec{t}-\rvec{t}_0\|_{\set{H}}^2-\|\rvec{t}^\star-\rvec{t}_0\|_{\set{H}}^2\\
&\overset{(a)}{=} \|\rvec{t}-\rvec{t}^\star\|_{\set{H}}^2\\
&\overset{(b)}{=} \langle \rvec{t}-\rvec{t}^\star,\rvec{t}-\rvec{t}^\star \rangle + \langle \rvec{t}^\star-\rvec{t}_0,\rvec{t}-\rvec{t}^\star \rangle\\
&= \langle \rvec{t}-\rvec{t}_0,\rvec{t}-\rvec{t}^\star \rangle \\
&\overset{(c)}{=}\E\left[\sum_{l=1}^L (\rvec{t}_l-\rvec{t}_l^\star)^\herm\left(\E[\rvec{Q}_{l,l}|S_l]\rvec{t}_l(S_l) + \sum_{j\neq l}\E[\rvec{Q}_{l,j}\rvec{t}_j|S_l] - \E[\rvec{g}_l|S_l]\right) \right],
\end{split}
\end{equation}
where $(a)$ follows from Pythagoras' theorem, $(b)$ from the orthogonality condition $\langle \rvec{t}^\star-\rvec{t}_0,\rvec{a} \rangle = 0$, $\forall \rvec{a} \in \set{T}$ and $\rvec{t}-\rvec{t}^\star \in \set{T}$, and $(c)$ by applying the law of total expectation as in \eqref{eq:orthogonality_condition}. Then, the proof follows from
\begin{equation}
\begin{split}
\|\rvec{t}-\rvec{t}^\star\|_{\set{H}}^2 &= \E[(\rvec{t}-\rvec{t}^\star)^\herm \rvec{z}]\\
&= \langle\rvec{Q}^{-1}\rvec{z},\rvec{t}-\rvec{t}^\star \rangle\\
&\leq \|\rvec{Q}^{-1}\rvec{z}\|_{\set{H}}\|\rvec{t}-\rvec{t}^\star\|_{\set{H}},
\end{split}
\end{equation}
where the last step is the Cauchy–Schwarz inequality, and where we use $\rvec{Q}^{-1} \preceq P\vec{I}$ which ensures $
\|\rvec{Q}^{-1}\rvec{z}\|^2_{\set{H}} = \mathbb{E}[\rvec{z}^\herm\rvec{Q}^{-1}\rvec{z}]\leq P\E[\rvec{z}^\herm\rvec{z}]< \infty$.

\subsection{Proof of Theorem \ref{th:teamMMSE_local} (additional details)}
\label{proof:teamMMSE_local}
We rearrange the system at hand as $(\vec{D}+\vec{U}\vec{\Pi}^\T) \vec{C}= \vec{U}$, where $\vec{C}:= \begin{bmatrix}
\vec{C}_1 & \ldots &\vec{C}_L 
\end{bmatrix}^\T$, $\vec{\Pi}:= \begin{bmatrix}
\vec{\Pi}_1 & \ldots &\vec{\Pi}_L 
\end{bmatrix}^\T$, $\vec{U}:= \begin{bmatrix}
\vec{I}_K & \ldots &\vec{I}_K 
\end{bmatrix}^\T $, $\vec{D}:=\mathrm{diag}(\vec{I}-\vec{\Pi}_1,\ldots,\vec{I}-\vec{\Pi}_L)$. The proof follows if $\vec{D}+\vec{U}\vec{\Pi}^\T$ is invertible, giving the optimal coefficients $\vec{C}=(\vec{D}+\vec{U}\vec{\Pi}^\T)^{-1}\vec{U}$. By Lemma~\ref{lem:woodbury} given in Appendix \ref{ssec:algebra}, $\vec{D}+\vec{U}\vec{\Pi}^\T$ is invertible if both $\vec{D}$ and $\vec{D}^{-1} + \vec{\Pi}^\T\vec{U} = \vec{D}^{-1} + \sum_{l}\vec{\Pi}_l$ are invertible. Standard arguments show that $\vec{0}\preceq \vec{\Pi}_l \prec \vec{I}$. Therefore, $\vec{D}$ is Hermitian positive definite (hence invertible), and so is $\vec{D}^{-1} + \sum_{l}\vec{\Pi}_l$, concluding the proof.

\subsection{Proof of Theorem \ref{th:teamMMSE_undirectional} (additional details)}
\label{proof:teamMMSE_unidirectional}
Assume $\vec{0}\preceq \vec{\Pi}_l \prec \vec{I}$ for a fixed $l\in \set{L}$. Then, let $\rvec{Q}_l := \hat{\rvec{H}}_l^\herm\hat{\rvec{H}}_l + \vec{\Sigma}_l + P^{-1}\vec{I}$ and observe that this implies $
\rvec{Q}_l -\hat{\rvec{H}}_l^\herm \vec{\Pi}_l\hat{\rvec{H}}_l = \hat{\rvec{H}}_l^\herm(\vec{I}- \vec{\Pi}_l)\hat{\rvec{H}}_l + \vec{\Sigma}_l + P^{-1}\vec{I} \succ \vec{0}.$
Therefore, we obtain
\begin{equation}
\begin{split}
\hat{\rvec{H}}_l\left(\rvec{Q}_l - \hat{\rvec{H}}_l^\herm\vec{\Pi}_l\hat{\rvec{H}}_l\right)^{-1}\hat{\rvec{H}}_l^\herm  
&= \hat{\rvec{H}}_l\left[\rvec{Q}_l\left(\vec{I} - \rvec{F}_l\vec{\Pi}_l\hat{\rvec{H}}_l\right)\right]^{-1}\hat{\rvec{H}}_l^\herm \\
&\overset{(a)}{=} \hat{\rvec{H}}_l\left(\vec{I} - \rvec{F}_l\vec{\Pi}_l\hat{\rvec{H}}_l\right)^{-1}\rvec{F}_l \\
&\overset{(b)}{=} \rvec{P}_l \left(\vec{I} - \vec{\Pi}_l\rvec{P}_l\right)^{-1} 
\end{split}
\end{equation}
where $(a)$ and $(b)$ follow from Lemma~\ref{lem:inverse_product} and Lemma~\ref{lem:push_through} given in Appendix \ref{ssec:algebra}, respectively. These lemmas ensure that all the above inverses exist, and in particular $(\vec{I}-\vec{\Pi}_l\rvec{P}_l)^{-1}$. Furthermore, the above chain of equalities also show that
\begin{equation}
\begin{split}
\vec{\Pi}_{l-1} &= \E[\rvec{P}_l\rvec{V}_l] + \vec{\Pi}_l\E[\bar{\rvec{V}}_l] \\
&= \vec{\Pi}_l + (\vec{I}-\vec{\Pi}_l)\E[\rvec{P}_l \left(\vec{I} - \vec{\Pi}_l\rvec{P}_l\right)^{-1} ](\vec{I}- \vec{\Pi}_l)\\
&= \vec{\Pi}_l + (\vec{I}-\vec{\Pi}_l)^{\frac{1}{2}}\E[\tilde{\rvec{P}}_l](\vec{I}-\vec{\Pi}_l)^{\frac{1}{2}},
\end{split}
\end{equation}
where $\tilde{\rvec{P}}_l:= \tilde{\rvec{H}}_l\left(\tilde{\rvec{H}}_l^\herm\tilde{\rvec{H}}_l + \vec{\Sigma}_l + P^{-1}\vec{I}\right)^{-1}\tilde{\rvec{H}}_l^\herm$, and $\tilde{\rvec{H}}_l:= (\vec{I}-\vec{\Pi}_l)^{\frac{1}{2}}\hat{\rvec{H}}_l$. By standard argument, it can be shown that $\vec{0} \preceq \tilde{\rvec{P}}_l \prec \vec{I}$ holds, and hence $\vec{0}\preceq\vec{\Pi}_{l-1}\prec \vec{I}$. Overall, the above discussion proves that $\vec{0}\preceq\vec{\Pi}_l\prec \vec{I}$ implies the existence of $(\vec{I}-\vec{\Pi}_l\rvec{P}_l)^{-1}$ and that $\vec{0}\preceq\vec{\Pi}_{l-1}\prec \vec{I}$. By finally observing that $\vec{0}\preceq\rvec{P}_{L-1} \prec \vec{I}$ and hence $\vec{0}\preceq\vec{\Pi}_{L-1} = \E[\rvec{P}_{L-1}] \prec \vec{I}$, the proof is concluded by repeating the previous argument recursively.

\subsection{Proof of Lemma \ref{lem:bound_high_SNR}}
\label{proof:bound_high_SNR}
We recall the following results from random matrix theory, provided without proof: for $\mathrm{vec}(\rvec{H}_l)\sim \CN(\vec{0},\vec{I})$, we have $\E[\rvec{H}_l(\rvec{H}_l^\herm \rvec{H}_l)^{-1}\rvec{H}_l^\herm] = \frac{N}{K}\vec{I}$ and  $\E[\rvec{H}_l(\rvec{H}_l^\herm \rvec{H}_l)^{-2}\rvec{H}_l^\herm] = \frac{N}{K(K-N)}\vec{I}$. We define the projection matrix $\rvec{P}_l := \rvec{H}_l(\rvec{H}_l^\herm \rvec{H}_l)^{-1}\rvec{H}_l^\herm$ onto $\text{span}(\rvec{H}_l)$, the projection matrix $\rvec{P}_l^\perp := \vec{I}-\rvec{P}_l$ onto its orthogonal complement, and let $\rvec{t}_k = (\rvec{t}_{1,k},\ldots,\rvec{t}_{L,k})$ as in \eqref{eq:sequentialZF}. A simple recursive calculation shows the identity $
\vec{e}_k-\sum_{j=1}^l\rvec{H}_j\rvec{t}_{j,k} = \rvec{P}_l^\perp\rvec{P}_{l-1}^\perp\ldots\rvec{P}_1^\perp\vec{e}_k$.
The first part of the objective in \eqref{eq:team_prob} is then given by  
\begin{equation}
\begin{split}
\E\left[\left\|\vec{e}_k-\sum_{j=1}^L\rvec{H}_j\rvec{t}_{j,k}\right\|^2\right] &=  \vec{e}_k^\herm\E\left[\rvec{P}_1^\perp,\ldots,\rvec{P}_{L-1}^\perp\rvec{P}_L^\perp\rvec{P}_L^\perp\rvec{P}_{L-1}^\perp\ldots\rvec{P}_1^\perp\right]\vec{e}_k\\
&= \left(1-\frac{N}{K}\right)\vec{e}_k^\herm\E\left[\rvec{P}_1^\perp,\ldots,\rvec{P}_{L-1}^\perp\rvec{P}_{L-1}^\perp\ldots\rvec{P}_1^\perp\right]\vec{e}_k\\
&= \left(1-\frac{N}{K}\right)^L,
\end{split}
\end{equation} 
where we used the Hermitian symmetry and idempotency of projection matrices, and the independence between $\rvec{H}_l$ and $\{\rvec{H}_j\}_{j\neq l}$. We now measure the suboptimality of $\rvec{t}_k$ by using Lemma \ref{lem:bound}, specialized to the current setting similarly to Lemma \ref{lem:stationarity_imperfect}. We have:
\begin{equation}
\begin{split}
\rvec{z}_{l,k}(S_l) &= \left(\rvec{H}_l^\herm\rvec{H}_l+P^{-1}\vec{I}\right)\rvec{t}_{l,k}(S_l) + 
\rvec{H}_l^\herm\left( \sum_{j\neq l} \E\left[\rvec{H}_j\rvec{t}_{j,k}\Big|S_l\right] - \vec{e}_k\right) \\
&= P^{-1}\rvec{t}_{l,k}(S_l) - \rvec{H}_l^\herm \E\left[\rvec{P}_L^\perp\rvec{P}_{L-1}^\perp\ldots\rvec{P}_1^\perp\vec{e}_k \Big|S_l \right] \\
&= P^{-1}\rvec{t}_{l,k}(S_l)-\left(1-\frac{N}{K}\right)^{L-l+1}\rvec{H}_l^\herm \rvec{P}_l^\perp\ldots\rvec{P}_1^\perp\vec{e}_k \\
&= P^{-1}\rvec{t}_{l,k}(S_l),
\end{split}
\end{equation}
where the last step follows from the definition of projection matrices, which gives $\rvec{H}_l^\herm \rvec{P}_l^\perp \vec{b} = \vec{0}$  for any $\vec{b}\in \stdset{C}^K$. Furthermore, we have 
\begin{equation}
\begin{split}
\E\left[\left\|\rvec{t}_{l,k}\right\|^2 \right] &= \E\left[\left(\vec{e}_k-\sum_{j=1}^{l-1}\rvec{H}_j\rvec{t}_{j,k}\right)^\herm \rvec{H}_l(\rvec{H}_l^\herm \rvec{H}_l)^{-2}\rvec{H}_l^\herm\left(\vec{e}_k-\sum_{j=1}^{l-1}\rvec{H}_j\rvec{t}_{j,k}\right)\right]\\
&= \dfrac{N}{K(N-K)}\left(1-\frac{N}{K}\right)^{l-1} < \infty.
\end{split}
\end{equation}
Therefore, Lemma \ref{lem:bound} applies and, by using the looser bound in \eqref{eq:bound}, we readily obtain $
\mathrm{MSE}_k(\rvec{t}_k)-\mathrm{MSE}_k(\rvec{t}_k^\star) \leq \frac{1}{P}\sum_{l=1}^L \E\left[\left\|\rvec{t}_{l,k}\right\|^2 \right] \underset{P\to \infty}{\longrightarrow} 0$, and $\mathrm{MSE}_k(\rvec{t}_k^\star) \geq \left(1-\frac{N}{K}\right)^L$.

\subsection{Linear algebra background}
\label{ssec:algebra}
\begin{lemma}[Woodbury matrix identity]\label{lem:woodbury}
Let $\vec{A}\in \stdset{C}^{n\times n}$, $\vec{B} \in \stdset{C}^{n\times m}$, $\vec{C} \in \stdset{C}^{m\times n}$, and $\vec{D} \in \stdset{C}^{m\times m}$. If $\vec{A}$, $\vec{C}$, and $\vec{D}^{-1} + \vec{C}\vec{A}^{-1}\vec{B}$ are invertible, then $\vec{A}+ \vec{B}\vec{D}\vec{C}$ is invertible and $
\left(\vec{A}+ \vec{B}\vec{D}\vec{C}\right)^{-1} = \vec{A}^{-1} - \vec{A}^{-1}\vec{B} \left(\vec{D}^{-1} + \vec{C}\vec{A}^{-1}\vec{B} \right)^{-1}\vec{C} \vec{A}^{-1}$.
\end{lemma}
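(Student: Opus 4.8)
The plan is to prove the identity by direct verification: produce the candidate inverse and check it works. Write $\vec{K} := \vec{D}^{-1}+\vec{C}\vec{A}^{-1}\vec{B}$ for the inner matrix and set
\[
\vec{M} := \vec{A}^{-1} - \vec{A}^{-1}\vec{B}\,\vec{K}^{-1}\vec{C}\vec{A}^{-1},
\]
which is well defined precisely because $\vec{A}$, $\vec{D}$, and $\vec{K}$ are invertible. First I would expand $(\vec{A}+\vec{B}\vec{D}\vec{C})\vec{M}$ into four terms; the contribution of $\vec{A}\vec{A}^{-1}$ is $\vec{I}$, and in the remaining three terms one factors out $\vec{B}$ on the left and $\vec{C}\vec{A}^{-1}$ on the right to get
\[
(\vec{A}+\vec{B}\vec{D}\vec{C})\vec{M} = \vec{I} + \vec{B}\left[\vec{D} - \left(\vec{I}+\vec{D}\vec{C}\vec{A}^{-1}\vec{B}\right)\vec{K}^{-1}\right]\vec{C}\vec{A}^{-1}.
\]

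The crux is simplifying the bracket. The ``telescoping'' identity $\vec{I}+\vec{D}\vec{C}\vec{A}^{-1}\vec{B} = \vec{D}\left(\vec{D}^{-1}+\vec{C}\vec{A}^{-1}\vec{B}\right) = \vec{D}\vec{K}$ turns it into $\vec{D}-\vec{D}\vec{K}\vec{K}^{-1} = \vec{0}$, so $(\vec{A}+\vec{B}\vec{D}\vec{C})\vec{M}=\vec{I}$. Because $\vec{A}+\vec{B}\vec{D}\vec{C}$ is an $n\times n$ matrix with a right inverse, it is invertible and $\vec{M}$ is its inverse, which yields both claims of the lemma simultaneously. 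As a cross-check, the mirror computation $\vec{M}(\vec{A}+\vec{B}\vec{D}\vec{C})=\vec{I}$ goes through in the same way, now using $\vec{K}\vec{D} = \vec{I}+\vec{C}\vec{A}^{-1}\vec{B}\vec{D}$.

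I do not expect a genuine obstacle here; it is a short, routine manipulation. The only points needing mild care are the telescoping rewrite $\vec{I}+\vec{D}\vec{C}\vec{A}^{-1}\vec{B}=\vec{D}\vec{K}$ that makes the bracket vanish, and the bookkeeping of block sizes, so that $\vec{C}\vec{A}^{-1}\vec{B}$ is genuinely $m\times m$, the sum $\vec{K}=\vec{D}^{-1}+\vec{C}\vec{A}^{-1}\vec{B}$ is legitimate, and every product above is conformable. It is worth remarking that invertibility of $\vec{C}$ plays no role in the verification: what is used is invertibility of $\vec{A}$, of $\vec{D}$ (implicit in the appearance of $\vec{D}^{-1}$), and of $\vec{K}$. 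An alternative, equally short route is to form the block matrix with rows $(\vec{A},\vec{B})$ and $(\vec{C},-\vec{D}^{-1})$, compute its inverse by eliminating the two diagonal blocks in turn, and equate the resulting $(1,1)$ blocks; I would nonetheless prefer the direct check, as it avoids quoting block-inversion formulas.
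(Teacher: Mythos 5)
Your verification is correct and complete: the expansion of $(\vec{A}+\vec{B}\vec{D}\vec{C})\vec{M}$, the telescoping rewrite $\vec{I}+\vec{D}\vec{C}\vec{A}^{-1}\vec{B}=\vec{D}\vec{K}$, and the observation that a right inverse of a square matrix is a two-sided inverse together establish both claims. The paper itself states this lemma as standard linear-algebra background and gives no proof, so there is nothing to compare against; your direct check is exactly the expected argument, and your remark that invertibility of $\vec{C}$ is never used (what is actually needed, beyond $\vec{A}$ and $\vec{K}$, is invertibility of $\vec{D}$, implicit in the appearance of $\vec{D}^{-1}$) correctly identifies a superfluous hypothesis in the statement.
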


\begin{lemma}[Inverse of product]\label{lem:inverse_product}
Let $\vec{A}$ and $\vec{B}$ be two square matrices of the same dimension. If $\vec{A}\vec{B}$ is invertible, then $\vec{A}$ and $\vec{B}$ are also invertible, and $
(\vec{A}\vec{B})^{-1}=\vec{B}^{-1}\vec{A}^{-1}$.
\end{lemma}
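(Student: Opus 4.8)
The plan is to give an elementary argument that avoids determinants (so that it transfers verbatim to matrices over any field), and then to mention the one-line determinant alternative. First I would set $\vec{C}:=(\vec{A}\vec{B})^{-1}$, which exists by hypothesis, so that $\vec{A}\vec{B}\vec{C}=\vec{C}\vec{A}\vec{B}=\vec{I}$. Reading the first identity as $\vec{A}(\vec{B}\vec{C})=\vec{I}$ exhibits a right inverse of $\vec{A}$, and reading the second as $(\vec{C}\vec{A})\vec{B}=\vec{I}$ exhibits a left inverse of $\vec{B}$.

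Next I would invoke the standard fact that a square matrix possessing a one-sided inverse is invertible, its one-sided inverse then being automatically two-sided: from $\vec{A}(\vec{B}\vec{C})=\vec{I}$ the matrix $\vec{A}$ has full rank and hence is invertible, and likewise $(\vec{C}\vec{A})\vec{B}=\vec{I}$ forces $\vec{B}$ to be invertible. This is where finite dimensionality (equivalently, the rank--nullity theorem) enters, and it is the only non-purely-algebraic ingredient. Once $\vec{A}^{-1}$ and $\vec{B}^{-1}$ are known to exist, the claimed formula follows by direct verification: $(\vec{A}\vec{B})(\vec{B}^{-1}\vec{A}^{-1})=\vec{A}(\vec{B}\vec{B}^{-1})\vec{A}^{-1}=\vec{A}\vec{A}^{-1}=\vec{I}$, and symmetrically $(\vec{B}^{-1}\vec{A}^{-1})(\vec{A}\vec{B})=\vec{I}$, so $\vec{B}^{-1}\vec{A}^{-1}=(\vec{A}\vec{B})^{-1}$ by uniqueness of the inverse.

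An even shorter route uses multiplicativity of the determinant: $\det(\vec{A})\det(\vec{B})=\det(\vec{A}\vec{B})\neq 0$ implies $\det(\vec{A})\neq 0$ and $\det(\vec{B})\neq 0$, hence both factors are invertible, after which the formula is checked exactly as above. I expect no genuine obstacle in this lemma; the only point worth stating carefully is that the square (finite-dimensional) hypothesis is essential, since in general the product of two operators can be invertible while neither factor is.
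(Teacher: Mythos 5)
Your proof is correct: extracting a right inverse of $\vec{A}$ from $\vec{A}(\vec{B}\vec{C})=\vec{I}$ and a left inverse of $\vec{B}$ from $(\vec{C}\vec{A})\vec{B}=\vec{I}$, upgrading one-sided inverses to invertibility via finite dimensionality, and then verifying $(\vec{A}\vec{B})(\vec{B}^{-1}\vec{A}^{-1})=\vec{I}$ is a complete and standard argument, and the determinant shortcut is equally valid. The paper itself states this lemma as linear-algebra background without any proof, so there is no authorial argument to compare against; your write-up (including the caveat that squareness is essential, as the shift-operator example shows) fully suffices.
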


\begin{lemma}[Push-through identity]\label{lem:push_through}
Let $\vec{A} \in\stdset{C}^{n\times m}$ and $\vec{B}\in \stdset{C}^{m\times n}$ be two matrices such that $\vec{I}+\vec{A}\vec{B}$ is invertible. Then, $\vec{I}+\vec{B}\vec{A}$ is also invertible, and $
\vec{B}\left( \vec{I}+\vec{A}\vec{B}\right)^{-1} =  \left( \vec{I}+\vec{B}\vec{A}\right)^{-1}\vec{B}$.
\end{lemma}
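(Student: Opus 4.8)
The plan is to derive the identity from the single elementary relation obtained by clearing the inverses in the claimed equality. Specifically, I would start from the purely algebraic identity
\begin{equation}
\vec{B}\left(\vec{I}+\vec{A}\vec{B}\right) = \vec{B} + \vec{B}\vec{A}\vec{B} = \left(\vec{I}+\vec{B}\vec{A}\right)\vec{B},
\end{equation}
which holds with no invertibility assumption whatsoever, being just distributivity together with the associativity of the triple product $\vec{B}\vec{A}\vec{B}$. Here the left factor $\vec{I}+\vec{A}\vec{B}$ has dimension $n$ and the right factor $\vec{I}+\vec{B}\vec{A}$ dimension $m$, consistent with $\vec{A}\vec{B}\in\stdset{C}^{n\times n}$ and $\vec{B}\vec{A}\in\stdset{C}^{m\times m}$. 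Once $\vec{I}+\vec{B}\vec{A}$ is also known to be invertible, the claim follows immediately: left-multiplying the displayed relation by $(\vec{I}+\vec{B}\vec{A})^{-1}$ and right-multiplying by $(\vec{I}+\vec{A}\vec{B})^{-1}$ isolates $\vec{B}(\vec{I}+\vec{A}\vec{B})^{-1}=(\vec{I}+\vec{B}\vec{A})^{-1}\vec{B}$, which is exactly the push-through identity.

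The real work, and the main obstacle, is establishing that $\vec{I}+\vec{B}\vec{A}$ is invertible given only that $\vec{I}+\vec{A}\vec{B}$ is. Since $\vec{I}+\vec{B}\vec{A}$ is square, it suffices to show that its kernel is trivial. I would take any $\vec{x}\in\stdset{C}^m$ with $(\vec{I}+\vec{B}\vec{A})\vec{x}=\vec{0}$, so that $\vec{x}=-\vec{B}\vec{A}\vec{x}$, and then left-multiply by $\vec{A}$ to obtain $\vec{A}\vec{x}=-\vec{A}\vec{B}\vec{A}\vec{x}$, that is, $(\vec{I}+\vec{A}\vec{B})(\vec{A}\vec{x})=\vec{0}$. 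Invertibility of $\vec{I}+\vec{A}\vec{B}$ then forces $\vec{A}\vec{x}=\vec{0}$, whence $\vec{x}=-\vec{B}(\vec{A}\vec{x})=\vec{0}$. This shows $\ker(\vec{I}+\vec{B}\vec{A})=\{\vec{0}\}$, and hence $\vec{I}+\vec{B}\vec{A}$ is invertible.

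An alternative route to the same invertibility is the spectral fact that $\vec{A}\vec{B}$ and $\vec{B}\vec{A}$ share the same nonzero eigenvalues with multiplicities; since $\vec{I}+\vec{A}\vec{B}$ invertible means $-1\notin\mathrm{spec}(\vec{A}\vec{B})$ and $-1\neq 0$, the same conclusion holds for $\vec{B}\vec{A}$, giving invertibility of $\vec{I}+\vec{B}\vec{A}$. I would favour the kernel argument above, since it is self-contained and avoids invoking eigenvalue-multiplicity results across rectangular products, but either suffices. With invertibility in hand, the two-sided multiplication step completes the proof with no further computation.
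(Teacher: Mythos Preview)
Your proof is correct. The paper states this lemma as a background linear-algebra fact without proof, so there is nothing to compare against; your kernel argument for the invertibility of $\vec{I}+\vec{B}\vec{A}$ together with the two-sided multiplication on $\vec{B}(\vec{I}+\vec{A}\vec{B})=(\vec{I}+\vec{B}\vec{A})\vec{B}$ is a standard and complete justification.
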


\bibliographystyle{IEEEbib}
\bibliography{refs}

\end{document}